\documentclass[showpacs,amsmath,amssymb,twocolumn,aps,pra,superscriptaddress,notitlepage,10pt]{revtex4-2}
\usepackage{amsfonts,amssymb,amssymb}
\usepackage{amsthm}
\usepackage{amsmath}
\usepackage{algorithm}
\usepackage{algorithmicx}
\usepackage{algpseudocode}
\usepackage{enumerate}
\usepackage{xcolor}
\usepackage{cancel}
\usepackage[colorlinks, linkcolor=red, anchorcolor=blue, citecolor=purple, hypertexnames=false]{hyperref}
\usepackage[capitalise]{cleveref}
\usepackage{graphicx}
\usepackage[caption=false, justification=centerlast, label font=bf]{subfig}
\usepackage{physics}
\usepackage[most]{tcolorbox}
\usepackage[]{qcircuit}
\usepackage{appendix}
\usepackage{makecell}
\usepackage{tikz}
\usetikzlibrary{shapes.geometric, arrows.meta, positioning}
\usetikzlibrary{calc}
\usepackage{mathrsfs}
\usepackage{dsfont}
\usepackage{MnSymbol}
\newcommand{\bE}{\mathbb E}
\newcommand{\cP}{\mathcal P}
\newcommand{\cO}{\mathcal O}
\newcommand{\cM}{\mathcal E_{GC}}
\newcommand{\cE}{\mathcal E_{LU}}

\newcommand{\sE}{\mathscr E}
\newcommand{\sU}{\mathscr U}
\newcommand{\kp}{\ket{\psi}}
\newcommand{\supp}{\mathrm{supp}}
\newtheorem{theorem}{Theorem}
\newtheorem{lemma}{Lemma}

\newtheorem{definition}{Definition}

\newtheorem{fact}{Fact}

\begin{document}
\title{Taming Trotter Errors with Quantum Resources}
\date{\today}
\author{Xiangran Zhang}
\affiliation{
QICI Quantum Information and Computation Initiative, School of Computing and Data Science,
The University of Hong Kong, Pokfulam Road, Hong Kong}
\author{Jue Xu}
\affiliation{
QICI Quantum Information and Computation Initiative, School of Computing and Data Science,
The University of Hong Kong, Pokfulam Road, Hong Kong}
\affiliation{
Foresight Quantum, Shanghai, China}
\author{Qi Zhao}
\email[]{zhaoqi@cs.hku.hk}
\affiliation{
QICI Quantum Information and Computation Initiative, School of Computing and Data Science,
The University of Hong Kong, Pokfulam Road, Hong Kong}
\author{You Zhou}
\email[]{you\_zhou@fudan.edu.cn}
\affiliation{Key Laboratory for Information Science of Electromagnetic Waves (Ministry of Education), Fudan University, Shanghai 200433, China}

\begin{abstract}
Quantum simulation is a cornerstone application of quantum computing, yet how fundamental quantum resources—entanglement and non-stabilizerness (``magic")—shape simulation fidelity remains an open question.
In this work, we establish a rigorous connection between these resources and the statistical behavior of algorithmic errors arising in Hamiltonian simulation based on the Trotter-Suzuki formula. By analyzing ensembles of states with fixed entanglement entropy or magic, we make two key discoveries: First, the variance of the Trotter error decreases with increasing entanglement entropy, indicating a stronger concentration of error for entangled states. Moreover, we find that the kurtosis of the error exhibits a negative linear dependence on magic, implying that states with high magic possess lighter-tailed error distributions and thus a reduced probability of large deviations.
These findings reveal a subtle phenomenon: quantum resources that obstruct classical emulation may, paradoxically, enhance the intrinsic robustness of quantum simulation, highlighting a constructive interplay between complexity and stability in quantum computation.
\end{abstract}

\maketitle
Simulating physical systems  \cite{feynmanSimulatingPhysicsComputers1982,feynmanQuantumMechanicalComputers1985} stands as a foundational task for which quantum computers promise a transformative advantage, from quantum materials and chemical processes to high-energy physics and cosmology, offering unprecedented precision and insight \cite{georgescuQuantumSimulation2014,altmanQuantumSimulatorsArchitectures2021}.
The foundational promise of quantum simulation stems from its capacity to efficiently simulate quantum systems that are classically intractable \cite{feynmanSimulatingPhysicsComputers1982,georgescuQuantumSimulation2014}. Such classical intractability originates from the exponential scaling of computational resources required to represent essential quantum features—particularly entanglement \cite{SCHOLLWOCK201196,ciracMatrixProductStates2021,PhysRevLett.91.147902,tensornetworkandentanglement2019} and non-stabilizerness, or called magic \cite{Kitaev2004computation,Veitch2014MagicResource,guPseudomagicQuantumStates2024,guMagicInducedComputationalSeparation2025}
On the entanglement side, the bond dimension in tensor-network-like classcial simulation methods grows exponentially with the entanglement entropy of the target state \cite{PhysRevLett.91.147902,ciracMatrixProductStates2021,tensornetworkandentanglement2019}. On the magic side, while Clifford circuits can produce volume law of entanglement, and are also efficiently simulable classically due to the Gottesman-Knill theorem, the introduction of non-Clifford gates—which generate magic renders the simulation exponentially hard \cite{gottesman1998heisenbergrepresentationquantumcomputers,PhysRevA.70.052328,bravyiSimulationQuantumCircuits2019,PhysRevLett.118.090501}.
Despite the well-established role of these resources as bottlenecks for classical simulability, their influence on the performance and error characteristics of quantum simulation algorithms themselves remains largely unexplored.

\begin{figure}[tb!]
    \centering
    \includegraphics[width=\linewidth,height=.56\linewidth]{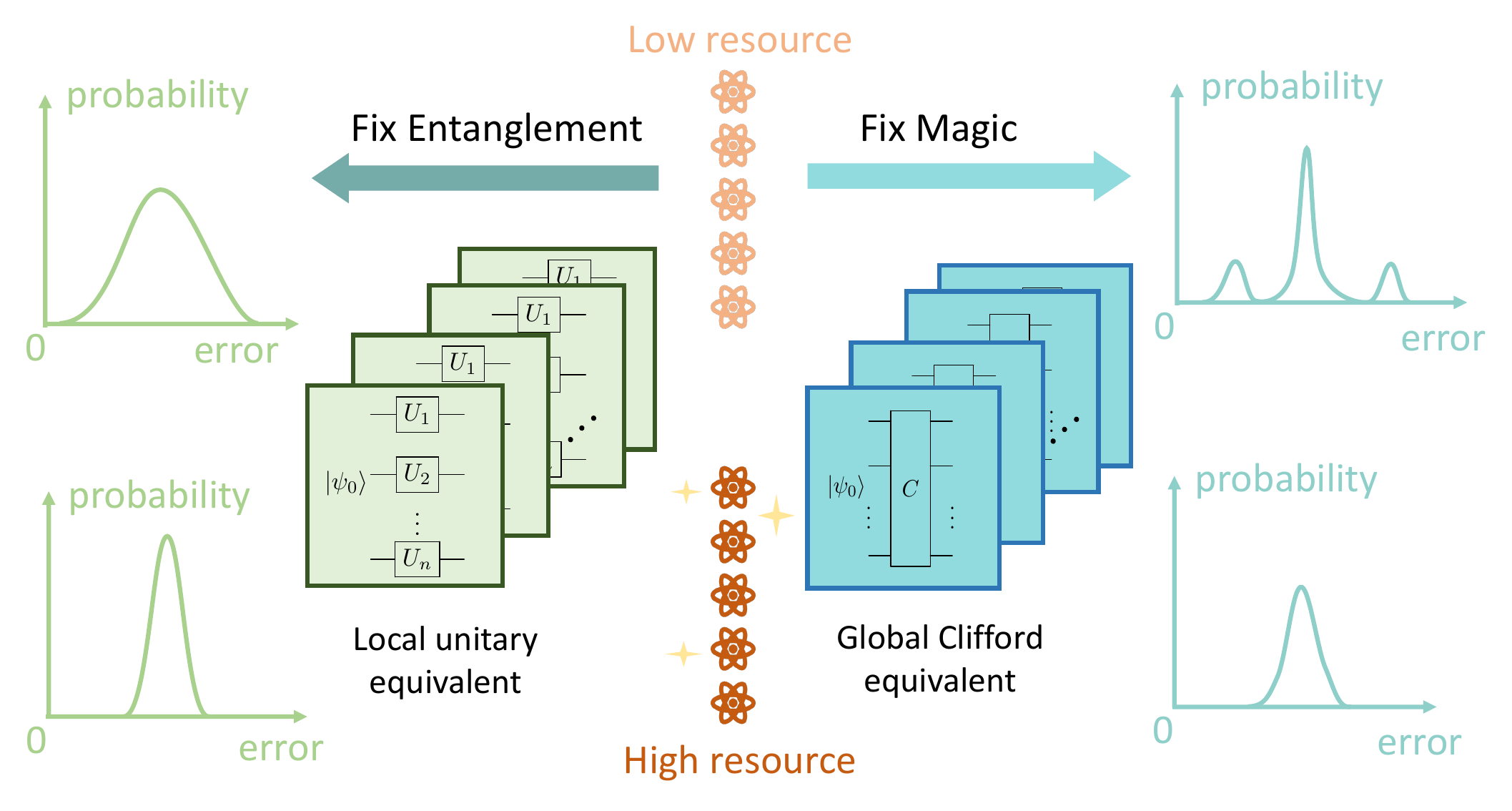}
    \caption{Entanglement reduces error size, while magic suppresses outliers. We use local-unitary-equivalent states (left) and global-Clifford-equivalent states (right) to fix entanglement and magic, respectively. Entanglement reduces the variance of error and lowers its upper bound. Meanwhile, high magic suppresses the tail of the error distribution, reducing the probability of errors that deviate significantly from the mean value.}
    \label{fig:intro}
\end{figure}

Recent research has increasingly recognized the crucial role of initial states in determining quantum simulation errors, such as low-energy states \cite{csahinouglu2021hamiltonian,Mizuta2025Trotterization,Gong2024complexityofdigital,hejazi2024betterboundslowenergyproduct,yi_spectral_2022}, random input states \cite{zhao_hamiltonian_2022,chen2024average}, and so on \cite{PhysRevA.107.L040201,PhysRevResearch.6.043155,becker2025convergence,Mobus2024strongboundstrotter,fang2025trottererrormanybodyquantum,hatomura2023firstordertrotterdecompositiondynamicalinvariant,Hatomura2022StateDependentPRA,an2021time,su2021nearly,xuExponentiallyDecayingQuantum2025}, moving beyond analyses based solely on the Hamiltonian.
Previous studies, primarily focused on entanglement  \cite{zhao2024entanglementacceleratesquantumsimulation}, further revealed that quantum resources yield little average advantage—random input states perform nearly as well as highly entangled ones in Trotterized evolution \cite{zhao2024entanglementacceleratesquantumsimulation,zhao_hamiltonian_2022,chen2024average}.
This raises a natural and fundamental question: can more general forms of quantum resource influence how simulation errors fluctuate, rather than merely their mean? Indeed, the statistical landscape of these errors—capturing their variance, concentration, and rare-event tails—remains virtually unexplored. Uncovering these effects could open new avenues toward more efficient and noise-resilient Hamiltonian simulation.

In this Letter, we establish a rigorous connection between key quantum resources—entanglement and magic—and the statistical behavior of algorithmic errors in digital quantum simulation, as illustrated in Fig.~\ref{fig:intro}.
These resources, which render quantum systems hard to emulate classically, are shown here to enhance the stability of quantum simulation itself: by analyzing ensembles of states with fixed entanglement (local-unitary-equivalent) and fixed magic (global-Clifford-equivalent), we uncover two clear trends—greater entanglement suppresses the variance of Trotter error, while higher magic reduces its kurtosis, yielding lighter-tailed distributions and fewer extreme deviations.
These results reveal an intrinsic link between quantum resources and algorithmic stability, establishing a resource-aware framework for error suppression and offering concrete guidance toward realizing practical quantum advantage.

\vspace{0.2cm}
\begin{figure}[t]
    \centering
    \includegraphics[width=\linewidth]{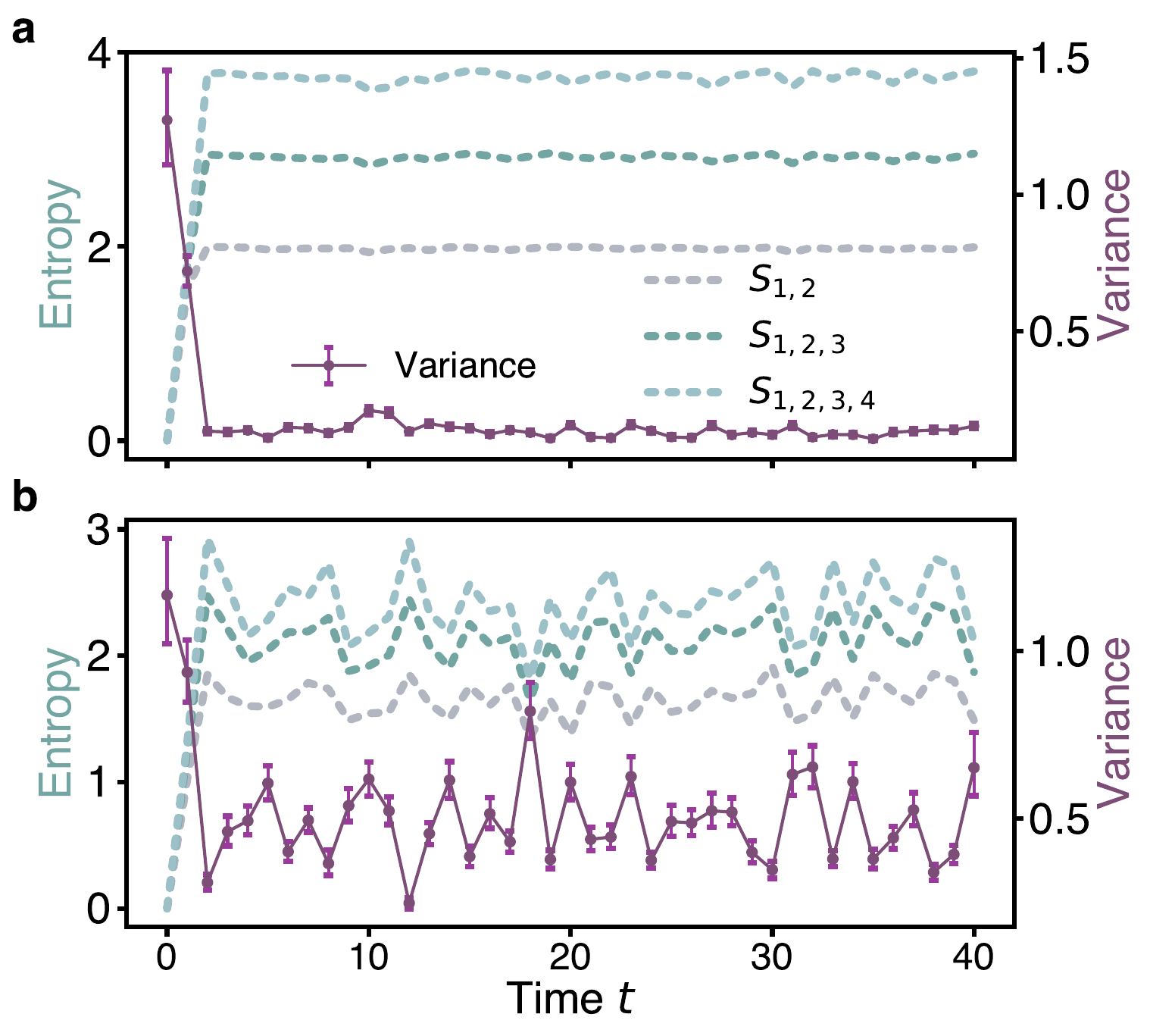}
    \caption{Entanglement reduces the variance of simulation errors. We simulate a 10-qubit system under the QIMF model with two parameter sets: one, $(h_x,h_y,J)=(0.8090,0.9045,1)$, causing rapid entanglement growth (a); and another, $(h_x,h_y,J)=(0,0.9045,1)$, with slow growth (b) \cite{kim2014testing}. The entanglement entropy of the state $\ket{\psi_t}=e^{\-iHt}\ket{\psi_0}$ (of the reduced density matrices on the first few qubits) is shown as the dashed lines (right Y-axis). For each time $t$, we take the state $\ket{\psi_t}$ and calculate the variance of simulation error using the 1st-oder Trotter method over the ensemble $\cE(\psi_t)$ in Eq.~\eqref{equ: E_LU}, shown as the solid line (left Y-axis).}
    \label{fig: var_ent}
\end{figure}

\noindent\textbf{Hamiltonian simulation with Trotter formula---}
The Trotter-Suzuki decomposition, also known as the product formula \cite{trotter1959product,suzuki1991general,doi:10.1126/science.273.5278.1073}, plays a fundamental role in quantum simulation and quantum computing \cite{childs2018toward}. It provides a method to approximate the evolution of a Hamiltonian composed of non-commuting operators by a product of their individual exponentials.

Given a Hamiltonian $H$ and a short evolution time $\delta t$, we denote the ideal evolution operator as $\sU_0(\delta t)=e^{-iH\delta t}$. The first-order product formula (PF1) algorithm for a Hamiltonian with decomposition $H=\sum_{l=1}^L H_l$ applies the unitary operation $\sU_1(\delta t):=e^{-iH_1\delta t}e^{-iH_2\delta t}\cdots e^{-iH_L\delta t}=\overrightarrow\prod_le^{-iH_l\delta t}$. Here, the right arrow indicates the product is in the order of increasing indices. This provides an effective simulation method when the terms $e^{-iH_l\delta t}$ are feasible to implement. The second-order product formulas (PF2) can be obtained by combining evolutions in both increasing and decreasing orders of indices, with $\sU_2(\delta t)=\overrightarrow\prod_le^{-iH_l\delta t/2}\overleftarrow\prod_le^{-iH_l\delta t/2}$. More generally, one can construct $p$th-order product formulas $\sU_p(\delta t)$, for even $p$, recursively from the second-order formula \cite{yoshida_construction_1990,suzuki1991general,Morales_2025}.
To evaluate the effectiveness of a $p$th-order product formula (PF$p$) simulation, a natural approach is to consider the spectral norm of the difference between two operators $\sU_0(\delta t)$ and $\sU_p(\delta t)$, i.e.  $\|\sU_0(\delta t)-\sU_p(\delta t)\|$. It is important to note that here $\|\sU_0(\delta t)-\sU_p(\delta t)\|=\max _{\ket{\psi}}\|(\sU_0(\delta t)-\sU_p(\delta t)\ket{\psi}\|$ represents the worst-case scenario for all possible state $\ket{\psi}$. However, for states with some prior knowledge (like energy \cite{Gong2024complexityofdigital,csahinouglu2021hamiltonian,Mizuta2025Trotterization} and entanglement \cite{zhao2024entanglementacceleratesquantumsimulation}), the Trotter error could be smaller and does not necessarily reach the worst case.
Therefore, it is more reasonable to consider the state-specific error $\|(\sU_0(\delta t)-\sU_p(\delta t)\ket{\psi}\|$ for given initial states $\ket{\psi}$.

The dominant term of the Trotter error can be obtained by the decomposition according to the commutator scaling method \cite{childs2021theory}, i.e.  $\sU_0(\delta t)-\sU_p(\delta t)=E\delta t^{p+1}+\sE_{re}$, where $\|\sE_{re}\|=\cO(\delta t^{p+2})$. Here $E=\sum_j E_j$, is the total leading-order error with local terms $E_j$ for local Hamiltonians. Consider a Hamiltonian $H=A+B$ with PF1, the dominant term $E$ is determined by the commutators of $A$ and $B$, say, $E=[A,B]$. For example, for the one-dimensional quantum Ising spin model with mixed fields (QIMF) on $N$-qubit, the Hamiltonian shows $H=h_x\sum_{j=1}^N X_j+h_y\sum_{j=1}^NY_j+J\sum_{j=1}^{N-1}X_jX_{j+1}$, with $X_j,Y_j$ denotes the Pauli operators on qubit $j$.
Setting $A$ consists of all the $X$ terms and $B$ consists of all the $Y$ terms, $\sU_0(\delta t)$ could be simulated by $\sU_1(\delta t)=e^{-iA\delta t}e^{-iB\delta t}$. Then the leading term of its error will be local, i.e.  of the form $E=\sum E_j$ with each error term
\begin{equation}
    E_j=2ih_xh_yZ_j+2iJh_y(Z_jX_{j+1}+X_jZ_{j+1})\label{equ: Ej}
\end{equation}
acting on only two (adjacent) qubits. Similarly, in the vast majority of cases, the error terms in $E$ exhibit locality.

To analyze state-specific errors, the following discussion focuses on the squared magnitude of the leading error operator \( E \) for a given initial state \( \ket{\psi} \):
\begin{equation}
    s_E(\psi):=\|E\ket{\psi}\|^2=\bra{\psi}E^\dagger E\ket{\psi}.
    \label{equ:definition}
\end{equation}
As the true error satisfies $\norm{(\sU_0(\delta t)-\sU_p(\delta t))\ket{\psi}}^2=s_E(\psi)\delta t^{2p+2}+\mathcal O(\delta t^{2p+4})$, the quantity $s_E(\psi)$ serves as a convenient measure for analytical and numerical treatment when $\delta t$ is small, and more importantly, reflects how the quantum resources encoded in \( \ket{\psi} \) influence the simulation error.

Before showing the main results, we first define some notations. We focus on $N$-qubit system, with $d=2^N$ representing the total dimension. $\cP_N$ represents the
$N$-qubit Pauli group, quotient by the global phase, i.e.  $\cP_N=\{\bigotimes_{i=1}^N P_i, P_i\in\{I_2,X,Y,Z\}\}$. The support of an operator $A$, denoted by $\supp(A)$, is the set of qubits on which it acts nontrivially.
$\|A\|_F :=\sqrt{\Tr(AA^\dagger)/d}$ denotes the normalized Frobenius norm of the operator $A$ with dimension $d$.

\vspace{0.2cm}
\noindent\textbf{Error Variance and Quantum Entanglement---}
We first discuss the impact of entanglement on quantum simulation, and are concerned with the statistical properties of the Trotter error exhibited by all states possessing a fixed amount of entanglement.

As such, we define the random state ensemble

\begin{equation}
    \cE(\psi_0):=\left\{\left(\mathop{\bigotimes}_{i=1}^NU_i\right) \ket{\psi_0}\right\}\label{equ: E_LU}
\end{equation}
where each $U_i$ is a single-qubit Haar-random unitary \cite{Haarmeasure,Mele_2024,harrow2013church} (actually $U_i$ needs only to be a unitary $2$-design in our proof). It is clear that all states in $\cE(\psi_0)$ are local-unitary-equivalent to $\ket{\psi_0}$, thus own the same entanglement. In this way, we can bound the variance of the simulation error in the following theorem.

\begin{theorem}\label{theo: variance}
    For an initial state $\ket{\psi_0}$ and the leading error $E=\sum E_j$ of Hamitonian simulation algorithm, the variance of the state error term $s_E(\psi)$ given in Eq.~\eqref{equ:definition}, evaluated under the distribution $\cE(\psi_0)$ defined in Eq.~\eqref{equ: E_LU} is bounded by the entanglement entropy of subsystems,
    \begin{align}\label{eq:ajj}
        \mathop{\mathrm{Var}}_{\ket{\psi}\sim\cE(\psi_0)}[s_E(\psi)] \leq
        \sum_{jj'}a_{jj'}\sqrt{2[\log(d_{jj'})-S(\rho_{jj'})]},
    \end{align}
    where $S(\rho)=-\Tr[\rho\log(\rho)]$, and the reduced density matrix $\rho_{jj'}=\Tr_{[N]\setminus\supp(jj')}(\ketbra{\psi_0}{\psi_0})$ and $d_{jj'}=2^{\abs{\supp(jj')}}$, with $\supp(jj'):=\supp(E_{j'}^\dagger E_j)$ for short.
\end{theorem}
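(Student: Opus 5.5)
Write $s_E(\psi)=\sum_{jj'}\bra{\psi}E_{j'}^\dagger E_j\ket{\psi}$ and let $O_{jj'}:=E_{j'}^\dagger E_j$. Each $O_{jj'}$ is supported on $\supp(jj')$, so for $\ket{\psi}=U\ket{\psi_0}$ with $U=\bigotimes_i U_i$ we get
\begin{equation*}
f_{jj'}(U):=\bra{\psi}O_{jj'}\ket{\psi}=\Tr[\rho_{jj'}U_{jj'}^\dagger O_{jj'}U_{jj'}],
\end{equation*}
where $U_{jj'}=\bigotimes_{i\in\supp(jj')}U_i$. Since the variance is a quadratic form, $\mathrm{Var}[\sum X_k]=\sum_{k,k'}\mathrm{Cov}[X_k,X_{k'}]\le \sum_{k,k'}\sqrt{\mathrm{Var}X_k\,\mathrm{Var}X_{k'}}$, and the sum of standard deviations controls the standard deviation. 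I would aim to bound each $f_{jj'}$ separately, by a quantity linear in $\sqrt{2[\log d_{jj'}-S(\rho_{jj'})]}$, and then assemble the pieces. The constants $a_{jj'}$ are defined to absorb the operator norms of the $O$'s and the cross-term bookkeeping.

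\textbf{Per-term bound.} I use a crude but sufficient estimate: $|f_{jj'}(U)-\mathbb{E}f_{jj'}|$ is controlled pointwise. The 1-design property of single-qubit unitaries gives
\begin{equation*}
\mathbb{E}_U[U_{jj'}^\dagger O_{jj'}U_{jj'}] = \frac{\Tr O_{jj'}}{d_{jj'}}\,I,
\end{equation*}
so
\begin{equation*}
f_{jj'}(U)-\mathbb{E}f_{jj'}=\Tr\!\left[\left(\rho_{jj'}-\frac{I}{d_{jj'}}\right)U_{jj'}^\dagger O_{jj'}U_{jj'}\right].
\end{equation*}
Hölder then gives $|f_{jj'}-\mathbb{E}f_{jj'}|\le \|O_{jj'}\|_\infty\,\|\rho_{jj'}-I/d_{jj'}\|_1$. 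Pinsker's inequality gives
\begin{equation*}
\|\rho-I/d\|_1\le\sqrt{2D(\rho\|I/d)}=\sqrt{2[\log d-S(\rho)]}.
\end{equation*}
This yields exactly the square-root entropy deficit in the theorem.

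\textbf{Assembly.} Write the centered variable as $s_E-\mathbb{E}s_E=\sum_{jj'}g_{jj'}$ with $|g_{jj'}|\le \|O_{jj'}\|\,\epsilon_{jj'}$, where $\epsilon_{jj'}:=\sqrt{2[\log d_{jj'}-S(\rho_{jj'})]}$. Then
\begin{equation*}
\mathrm{Var}=\sum_{jj'}\mathbb{E}\Big[g_{jj'}\sum_{kk'}g_{kk'}\Big]\le\sum_{jj'}\|O_{jj'}\|\,\epsilon_{jj'}\,\mathbb{E}|s_E-\mathbb{E}s_E|.
\end{equation*}
Bounding $|s_E-\mathbb{E}s_E|\le 2\|E\|^2$ (or a sharper local-count bound) then gives the form $\sum a_{jj'}\epsilon_{jj'}$ with $a_{jj'}=2\|E_{j'}^\dagger E_j\|\,\|E\|^2$ or similar. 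Here the 2-design property is needed only if one wants to replace the crude $\mathbb{E}|s_E-\mathbb{E}s_E|$ by $\sqrt{\mathrm{Var}}$ computed via second moments, which would give sharper $a_{jj'}$. In that refinement one expands $\mathbb{E}[g_{jj'}g_{kk'}]$ with the local Weingarten formula, and terms with disjoint supports factor and vanish after centering.

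\textbf{Main obstacle.} The delicate step is the choice of $a_{jj'}$. The bound should not blow up with $N$, and covariances between overlapping terms must be handled so that each $\epsilon_{jj'}$ appears linearly. My first attempt is the pointwise Hölder+Pinsker argument above. Only if the resulting constants are too loose would I switch to the explicit 2-design second-moment computation.
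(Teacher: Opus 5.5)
Your argument is correct, and it reaches the statement by a genuinely different route from the paper's. The paper computes the variance exactly from the local 2-design second-moment formula, $\mathrm{Var}=\sum_s b_s^2\Tr(\psi_0^{\otimes 2}\mathcal F_{\supp(s)}\otimes I)$, where the $b_s$ are the Pauli coefficients of $E^\dagger E$ and distinct strings are uncorrelated. It regroups the strings into the symmetrized traceless blocks $E_{jj'}$ and drops the maximally mixed contribution, which vanishes because $\Tr(2\mathbb F-\mathbb I)=0$. It then applies H\"older on two copies with $\|\mathcal T\|=1$, the doubling bound $\|\rho^{\otimes 2}-I/d^2\|_1\le 2\|\rho-I/d\|_1$, and Pinsker.

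You use only first moments. 1-design centering gives a pointwise H\"older--Pinsker bound on each $g_{jj'}$, and you then pull one factor out of $\mathbb E[(s_E-\mathbb E s_E)^2]$. The individual $g_{jj'}$ are complex when $E_{j'}^\dagger E_j$ is not Hermitian, but this is harmless because their sum is real.

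Your route has real advantages. It needs only a local 1-design (random local Paulis suffice) and no Weingarten calculus. Its intermediate estimate $|s_E(\psi)-\|E\|_F^2|\le\sum_{jj'}\|E_{j'}^\dagger E_j\|\,\epsilon_{jj'}$ holds for every state of the ensemble, which is stronger than a variance statement; squaring it even gives a variance bound quadratic in the $\epsilon_{jj'}$.

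What the paper's route buys is sharper coefficients. Its $a_{jj'}=\frac{2}{d}\Tr(\widetilde{E_{jj'}}\widetilde{E^\dagger E})$ pairs each Pauli coefficient of $E_{jj'}$ only with the matching coefficient of $E^\dagger E$, which is $O(1)$ per pair for local errors such as the QIMF one. Your $a_{jj'}=\|E_{j'}^\dagger E_j\|\,\|E\|^2$ carries a global factor $\|E\|^2$, typically of order $N^2$. (No factor $2$ is needed, since $s_E$ and its mean both lie in $[0,\|E\|^2]$.) A first-moment argument cannot see the decorrelation between distinct Pauli strings that produces the paper's cancellation.

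Since the theorem as stated only requires state-independent $a_{jj'}$, your proof establishes it, though not with the Supplemental Material's coefficients. One small correction to your side remark: replacing $\mathbb E|s_E-\mathbb E s_E|$ by $\sqrt{\mathrm{Var}}$ is free by Cauchy--Schwarz. Only the explicit evaluation of the variance needs the 2-design.
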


See Section \uppercase\expandafter{\romannumeral2} in Supplemental Material \cite{seesm} for the detailed proof.  Theorem \ref{theo: variance} shows that for a set of states with higher entanglement, the variance of the error should be smaller than that for states with lower entanglement, as numerically certified in FIG.~\ref{fig: var_ent} with QIMF models. In addition, the local unitary introduced in \cref{equ: E_LU} forms an 1-design, so $\bE_{\ket{\psi}\sim\cE(\psi_0)} s_E(\psi)=\|E||_F^2$ for any $\ket{\psi_0}$ \cite{zhao_hamiltonian_2022}, i.e., sharing the same mean value. This suggests that the Trotter errors of states with higher entanglement tend to cluster more closely around $\|E\|_F^2$. One can readily observe this by applying probabilistic inequalities such as Chebyshev's inequality, $\mathop{\mathrm{Pr}}_{\ket{\psi}\sim\cE(\psi_0)}\left[\left|s_E(\psi)-\|E\|_F^2\right|\ge k\sigma\right]\le\frac{1}{k^2}$, where $\sigma=\sqrt{\mathrm{Var}(s_E(\psi))}$ denotes the standard deviation.

\vspace{0.2cm}
\begin{figure}[t]
	\centering
    \includegraphics[width=\linewidth]{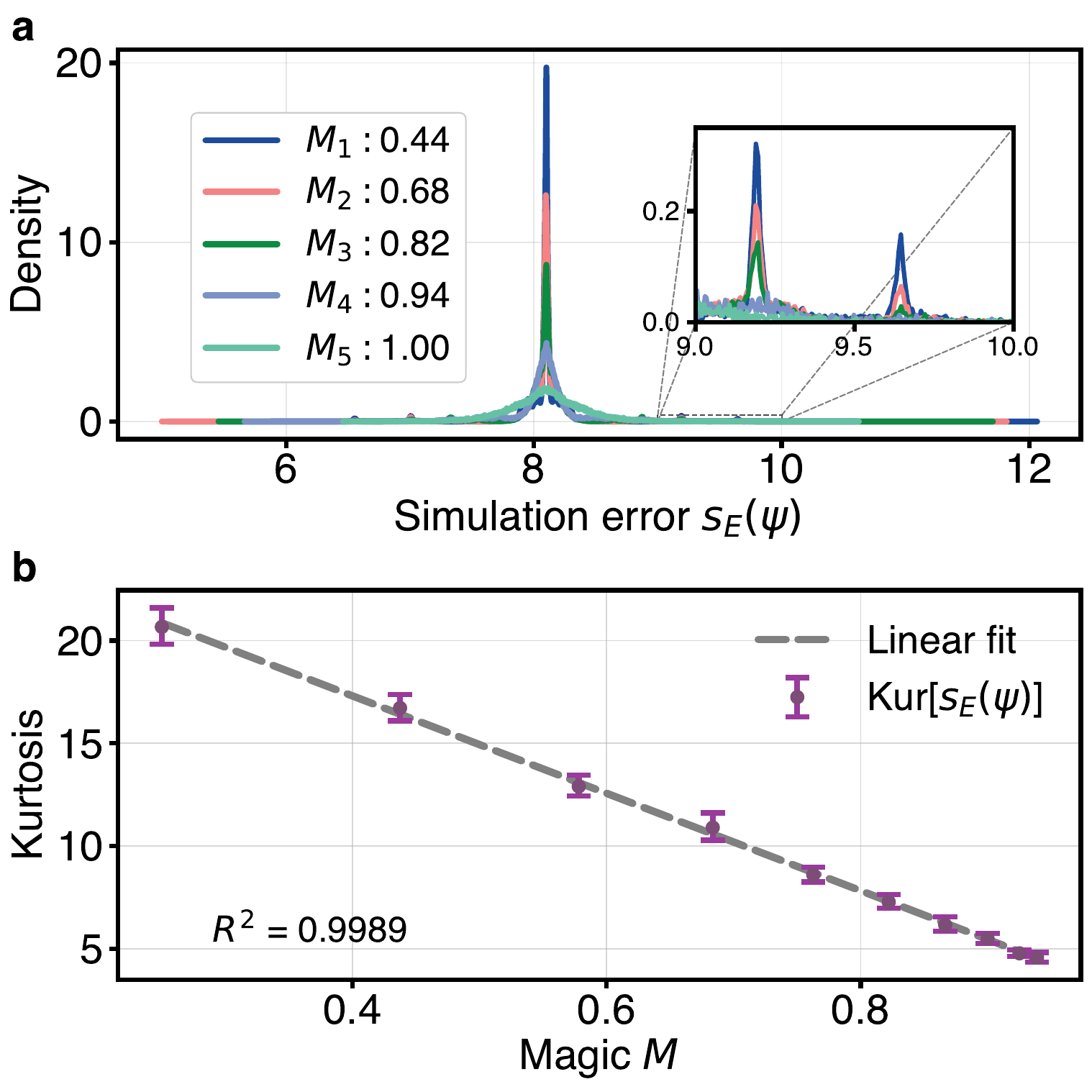}
    \caption{
    Magic reduces the kurtosis of simulation errors of QIMF with $(h_x,h_y,J)=(0.8090,0.9045,1)$.
    (a) Error distributions for the ensemble in Eq.~\eqref{eq:gClen} using five starting states $\ket{\psi_k}$ with different magic values $M_k$. The inset clearly shows a higher proportion of outliers in the low-magic ensembles.
    (b) The corresponding relationship between kurtosis and magic, exhibiting a clear negative linear correlation, as predicted by Eq.~\eqref{eq:MnegK}.
    }
    \label{fig: kur_mag}
\end{figure}

\noindent\textbf{Error Kurtosis and Quantum Magic---} In magic-state resource theory, Clifford operations and stabilizer states are considered ``free" as a direct consequence of the Gottesman-Knill theorem \cite{gottesman1998heisenbergrepresentationquantumcomputers,PhysRevA.70.052328}, which establishes their efficient classical simulability. In contrast, non-Clifford gates—though essential for achieving universality—cannot be efficiently simulated and are therefore considered resource-intensive. Furthermore, for some quantum error correction codes, while Clifford operations can be implemented fault-tolerantly \cite{PhysRevLett.104.030503}, non-Clifford gates pose significant challenges in fault-tolerant constructions \cite{Kitaev2004computation,road2017,PhysRevLett.102.110502,bravyiMagicstateDistillationLow2012,PhysRevLett.118.090501,niroulaPhaseTransitionMagic2024,willsConstantoverheadMagicState2025}. Additionally, non-stabilizerness renders quantum computation exceptionally difficult to verify \cite{Jose2024GainingConfidencePRR}. These fundamental distinctions justify the treatment of non-stabilizerness or called magic as a valuable resource and a meaningful metric in quantum information processing. Recently, magic has also emerged as a focal point of heightened research activity within realms such as many-body physics \cite{liuManyBodyQuantumMagic2022,Chen2024magicofquantum,PRXQuantum.4.040317,turkeshiPauliSpectrumNonstabilizerness2025} and quantum chaos \cite{leoneQuantumChaosQuantum2021,Passarelli2025chaosmagicin,PhysRevD.106.126009}.

There are various measures of non-stabilizerness,
such as $\alpha$-R\'enyi stabilizer entropies and linear $\alpha$-stabilizer entropies \cite{leoneStabilizerRenyiEntropy2022,Haug_2023,leone2024stabilizer,dowlingMagicResourcesHeisenberg2025,lamiNonstabilizernessPerfectPauli2023,tarabungaNonstabilizernessMatrixProduct2024,li2025invested}. Given a pure state $\ket{\psi}$, the quantity $d^{-1}|\bra{\psi}P\ket{\psi}|^2$ forms a probability distribution on the Pauli group $\cP_N$. Define the $\alpha$-stabilizer purity $P_\alpha:=\frac{1}{d}\sum_{P\in \cP_N} |\bra{\psi}P\ket{\psi}|^{2\alpha}$, and then linear $\alpha$-stabilizer entropy shows $M^{\mathrm{lin}}_\alpha(\psi):=1-P_\alpha$. In this Letter, we choose $M(\psi):=M_2^{\mathrm{lin}}(\psi)$ with $\alpha=2$ to quantify the magic of the state $\ket{\psi}$, which is a strong stabilizer monotone \cite{leone2024stabilizer} and also feasible to compute.

Now we investigate the impact of magic on quantum simulation.
By definition, one has the fact that $M(\ket{\psi})=M(C\ket{\psi})$ for all Clifford unitary $C$.
We thus choose the random ensemble $\cM(\psi_0)$ with each state inside obtained by applying a random Clifford gate $C\in\mathrm{Cl}(2^N )$ to a starting state $\ket{\psi_0}$,

\begin{equation}\label{eq:gClen}
    \cM(\psi_0):=\left\{C\ket{\psi_0}\right\}.
\end{equation}

The Clifford group forms a 3-design but not a 4-design \cite{webb2015clifford,zhu2016clifford,bittel2025completetheorycliffordcommutant}, so the first three moments of $s_E(\psi)$ equals the Haar random one, and thus will not exhibit any dependence on different initial state $\ket{\psi_0}$. As a result, it is natural to consider the fourth moment, and we investigate the kurtosis of the error term here. The definition of the kurtosis of $s_E(\psi)$ shows \cite{westfall2014kurtosis},
\begin{equation}\label{eq:kurtosisSE}
    \mathop{\mathrm{Kur}}_{\ket{\psi}\sim\cM(\psi_0)}(s_E(\psi))=\bE\left[\left(s_E(\psi)-\|E\|_F^2\right)^4\right]/\sigma^4,
\end{equation}
with $\sigma$ the standard deviation of $s_E(\psi)$. As the average of the fourth power of the pivot, i.e., $(X-EX)/\sigma$, kurtosis primarily derives its contributions from extreme values. Intuitively, higher kurtosis indicates a relatively greater proportion of outliers in the data.

By applying the properties of the 4th-moment function of Clifford group \cite{zhu2016clifford,bittel2025completetheorycliffordcommutant,Zhou2023performanceanalysis}, we can bound the kurtosis of the leading-order error term via the following Theorem \ref{theo: kurtosis}.

\begin{theorem}\label{theo: kurtosis}
    For an initial state $\ket{\psi_0}$ with magic $M(\psi_0)$ and leading error $E=\sum E_j$ of $p$th-order product formula, the kurtosis of the error term $s_E(\psi)$ given in Eq.~\eqref{eq:kurtosisSE} under the distribution $\cM(\psi_0)$ defined in Eq.~\eqref{eq:gClen} is linear with the magic $M(\psi_0)$. Specifically, the kurtosis satisfies
    \begin{equation}\label{eq:MnegK}
        \mathop{\mathrm{Kur}}_{\ket{\psi}\sim\cM(\psi_0)}\left[s_E(\psi)\right]=\alpha +\beta M(\psi_0),
    \end{equation}
    where $\alpha,\beta$ are some coefficients only related to $E$. If $E$ consists of at most $\mathrm{poly}(N)$ terms, the dominant term of $\beta$ is negative.
\end{theorem}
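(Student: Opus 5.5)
Write $O:=E^\dagger E$, so that $s_E(\psi)=\Tr(O\,C\psi_0C^\dagger)$. The plan is to compute the kurtosis directly from moments of the Clifford twirl. Because the Clifford group is a unitary 3-design, the mean $\|E\|_F^2$, the variance $\sigma^2$, and the third moment all equal their Haar values. Hence $\sigma^4$ depends only on $E$ (and $d$), and is given by the standard Weingarten formula
\[
\sigma^2=\frac{d\,\Tr(O^2)-\Tr(O)^2}{d^2(d+1)}.
\]
Expanding $(s_E-\|E\|_F^2)^4$ binomially, every term except $\bE[s_E^4]$ is therefore fixed by $E$ alone. The whole $\psi_0$-dependence sits in
\[
\bE_C\Tr\!\left[O^{\otimes4}(C\psi_0C^\dagger)^{\otimes 4}\right]=\Tr\!\left[O^{\otimes4}\,\Phi^{(4)}_{\mathrm{Cl}}(\psi_0^{\otimes4})\right].
\]
It then suffices to show that this quantity is affine in $M(\psi_0)$, and to identify the sign of the slope.

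For the affine dependence I would invoke the fourth-moment Clifford commutant \cite{zhu2016clifford,bittel2025completetheorycliffordcommutant}. The commutant of $\mathrm{Cl}^{\otimes4}$ is spanned by the permutations $\pi\in S_4$ together with $Q\pi$, where $Q=d^{-2}\sum_{P\in\cP_N}P^{\otimes4}$ is the projector onto the stabilizer code. Concretely, $\Phi^{(4)}(\psi_0^{\otimes4})$ is the Haar twirl restricted separately to the range of $Q$ and to the range of $1-Q$. Its coefficients are determined by two overlaps: $\Tr(\psi_0^{\otimes4}\pi)=1$ for every $\pi$, and
\[
\Tr(\psi_0^{\otimes4}Q)=d^{-2}\sum_P\bra{\psi_0}P\ket{\psi_0}^4=d^{-1}P_2=d^{-1}\bigl(1-M(\psi_0)\bigr).
\]
Following \cite{Zhou2023performanceanalysis}, the state $\psi_0^{\otimes4}$ lives in the symmetric subspace, so
\[
\Phi^{(4)}(\psi_0^{\otimes4})=\frac{\Tr(Q\psi^{\otimes4})}{D_+}\,Q P_{\mathrm{sym}}+\frac{1-\Tr(Q\psi^{\otimes4})}{D_-}\,(1-Q)P_{\mathrm{sym}},
\]
where $D_\pm$ are the dimensions of the two symmetric sectors. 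This expression is manifestly affine in $M(\psi_0)$.

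Substituting, $\bE[s_E^4]=c_0+c_1\,(1-M)$ with
\[
c_1=d^{-1}\left[\frac{\Tr(O^{\otimes4}QP_{\mathrm{sym}})}{D_+}-\frac{\Tr(O^{\otimes4}(1-Q)P_{\mathrm{sym}})}{D_-}\right].
\]
Dividing by $\sigma^4$ gives $\alpha$ and $\beta=-c_1/\sigma^4$.

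The main obstacle is the sign of $\beta$, i.e.\ showing $c_1>0$ at leading order in $d$. The key structural fact is that $O=\sum_{jj'}E_j^\dagger E_{j'}$ is a sum of at most $\mathrm{poly}(N)$ local terms. Expanding it in Paulis as $O=\sum_P o_P P$, its Pauli support has polynomial size, and $\Tr(O^{\otimes4}Q)=d^{2}\sum_P o_P^4$. One has $D_+\sim d^2/24$ and $D_-\sim d^4/24$. The $1/D_-$ term is essentially the Haar contribution, of order $\Tr(O)^4$-type and $\Tr(O^2)^2$-type contractions divided by $d^4$, whereas the $Q$-sector term is roughly $\frac{24}{d^2}\cdot\frac{1}{d}\cdot$(Pauli-weighted fourth moments). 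These Pauli-weighted moments are sums such as $\sum_P o_P^4$ and $\sum o_P o_{P'} o_{P''} o_{PP'P''}$ arising from the symmetrization, scaled by $d^2$.

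I would first subtract the identity component, since shifting $O$ by a multiple of the identity does not change the centered moments. Then I would compare the orders: the centered $\sigma^2\sim \|O-\bar O\|_F^2/d$, and the $Q$-sector gives a positive contribution of order $d^{-2}\sum_P \tilde o_P^4$-type terms. With only polynomially many Pauli components, neither $\sum_P\tilde o_P^4$ nor $(\sum_P \tilde o_P^2)^2$ is exponentially suppressed relative to the other. The $Q$-sector contribution then dominates the correction coming from the $(1-Q)$ renormalization, which yields $c_1>0$ and hence a negative dominant term of $\beta$. The delicate bookkeeping is tracking which of the $24$ permutation contractions survive inside $Q$; I would carry this out by expanding $P_{\mathrm{sym}}=\frac1{24}\sum_\pi\pi$ and evaluating each $\Tr(O^{\otimes 4}Q\pi)$ as a Pauli-coefficient sum.
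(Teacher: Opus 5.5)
Your proposal is correct. Its first half is the paper's proof: the 3-design property fixes $m_1,m_2,m_3$ and hence $\sigma$. The Clifford fourth-moment formula $\beta_+QP_{\mathrm{sym}}+\beta_-(1-Q)P_{\mathrm{sym}}$, with $\Tr(Q\psi_0^{\otimes4})=(1-M)/d$, makes $\mathbb{E}[s_E^4]$ affine in $M$. Your condition $c_1>0$ is exactly the paper's condition $(d^2+3d)A>4B$, where $A=\Tr(O^{\otimes4}QP_{\mathrm{sym}})$ and $B=\Tr(O^{\otimes4}P_{\mathrm{sym}})$.

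You differ from the paper in how the sign is settled. The paper keeps the identity component of $O=E^\dagger E$ and treats $\Tr(O)^4$ as the dominant part of $B$. Using $\sum_P\Tr(OPOP)=d\Tr(O)^2$ and Cauchy--Schwarz, it proves $4\Tr(O)^4\le(1+3/d)\bigl(3\sum_P\Tr(OPOP)^2+\sum_P\Tr(OP)^4\bigr)$ and declares all other contractions lower order. You instead pass to the traceless part $\tilde O=\sum_{P\neq I}\tilde o_PP$ by shift invariance and count contractions directly.

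Your route is the more transparent one. Since $4B=(d^2+3d)A$ holds identically for $O=cI$, all dependence on $\Tr(O)$ drops out of $(d^2+3d)A-4B$. So the paper's conclusion actually rests on the slack of its inequality, whose leading part is $4d^4\sum_{P\neq I}\tilde o_P^4$, dominating the neglected $O(d^3\,\mathrm{poly}(N))$ terms; the paper leaves this point implicit. In your frame it is explicit:
\begin{itemize}
\item inside $Q$, the identity and the three double transpositions each give $d^2\sum_P\tilde o_P^4$;
\item transpositions and 4-cycles give $O\bigl(d(\sum_P\tilde o_P^2)^2\bigr)$;
\item 3-cycles give $O(\mathrm{poly}(N))$;
\item $B=O(d^2\,\mathrm{poly}(N))$.
\end{itemize}
With $(\sum_P\tilde o_P^2)^2\le K\sum_P\tilde o_P^4$, where $K$ is the number of Pauli components, this yields $c_1=d^{-1}\sum_P\tilde o_P^4\,\bigl(1+O(\mathrm{poly}(N)/d)\bigr)$ and $\beta\approx-d\sum_P\tilde o_P^4/(\sum_P\tilde o_P^2)^2$. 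It needs only $\tilde O\neq0$ with polynomially many Pauli terms, not $\Theta(1)$ coefficients. What the paper's inequality buys in return is that it avoids evaluating individual permutation contractions for the leading comparison. You do, however, still have to carry out that bookkeeping.

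There are two small slips, and neither affects the sign:
\begin{itemize}
\item $D_+=\Tr(QP_{\mathrm{sym}})=(d+1)(d+2)/6\sim d^2/6$, not $d^2/24$.
\item The $Q$-sector contribution to $c_1$ is of order $d^{-1}\sum_P\tilde o_P^4$, not $d^{-2}$, so $|\beta|$ grows like $d$.
\end{itemize}
You should also state the nondegeneracy assumption $E^\dagger E\not\propto I$. Pairwise anticommuting Pauli terms in $E$ give $E^\dagger E\propto I$ and $\sigma=0$, and without this assumption the kurtosis is undefined.
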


See Section \uppercase\expandafter{\romannumeral3} in Supplemental Material \cite{seesm} for the proof and the detailed expressions of $\alpha$ and $\beta$.

Here we give some discuss about the condition that $E$ consists of $\mathrm{poly}(N)$ terms. For a Hamiltonian $H=A+B$ consisting of $\mathcal O(N^k)$ terms, the error matrix $E$ of $p$-th order product formula has at most $\mathcal O(N^{pk})$ terms (usually much less than this by locality), which can be easily obtained from the linearity of the commutator.
Standard models in quantum simulation (such as the Ising, Heisenberg, and Fermi-Hubbard models) comprise only $\mathcal O(N^k)$ local terms. Therefore, the error terms of constant-order product formulas satisfy this requirement for the majority of Hamiltonian simulation scenarios.
Furthermore, the absolute value of the leading term in $\beta$ increases exponentially faster than the other terms (see the SM \cite{seesm}). Therefore, $\beta$ turns negative even when $N$ is not very large. For instance, FIG.~\ref{fig: kur_mag}(b) displays the case of $N=10$, where $\beta$ is already negative.

A smaller kurtosis generally implies thinner tails, which means a lower probability of extreme values occurring \cite{balanda1988kurtosis,he2010bounding}. For an intuitive understanding, we examine the following inequality given in Ref.~\cite{zelen1954bounds}: define $\kappa_m:=\bE[(s_E(\psi)-\|E\|_F)^m]/\sigma^m$,
with $\kappa_4$ being the kurtosis. Then for $t>(\kappa_3+\sqrt{\kappa_3^2+4})/2$,
$$\mathop{\mathrm{Pr}}_{\ket{\psi}\sim \cM}\left[s_E(\psi)-\|E\|_F^2\ge t\sigma\right]\le\left(1+t^2+\frac{(t^2-t\kappa_3-1)^2}{\kappa_4-\kappa_3^2-1}\right)^{-1}.$$
For any distribution function, it always holds that $\kappa_4\ge\kappa_3^2+1$ \cite{pearson1916ix}. Therefore, when two distributions share the same $\kappa_3$, a smaller $\kappa_4$ indicates a smaller probability of $s_E(\psi)\gg\norm{E}_F^2$ happening, which is also clearly certified by the numerical results in FIG.~\ref{fig: kur_mag}: a state with greater magic implies a smaller probability that its Trotter error deviates significantly from the mean value.

\begin{figure}[t!]
    \centering
    \includegraphics[width=\linewidth]{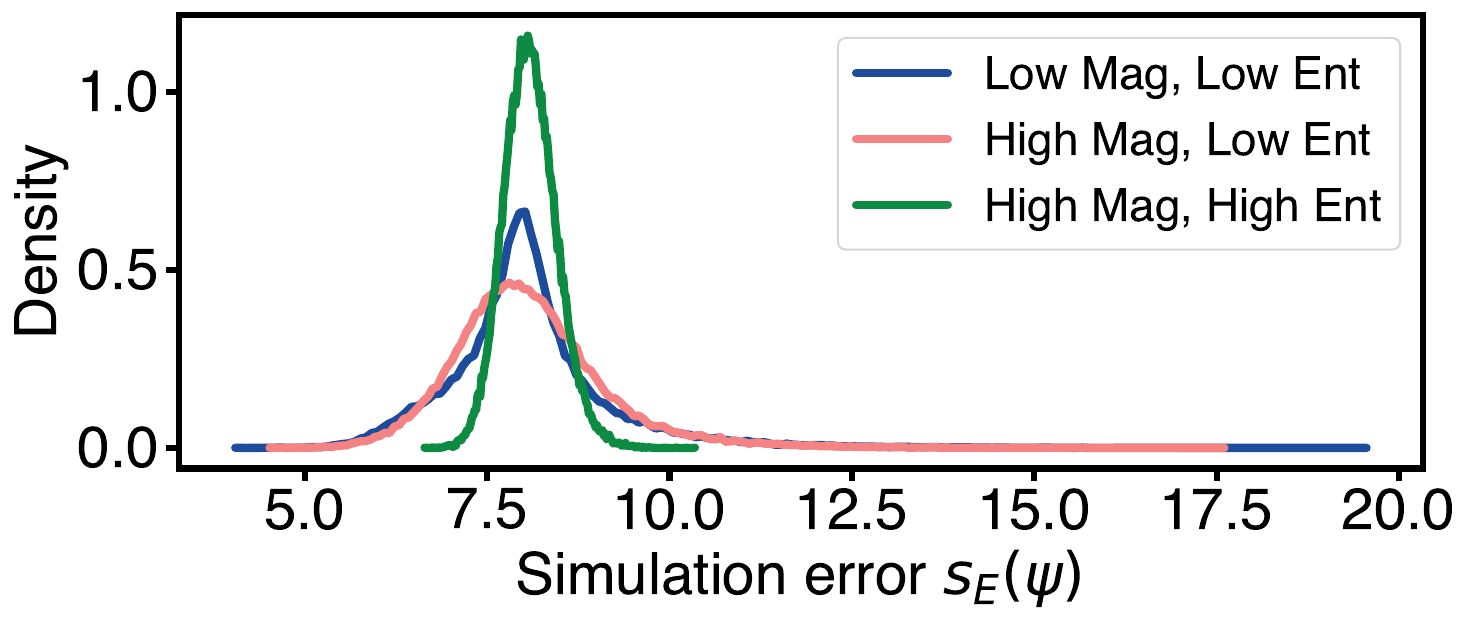}
    \caption{Error distribution of $\mathcal E_{LC}(\psi_{LL})$, $\mathcal E_{LC}(\psi_{LH})$ and $\mathcal E_{LC}(\psi_{HH})$. We simulate the corresponding ensemble $\mathcal E_{LC}$ by applying local random Clifford gates $10^6$ times to each of these three initial states chosen from FIG.~\ref{fig: increase}.
    }
    \label{fig: threeDistribution}
\end{figure}

\vspace{0.2cm}
\noindent\textbf{Joint effects of entanglement and magic---} To further explore the joint influence of entanglement and magic on error statistics of quantum simulation, we define the following ensemble
\begin{equation}\label{eq:esembleLC}
    \mathcal E_{LC}(\psi_0):=\left\{\left(\bigotimes_{i=1}^N C_i\right)\ket{\psi_0}\right\},
\end{equation}
where each $C_i$ is a single-qubit random Clifford gate on $i$-th qubit. This construction preserves both the entanglement and magic of the initial state $\ket{\psi_0}$, while randomizing local bases through local Clifford operations. Since the Clifford group forms a unitary 3-design \cite{webb2015clifford,zhu2016clifford,bittel2025completetheorycliffordcommutant}, the variance bound derived in Theorem \ref{theo: variance} remains applicable to $\mathcal E_{LC}$ here.

Though a full analytical treatment of the resulting kurtosis for the local Clifford ensemble becomes challenging due to the interplay between entanglement and magic, we turn to numerical investigations using states evolved under the QIMF model with different parameters. In FIG.~\ref{fig: increase} of the End Matter, we show how the entanglement and magic increase under the physical evolution.

We select three representative states from FIG.~\ref{fig: increase}: $\ket{\psi_{LH}}$ (low entanglement, high magic), $\ket{\psi_{LL}}$, and $\ket{\psi_{HH}}$. Taking these as initial states, FIG.~\ref{fig: threeDistribution} displays the corresponding error distributions generated by the ensemble $\mathcal E_{LC}$ in \cref{eq:esembleLC}. The results reveal a clear and consistent trend: magic effectively suppresses extreme outliers, while entanglement enhances the concentration of errors. These observations align well with our earlier analytical insights, reinforcing the complementary roles of magic and entanglement in shaping error statistics.

\vspace{0.2cm}
\noindent\textbf{Long-time behavior---}
The previous discussion mainly focused on the one-step Trotter error. For the long-time magic--kurtosis relation, one may instead use the accumulated error operator $(E_{\mathrm{LT}}:=\sU_p^r-\sU_0^r)$, for which the formal linear dependence remains, although the sign of the coefficient is no longer fixed by our present proof; numerically, we still observe a negative slope for the systems considered here, as shown in the Supplemental Material \cite{seesm}. We next turn to the long-time behavior associated with entanglement.

For large time $t$, the full evolution interval can be uniformly partitioned into $r$ steps of duration $\delta t=t/r$.
The true evolution $\sU_0(t)=\sU_0(\delta t)^r$ is commonly approximated by the product $\sU_p(\delta t)^r$ of the approximate unitary $\sU_p(\delta t)$, abbreviated as $\sU_p$ for brevity. The long-time Trotter error can be bounded via the triangle inequality \cite{seesm}.

\begin{equation}
    \begin{aligned}
        \norm{(\sU_p^r-\sU_0^r)\ket{\psi_0}}\le\sum_{k=0}^{r-1}\norm{(\sU_p-\sU_0)\sU_p^k\ket{\psi_0}}.\label{equ:tri-ineq}
    \end{aligned}
\end{equation}

For simplicity, we denote $\sum_{k=0}^{r-1}\norm{(\sU_p-\sU_0)\sU_p^k\ket{\psi_0}}$ as $e_r$. We define $S_k=\{s\subseteq[N]|s=\operatorname{supp}(\sU_p^{\dagger k}E_j^\dagger E_{j'}\sU_p^k),\forall j,j'\}$ as the set of supports for each term in the expansion of operator $\sU_p^{\dagger k}E^\dagger E\sU_p^k$. Subsequently, we establish the following theorem.

\begin{theorem}
    Given a state $\ket{\psi_0}$ with entanglement structure $S(\rho_A)$, neglecting the influence of higher-order terms in $\delta t$, the variance of the $p$-th order Trotter error under $\cE(\psi_0)$ satisfies
    \begin{equation}
        \mathop{\operatorname{Var}}_{\kp\sim\cE(\psi_0)}[e_r]\le r\delta t^{2p+2}\sum_{k=0}^{r-1}\sqrt{\sum_{s\in S_k}c_s\sqrt{\log d_s-S(\rho_s)}},
    \end{equation}
    where $c_s$ is a constant that is independent of the state $\psi_0$.
\end{theorem}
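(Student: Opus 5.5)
The plan is to reduce the long-time statement to $r$ copies of Theorem~\ref{theo: variance}, one per Trotter step, and glue them with Cauchy--Schwarz.

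\textbf{Step 1: expand $e_r$ and apply Cauchy--Schwarz.} Write
\[
e_r=\sum_{k=0}^{r-1}X_k, \qquad X_k:=\norm{(\sU_p-\sU_0)\sU_p^k\kp}.
\]
For any random variables,
\[
\mathrm{Var}\Big(\sum_k X_k\Big)=\sum_{k,k'}\mathrm{Cov}(X_k,X_{k'})\le\Big(\sum_k\sqrt{\mathrm{Var}X_k}\Big)^2\le r\sum_k \mathrm{Var}X_k .
\]
The target bound, however, contains $\sqrt{\cdot}$ of the per-step quantity rather than the quantity itself, and the prefactor is $r$ rather than $r^2$. I will therefore aim for $\mathrm{Var}(e_r)\le r\sum_k \mathrm{Var}(X_k)$. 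I will then show that $\mathrm{Var}(X_k)$ is bounded by $\delta t^{2p+2}$ times the square root of a Theorem~\ref{theo: variance}-type sum.

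\textbf{Step 2: the leading-order error at step $k$.} Neglecting higher orders in $\delta t$, we have
\[
X_k=\delta t^{p+1}\sqrt{s_k(\psi)}, \qquad s_k(\psi):=\bra{\psi}\sU_p^{\dagger k}E^\dagger E\,\sU_p^k\ket{\psi}.
\]
The function $x\mapsto\sqrt x$ is non-smooth at $0$, so I will use the elementary inequality
\[
\mathrm{Var}(\sqrt{Y})\le \bE\,\big|\sqrt Y-\sqrt{\bE Y}\big|^2\le \bE\,|Y-\bE Y|\le\sqrt{\mathrm{Var}\,Y}.
\]
The middle step uses $|\sqrt a-\sqrt b|^2\le|a-b|$. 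This is exactly what produces the outer square root in the statement:
\[
\mathrm{Var}(X_k)\le\delta t^{2p+2}\sqrt{\mathrm{Var}_{\cE(\psi_0)}[s_k]} .
\]

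\textbf{Step 3: rerun the proof of Theorem~\ref{theo: variance} with the conjugated operator.} Expand
\[
\sU_p^{\dagger k}E^\dagger E\,\sU_p^k=\sum_{j,j'}\sU_p^{\dagger k}E_j^\dagger E_{j'}\sU_p^k,
\]
and further decompose it into Pauli or local pieces whose supports $s$ form $S_k$. By Lieb--Robinson-type locality of the Trotter circuit, these supports are bounded-size light cones. Then I would repeat the argument of Theorem~\ref{theo: variance}:
\begin{itemize}
\item Average over the local 2-design $\bigotimes_iU_i$. The variance becomes a sum of pairs of terms, and each pair reduces to a deviation of $\rho_s$ from the maximally mixed state $I/d_s$ on the support.
\item Bound that deviation via Pinsker, $\norm{\rho_s-I/d_s}_1\le\sqrt{2D(\rho_s\|I/d_s)}=\sqrt{2(\log d_s-S(\rho_s))}$.
\item Absorb operator norms and the 2-design Weingarten factors into the constants $c_s$, which depend only on $E$, $\sU_p$ and $k$, not on $\psi_0$.
\end{itemize}
Substituting into Steps 1--2 gives
\[
\mathrm{Var}(e_r)\le r\,\delta t^{2p+2}\sum_k\sqrt{\sum_{s\in S_k}c_s\sqrt{\log d_s-S(\rho_s)}} .
\]

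\textbf{Main obstacle.} The hard part is Step 3. The operator $\sU_p^k$ does not commute with the local random unitaries, so $s_k(\psi)$ is not literally of the Theorem~\ref{theo: variance} form with $\psi_0$ replaced by a fixed state.

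The first thing to try is to treat $\sU_p^{\dagger k}E^\dagger E\,\sU_p^k=:F_k$ as a new Hermitian "error operator". Theorem~\ref{theo: variance}'s proof only uses that the observable is a sum of terms with known supports acting on $\bigotimes U_i\ket{\psi_0}$. That remains true here, because $s_k(\psi)=\bra{\psi_0}(\otimes U_i)^\dagger F_k(\otimes U_i)\ket{\psi_0}$. The reduced states are therefore those of $\psi_0$ on the supports $s\in S_k$, and the proof should go through term by term.

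The remaining care is controlling cross terms between pairs with different supports, $s\cup s'$. I would bound these either by enlarging $S_k$ to contain such unions, or by absorbing them into $c_s$ via Cauchy--Schwarz.
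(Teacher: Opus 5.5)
Your proposal is correct and follows the paper's proof essentially step for step: the paper also bounds $\mathrm{Var}(e_r)$ by $r\,\delta t^{2p+2}\sum_k\mathrm{Var}[\sqrt{s_k}]$ via the covariance inequality, bounds $\mathrm{Var}(\sqrt{X})$ by a constant times $\sqrt{\mathrm{Var}(X)}$, and then applies Theorem~\ref{theo: variance} verbatim with $E$ replaced by $E\sU_p^k$, which is exactly your resolution of the ``obstacle''. The differences are cosmetic: the paper's i.i.d.-copy argument gives the constant $1/\sqrt{2}$ where yours gives $1$ (absorbed into $c_s$), and your worry about cross terms on unions $s\cup s'$ is moot because the local 2-design average inside Theorem~\ref{theo: variance} is already diagonal in the Pauli basis.
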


See Section \uppercase\expandafter{\romannumeral4} in Supplemental Material \cite{seesm} for detailed proof. The numerical results are illustrated in FIG.~\ref{fig:0.1longtime} in the End Matter. When $\delta t$ is negligibly small, the long-time behavior is similar to the one-step behavior.

\vspace{0.2cm}
\noindent\textbf{Discussions and outlook---} In this Letter, we establish a fundamental link between intrinsic quantum resources—entanglement and magic—and the statistical behavior of algorithmic error in digital quantum simulation. Entanglement suppresses the variance of Trotter error, while magic reduces its kurtosis, jointly stabilizing quantum dynamics.
Thus, resources that obstruct classical simulation can, paradoxically, enhance quantum reliability.
As entanglement and magic typically proliferate under generic dynamics \cite{kim2014testing,toniolo_dynamical_2025,Turkeshi_2025} (also demonstrated in FIG.~\ref{fig: increase}), most states are inherently biased toward favorable error statistics, rendering worst-case bounds statistically irrelevant. This intrinsic robustness suggests that realistic quantum simulations may be far more stable than pessimistic analyses imply.
Looking ahead, extending this resource-based statistical framework to other algorithms and noisy devices could reveal how quantum resources not only empower computation but also protect it—hinting at a deeper unity between complexity, accuracy, and reliability in quantum dynamics.

\vspace{0.2cm}
\textit{Acknowledgements.}--
X. Z., J. X., and Q. Z. acknowledge funding from Quantum Science and Technology-National Science and Technology Major Project No. 2024ZD0301900, National Natural Science Foundation of China (NSFC) via Projects No. 12347104 and No. 12305030, Hong Kong Research Grant Council (RGC) via No. 27300823, No. 17310926, No. N\_HKU718/23, and No. R6010-23. Y.Z. acknowledges funding  from the National Natural Science Foundation of China (NSFC) Grant No.~ 12575012 and 12205048, the Quantum Science and Technology-National Science and Technology Major Project Grant Nos.~2024ZD0301900 and 2021ZD0302000, the Shanghai QiYuan Innovation Foundation, the Shanghai Municipal Commission of Science and Technology with Grant No.~25511103200, the Shanghai Science and Technology Innovation Action Plan Grant No.~24LZ1400200, the Shanghai Pilot Program for Basic Research - Fudan University 21TQ1400100 (25TQ003), the CCF-Quantum CTek Superconducting Quantum Computing CCF-QC2025006.

\textbf{Code availability}: The code used in this study is available on \href{https://github.com/Xran-ZH/Quantum-Resource-and-Quantum-Simulation}{GitHub}.


%

\clearpage

\section*{End Matter}

\subsection{Variance and Entanglement}
To show the influence of entanglement on the variance of Trotter errors, we consider the one-dimensional (1D) QIMF of $N=10$ qubits in FIG.~\ref{fig: var_ent}.

When the evolution satisfies the Eigenstate thermalization hypothesis (ETH), such as for the QIMF model with the typical parameters $(h_x, h_y, J) = (0.8090, 0.9045, 1)$\cite{kim2014testing}, the subsystem entanglement entropy increases fast (see FIG.~\ref{fig: var_ent}(a), where the dash lines show the entropies increase). At each time $t$, we have the corresponding state $\ket{\psi_t}=e^{-iHt}\ket{0}$, then we generate another 2000 states $\hat{\mathcal{E}}_{LU}(t)=\left\{\ket{\psi_t^{(p)}}\right\}$ with the same entanglement as $\ket{\psi_t}$ by acting local random unitary on $\ket{\psi_t}$, as shown in the green box in FIG. \ref{fig:intro}.
Then, we use $\hat{\mathcal E}_{LU}(t)$ to approximate $\cE(\psi_t)$. We calculate the PF1 error $e_t^{(p)}(\delta t)=\left\|\left(\sU_0(\delta t)-\sU_1(\delta t)\right)\ket{\psi_t^{(p)}}\right\|^2$ for each $k$.
Then we use $\hat{s}^{(p)}(\psi_t):=e_t^{(p)}(\delta t)/\delta t^4$ to estimate $s_E(\psi_t)$, where we set $\delta t=0.01$.
We computed the sample variance of $\{\hat{s}^{(p)}(\psi_t)\}$ for each time point $t$ and plotted it in the figure.

We obtain FIG.~\ref{fig: var_ent}(b) by adjusting the parameters to an anomalous selection $(h_x,h_y,J)=(0,0.9045,1)$, where the entanglement growth is not significant, and repeating the same procedure as in FIG.~\ref{fig: var_ent}(a).

\subsection{Kurtosis and Magic}
To show the influence of magic on the kurtosis of Trotter errors, we construct a set $\{\ket{\psi_k}\}$ consisting of 10 starting states with different magic content $M_k:=M(\ket{\psi_k})$ correspondingly. Here we construct $\ket{\psi_k}$ by using $T$-gates and Clifford gates, $\ket{\psi_k}=C_2(T^{\otimes k}\otimes I_{N-k})C_1\ket{0}^{\otimes N}$, i.e.
\begin{align}
    \ket{\psi_k}=\quad\vcenter{\hbox{\Qcircuit @C=0.35cm @R=0cm {
         &\ket{0}&&\multigate{5}{C_1} & \gate{T_1} & \qw&\multigate{5}{C_2} & \qw  \\
         &\vdots && \ghost{C_1}\qw & \vdots &  & \ghost{C_2} & \qw \\
         &\ket{0}&& \ghost{C_1} & \gate{T_k} &\qw &\ghost{C_2} & \qw\\
         &\ket{0}&& \ghost{C_1} & \qw &\qw &\ghost{C_2} & \qw\\
         &\vdots&& \ghost{C_1}\qw & \qw & \qw&\ghost{C_2} & \qw\\
         &\ket{0}&& \ghost{C_1} &\qw &\qw & \ghost{C_2}  & \qw
     }}}\quad,
\end{align}
where $N=10$, $C_1$ and $C_2$ are random Clifford gates, and then we numerically calculate their magics $M_k$.

On each state $\ket{\psi_k}$, we operate random global Clifford gates repeatedly to generate another $10^6$ states $\hat{\mathcal{E}}_{GC}(k)=\left\{\ket{\psi_k^{(p)}}\right\}$ with the same magic $M_k$. The we use $\hat{\mathcal{E}}_{GC}(k)$ to approximate $\cM(\psi_k)$. We set the Hamiltonian as the QIMF model with parameters $(h_x,h_y,J)=(0.8090,0.9045,1)$, then compute the PF1 error $e_k^{(p)}(\delta t)=\left\|\left(\sU_0(\delta t)-\sU_1(\delta t)\right)\ket{\psi_k^{(p)}}\right\|^2$ for each $p$. Then we use $\hat{s}^{(p)}(\psi_t):=e_t^{(p)}(\delta t)/\delta t^4$ to estimate $s_E(\psi_t)$, where $\delta t=0.1$ is the evolution time. FIG.~\ref{fig: kur_mag}(a) depicts the distributions of $\left\{\hat s^{(p)}\right\}$ corresponding to five different values of magic $M_k$. In FIG.~\ref{fig: kur_mag}(b), we illustrate the relationship between the kurtosis of set $\left\{\hat s^{(p)}\right\}$ and the corresponding magic $M_k$.

\begin{figure}[t]
    \centering
    \includegraphics[width=0.95\linewidth]{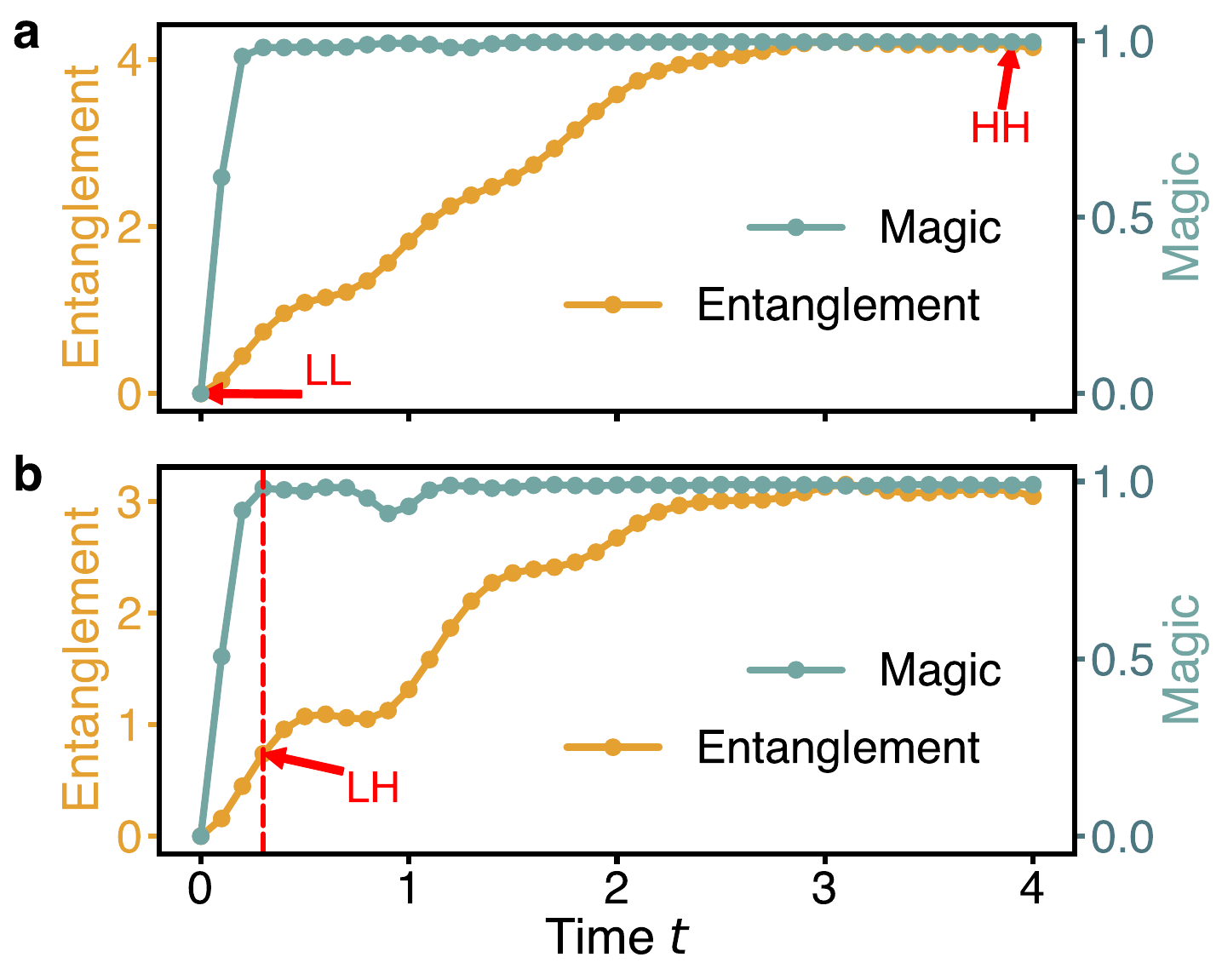}
    \caption{Both entanglement and magic exhibit growth during the time evolution.
Beginning with the initial product state $\ket{\psi_0}=\ket{0}^{\otimes N}$ for $N=10$, we evaluate how the entanglement entropy of the subsystem consisting of the first 5 qubits and magic change over time.
The results for the typical QIMF dynamics  are shown in FIG.~(a),
while FIG.~(b) displays the corresponding behavior for the atypical QIMF.}
    \label{fig: increase}
\end{figure}

\subsection{CI of Variance and Kurtosis}
We use bootstrap method to give the approximate 95\% confidence interval (CI) of variance and kurtosis in FIG.~\ref{fig: var_ent} and FIG.~\ref{fig: kur_mag}(b). The following outlines the bootstrap procedure.

First, we have an unknown cumulative distribution function (c.d.f.) $f(x)$ of random variable $X$ (here the distribution of $s_E$ under $\cE$ and $\cM$) and want to estimate a property $\phi(f)$ of $f$, such as variance and kurtosis in our main text.

Second, we generate a sample $\mathbf X=\{X_1,\dots,X_n\}, X_i\overset{idd}{\sim} f$ with sample scale $n$. We construct the empirical distribution fuction
\begin{equation}
    \hat{f}(x)=\frac{1}{n}\sum_{i=1}^n1(X_i<x).
\end{equation}
Then we construct a estimator $T(\mathbf X,f)$ of $\phi(f)$ (sample variance and sample kurtosis here).

Third, we need to evaluate the bias of $T(\mathbf X,f)$, i.e. $b(T,\phi,f)=T(\mathbf X,f)-\phi(f)$. To achieve this, we use $\hat f$ to simulate the true distribution $f$ and generate another $m$ samples with sacle $n$, $\mathbf{X}^1,\dots,\mathbf{X}^m$. For each sample $\mathbf{X}^i$, we calculate the estimator $T(\mathbf{X}^i,\hat f)$. Then we have an empirical c.d.f. of $T(\mathbf X,\hat f)-T(\mathbf X,f)$
\begin{equation}
    g(x)=\frac{1}{m}\sum_{i=1}^b1\left(T(\mathbf{X}^i,\hat f)-T(\mathbf X,f)<x\right).
\end{equation}
Last, we use $g(x)$ to simulate the behavior of $b(T,\phi,f)$. And we calculate the 2.5\% and 97.5\% quantiles $q_{0.025}$ and $q_{0.975}$ to give an approximate 95\% CI of $\phi(f)$,
\begin{equation}
    \mathrm{CI}=[T(\mathbf{X},f)-q_{0.975},T(\mathbf{X},f)-q_{0.025}].
\end{equation}

\subsection{Resources increase with quantum evolution}
FIG.~\ref{fig: increase} illustrates the growth of entanglement and magic during evolution under QIMF, with system size $N=10$ and entanglement quantified by the $5$-qubit subsystem entanglement entropy. As shown, entanglement exhibits a rapid increase under typical QIMF dynamics, while magic rises quickly and saturates, regardless of whether the QIMF is typical, i.e. $(h_x, h_y, J) = (0.8090, 0.9045, 1)$, or atypical, i.e. setting $h_x=0$.

For generic Hamiltonians, the states involved in quantum simulations generally possess substantial entanglement and magic—precisely the regime where our approach demonstrates optimal performance. In contrast, classical tensor network methods struggle with highly entangled states \cite{tensornetworkandentanglement2019}, and simulations based on the Gottesman-Knill theorem \cite{gottesman1998heisenbergrepresentationquantumcomputers} face exponential complexity in the presence of significant magic. This highlights a clear advantage of quantum simulation in such settings.

Additionally, based on this observation, even if the initial state possesses limited quantum resources, we may perform a truncation after the entanglement and magic of the state grow sufficiently, in accordance with Eq.~\eqref{equ:tri-ineq}. The error beyond this truncation can be treated as an average value, and only the error in the early stage of evolution requires rigorous analysis.

It is also worth noting that, when plotting FIG.~\ref{fig: threeDistribution} in the main text,
we selected the state at $t=0$ and $t=3.9$ in FIG.~\ref{fig: increase}(a) as $\ket{\psi_{LL}}$ and
$\ket{\psi_{HH}}$, respectively, and the state at $t=0.4$ in FIG.~\ref{fig: increase}(b) as $\ket{\psi_{LH}}$.

\begin{figure}[t]
    \centering
    \includegraphics[width=\linewidth]{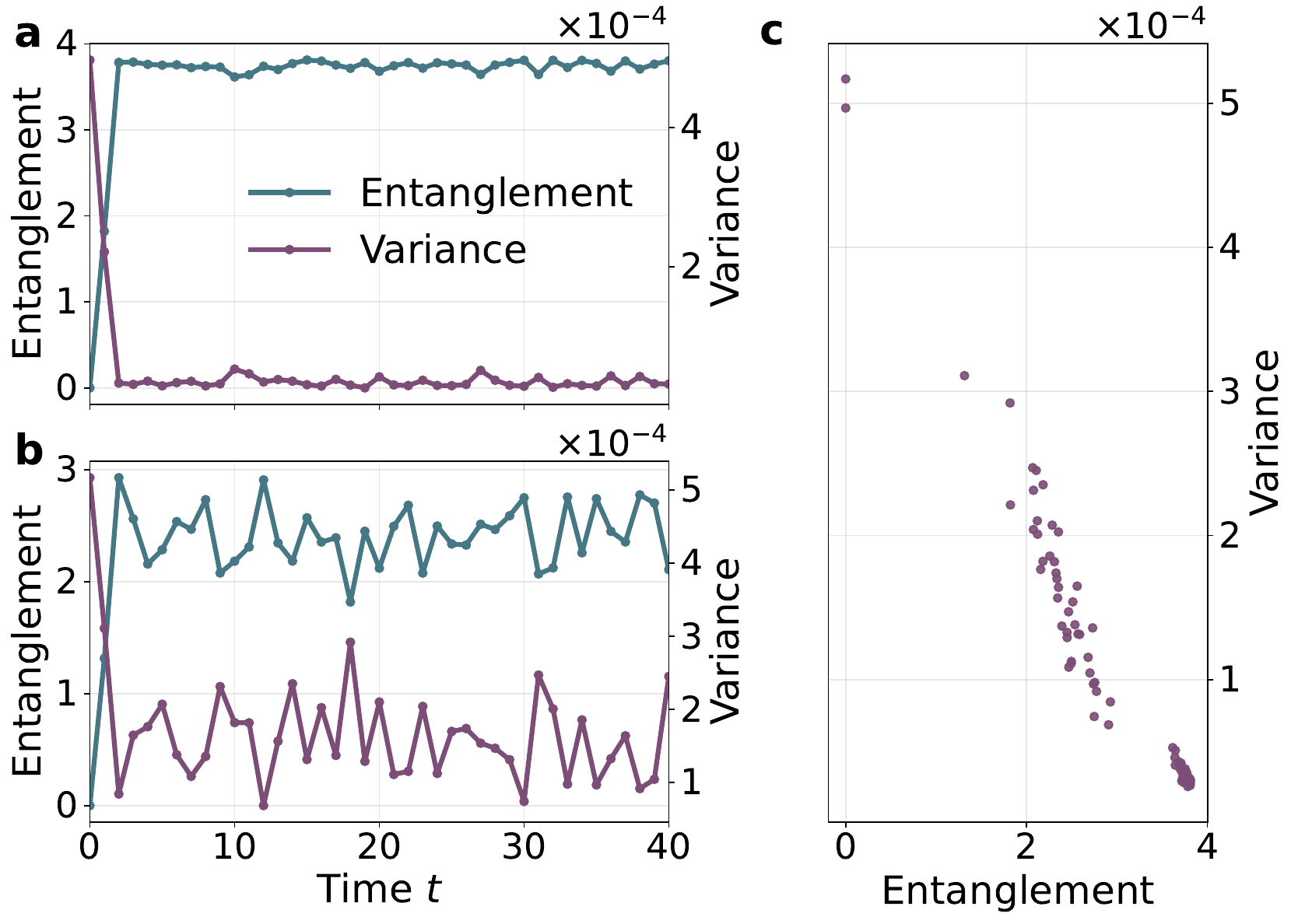}
    \caption{Entanglement (4-part entropy) reduces the variance of long-time simulation error. The long-time error, for $t=10$ here, is $\norm{\sU_p(\delta t)^r-\sU_0(\delta t)^r\ket{\psi}}$, where $p=2$, $r=100$, and $\delta t=0.1$. (a) Rapid entanglement growth reaching saturation as time increases; (b) Slow entanglement growth without reaching saturation. (c) displays the corresponding entanglement-variance correlations for the data shown in (a) and (b).}
    \label{fig:0.1longtime}
\end{figure}

\subsection{Long-time simulation}
To demonstrate how the entanglement of the initial state mitigates the variance of long-time simulation errors, we adopt the identical initial states as those presented in FIG.~\ref{fig: var_ent}, with the corresponding results illustrated in FIG.~\ref{fig:0.1longtime}. In this part, we use 1D Heisenberg model
\begin{equation}
    H=h\sum_{j=1}^NX_j+J\sum_{j=0}^{j-1}(X_jX_{j+1}+Y_jY_{j+1}+Z_jZ_{j+1}),
\end{equation}
where $N=10$, $J=1$ and $h=0.2$.
We utilize the second-order Trotter formula, $\sU_2(\delta t)$ to simulate the ideal evolution $\sU_0(\delta t)=\exp(-iH\delta t)$, where $\delta t=0.1$.

For each initial state $\ket{\psi_t}$, we generate another 2000 states $\ket{\psi_t^{(p)}}$ by acting local random unitary, simulating the ensemble $\cE(\psi_t)$. Then we calculate the simulation error $e^{(p)}_t=\norm{\left(\sU_2(\delta t)^r-\sU_0(\delta t)^r\right)\ket{\psi_k^{(p)}}}$ of each $\ket{\psi_t^{(p)}}$ and further evaluate the sample variance of these error values, which is then visualized in FIG.~\ref{fig:0.1longtime}. The numerical result indicates that when $\delta t$ is small, the long-time simulation error behaves similarly to the one-step one.

\clearpage
\onecolumngrid
\section*{Supplementary Material}

\section{Haar measure and unitary design}
In this part, we first review the basics of random unitary and design, which would be helpful for the proofs in the later sections.
We use $\mathcal{H}_d$ to represent the d-dimensional Hilbert space. $\mathcal L(\mathcal H_d)$ is the set of linear operators that act on $\mathcal H_d$. $D(\mathcal{H}_d)$ represents the set of density matrices on $\mathcal{H}_d$. We first define the moment operator.

\begin{definition}[$k$-th Moment operator]
The $k$-th moment operator, with respect to the probability measure $\nu$, is defined as $\mathcal R_\nu^{(k)}:\mathcal L(\mathcal H_d^{\otimes k})\rightarrow\mathcal L(\mathcal H_d^{\otimes k})$:
\begin{equation}
    \mathcal R_\nu^{(k)}(O)=\mathop{\bE}_{U\sim\nu}\left[U^{\otimes k}OU^{\dagger\otimes k}\right],
\end{equation}
for all operators in $\mathcal L(\mathcal H_d^{\otimes k})$.
\end{definition}

Let $S_k$ be the symmetric group of order $k$. For distinct $a_1, \cdots, a_l$, we use the notation $(a_1\cdots a_l)$ to denote a cyclic permutation in $S_k$ , which acts as $a_1\mapsto a_2, \cdots, a_{l-1}\mapsto a_l, a_l\mapsto a_1$. For $\sigma \in S_k$, define a unitary operator $W_\sigma$ acting on $H_d^{\otimes k}$ by
\begin{equation}
	W_\sigma=\sum_{i_1,\cdots,i_k=1}^d \ketbra{i_{\sigma^{-1}(1)}\cdots i_{\sigma^{-1}(k)}}{i_1\cdots i_k},
\end{equation}
where $\{\ket{i}\}_{i=1}^d$ is an orthonormal basis of $\mathcal{H}_d$. One can calculate the $k$-th moment operator using $W_\sigma$, as shown in Lemma \ref{lem: haar_random}.

\begin{lemma}[Theorem 10 in Ref.~\cite{Mele_2024}]\label{lem: haar_random}
	Let $O\in\mathcal L(\mathcal H_d^{\otimes k})$. The moment operator can then be expressed as a linear combination of permutation operators:
	\begin{equation}
		\mathop{\mathbb E}_{U\sim \mathrm{Haar}}\left[U^{\otimes k}OU^{\dagger\otimes k}\right]=\sum_{\sigma\in S_k}c_{\sigma}(O)W_\sigma,
	\end{equation}
    where the coefficients $c_\sigma(O)$ can be determined by solving the following linear system of $k!$ equations:
    \begin{equation}
        \Tr(W_\sigma^\dagger O)=\sum_{\sigma\in S_k}c_\sigma(O)\Tr(W_\sigma^\dagger W_\sigma).
    \end{equation}
    This system always has at least one solution.
\end{lemma}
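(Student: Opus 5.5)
The plan is to identify the range of the moment operator with the commutant of the tensor-power representation, and then use Schur--Weyl duality to describe that commutant.

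\textbf{Step 1 (invariance).} By left-invariance of the Haar measure, for every fixed unitary $V$ we have $V^{\otimes k}\mathcal R^{(k)}_{\mathrm{Haar}}(O)V^{\dagger\otimes k}=\mathcal R^{(k)}_{\mathrm{Haar}}(O)$. Hence $\mathcal R^{(k)}_{\mathrm{Haar}}(O)$ lies in the commutant
\[
\mathcal A':=\{X:\ [X,V^{\otimes k}]=0\ \forall V\in U(d)\}.
\]

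\textbf{Step 2 (Schur--Weyl duality).} I will show that $\mathcal A'=\mathrm{span}\{W_\sigma:\sigma\in S_k\}$; this is the main obstacle. Let $\mathcal B$ be the algebra generated by the $W_\sigma$. It is a $*$-algebra, since $W_\sigma^\dagger=W_{\sigma^{-1}}$, and it is finite dimensional. Its commutant $\mathcal B'$ consists of the permutation-symmetric operators on $\mathcal H_d^{\otimes k}$.

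First, $\mathcal B'$ is spanned by product operators $A^{\otimes k}$ with $A\in\mathcal L(\mathcal H_d)$. To see this, symmetrize $A_1\otimes\cdots\otimes A_k$ and use the polarization identity, which writes the symmetrization as a linear combination of terms $(\sum_i t_iA_i)^{\otimes k}$.

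Next, the span of $\{V^{\otimes k}:V\in U(d)\}$ contains every $A^{\otimes k}$. Consider a linear functional on $\mathcal L(\mathcal H_d^{\otimes k})$ that annihilates every $V^{\otimes k}$. Composed with $A\mapsto A^{\otimes k}$, it gives a holomorphic polynomial in the entries of $A$. This polynomial vanishes on $U(d)$. Since $U(d)$ is a totally real submanifold of maximal dimension in $GL_d(\mathbb C)$ (via $e^{iH}\to e^{zH}$), the polynomial vanishes on $GL_d$, and by density on all of $\mathcal L(\mathcal H_d)$.

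Therefore $\mathrm{span}\{V^{\otimes k}\}=\mathcal B'$. By the double commutant theorem, $\mathcal A'=(\mathcal B')'=\mathcal B''=\mathcal B$. Finally, $\mathcal B=\mathrm{span}\{W_\sigma\}$, because $W_\sigma W_\tau=W_{\sigma\tau}$. This establishes the claimed expansion
\[
\mathcal R^{(k)}_{\mathrm{Haar}}(O)=\sum_\sigma c_\sigma(O)W_\sigma .
\]

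\textbf{Step 3 (coefficients).} Each $W_\tau$ commutes with $U^{\otimes k}$, so by cyclicity of the trace,
\[
\Tr\!\left(W_\tau^\dagger U^{\otimes k}OU^{\dagger\otimes k}\right)=\Tr(W_\tau^\dagger O).
\]
Averaging over $U$ gives $\Tr(W_\tau^\dagger O)=\sum_\sigma c_\sigma(O)\Tr(W_\tau^\dagger W_\sigma)$ for each $\tau\in S_k$. This is the stated linear system with the Gram matrix of the $W_\sigma$; the statement's indexing is to be read with $\tau$ on the left-hand side and $\sigma$ summed on the right.

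\textbf{Step 4 (solvability).} When $k>d$ the $W_\sigma$ are linearly dependent, so the Gram matrix is singular. The system is nevertheless consistent, because Step 2 already guarantees coefficients $c_\sigma$ realizing $\mathcal R(O)$, and any such coefficients satisfy it. Conversely, let $c$ be any solution and $c^0$ a realizing one. Then $c-c^0$ lies in the kernel of the Gram matrix, so $X:=\sum_\sigma(c_\sigma-c^0_\sigma)W_\sigma$ has $\Tr(X^\dagger X)=0$, i.e. $X=0$. Hence every solution yields the same operator $\mathcal R^{(k)}_{\mathrm{Haar}}(O)$, which completes the proof.
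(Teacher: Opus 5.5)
Your proof is correct and follows essentially the standard argument behind the result the paper cites (Theorem 10 of Mele's tutorial). The Haar moment lies in the commutant of $\{U^{\otimes k}\}$, Schur--Weyl duality identifies that commutant with $\mathrm{span}\{W_\sigma\}$, and tracing against $W_\tau^\dagger$ (which commutes with $U^{\otimes k}$) gives the Gram system, which is solvable because the expansion exists. Beyond what the paper uses, you also prove Schur--Weyl from scratch, correctly fix the $\sigma/\tau$ indexing, and show that every solution of the possibly singular system gives the same operator.

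The only step stated too tersely is the analytic continuation, which should be done in all $d^2$ variables at once. With $f(A)=\phi(A^{\otimes k})$ your polynomial, the map $Z\mapsto f(e^{Z})$ is entire on $M_d(\mathbb C)$ and vanishes on the real form $i\,\mathrm{Herm}$, hence everywhere. Since $\exp$ maps onto $GL_d(\mathbb C)$, $f$ vanishes on $GL_d(\mathbb C)$. By contrast, continuing a single $z$ in $e^{zH}$ on its own only reaches normal matrices.
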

A commonly used concept is the unitary $k$-design, defined as follows.
\begin{definition}[Unitary $k$-design]
	Let $\nu$ be a probability distribution over a set of unitaries $S\subseteq U(d)$. The distribution $\nu$ is unitary $k$-design if and only if:
	\begin{equation}
		\mathop{\mathbb E}_{V\sim\nu}\left[V^{\otimes k}OV^{\dagger \otimes k}\right]=\mathop{\mathbb E}_{U\sim\mathrm{Haar}}\left[U^{\otimes k}OU^{\dagger\otimes k}\right].
	\end{equation}
    for all $O\in\mathcal L((\mathcal H_d)^{\otimes k})$.
\end{definition}
For random Haar random states ($\ket{\psi}=U\ket{\psi_0},U\sim\mathrm{Haar}$), we have the $k$-th moment of states \cite{harrow2013church}.
\begin{lemma}\label{lem:haar_random}
	For Haar random pure state $\ket{\psi}\in\mathcal{H}_d$, we have
	\begin{equation}
		\mathop{\mathbb E}_{\psi\sim \mathrm{Haar}}\ketbra{\psi}{\psi}^{\otimes k}:=\mathop{\bE}_{U\sim\mathrm{Haar}}U^{\otimes k}\ketbra{0}{0}^{\otimes k}U^{\dagger\otimes k}=\frac{1}{(d+k-1)\cdots(d+1)d}\sum_{\sigma\in S_k}W_\sigma
	\end{equation}
\end{lemma}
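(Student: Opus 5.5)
The approach is to show that the averaged operator $\rho_k:=\mathop{\bE}_{U\sim\mathrm{Haar}}U^{\otimes k}\ketbra{0}{0}^{\otimes k}U^{\dagger\otimes k}$ is proportional to the projector $\Pi_{\mathrm{sym}}$ onto the symmetric subspace $\mathrm{Sym}^k(\mathcal H_d)\subseteq\mathcal H_d^{\otimes k}$. The constant is then fixed by a trace condition, and $\Pi_{\mathrm{sym}}$ is rewritten as an average of permutation operators.

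\textbf{Step 1: invariance properties.} First I record two properties of $\rho_k$.
\begin{enumerate}[(i)]
\item By left-invariance of the Haar measure, $V^{\otimes k}\rho_k V^{\dagger\otimes k}=\rho_k$ for every $V\in U(d)$.
\item Each vector $U^{\otimes k}\ket{0}^{\otimes k}=(U\ket0)^{\otimes k}$ is fixed by every $W_\pi$. Hence $W_\pi\rho_k=\rho_k W_\pi=\rho_k$ for all $\pi\in S_k$.
\end{enumerate}
Property (ii) says that $\rho_k$ is supported on, and has range in, $\mathrm{Sym}^k(\mathcal H_d)$, the common $+1$ eigenspace of all $W_\pi$. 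Since $\frac{1}{k!}\sum_\pi W_\pi$ is idempotent, Hermitian, and has exactly this range, we get $\Pi_{\mathrm{sym}}=\frac1{k!}\sum_{\sigma\in S_k}W_\sigma$. So $\rho_k=\Pi_{\mathrm{sym}}\rho_k\Pi_{\mathrm{sym}}$.

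\textbf{Step 2: proportionality.} Restrict $\rho_k$ to $\mathrm{Sym}^k(\mathcal H_d)$. By (i), this restriction commutes with the representation $V\mapsto V^{\otimes k}|_{\mathrm{Sym}^k}$. This representation is irreducible, and Schur's lemma then gives $\rho_k=c\,\Pi_{\mathrm{sym}}$. Irreducibility holds because the symmetric subspace is spanned by the product vectors $\ket{\phi}^{\otimes k}$, and $U(d)$ acts transitively on them.

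An alternative route is to invoke Lemma~\ref{lem: haar_random} to write $\rho_k=\sum_\sigma c_\sigma W_\sigma$, and then use $W_\pi\rho_k=\rho_k$ to force all $c_\sigma$ to be equal. This only works directly when the $W_\sigma$ are linearly independent, i.e.\ when $d\ge k$. For this reason I prefer the Schur-lemma argument, which handles all $d$ uniformly. This step, justifying proportionality without the $d\ge k$ restriction, is the only genuine obstacle.

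\textbf{Step 3: normalization.} Taking traces gives $1=\Tr\rho_k=c\,\Tr\Pi_{\mathrm{sym}}$. The symmetric subspace has dimension $\binom{d+k-1}{k}$, which counts multisets of size $k$ drawn from $d$ basis labels. Therefore
\begin{equation}
\rho_k=\frac{1}{\binom{d+k-1}{k}}\cdot\frac{1}{k!}\sum_{\sigma\in S_k}W_\sigma=\frac{(d-1)!}{(d+k-1)!}\sum_{\sigma\in S_k}W_\sigma=\frac{1}{d(d+1)\cdots(d+k-1)}\sum_{\sigma\in S_k}W_\sigma,
\end{equation}
which is the claimed formula.
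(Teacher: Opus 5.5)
Your Steps 1 and 3 are correct, and the overall strategy (invariance, support on $\mathrm{Sym}^k$, Schur's lemma, trace normalization) is the standard one; the paper gives no proof of its own and simply cites Harrow's notes on the symmetric subspace. The gap is in Step 2, which you identify as the crux: your reason for irreducibility is not valid. That $\mathrm{Sym}^k(\mathcal H_d)$ is spanned by a single $U(d)$-orbit $\{\ket{\phi}^{\otimes k}\}$ shows only that the representation is \emph{cyclic}, not that it is irreducible. Two counterexamples: the regular representation of a finite group is spanned by the orbit of one basis vector, and $U(1)$ acting on $\mathbb C^2$ by $\mathrm{diag}(e^{i\theta},e^{2i\theta})$ has the orbit of $(1,1)$ spanning $\mathbb C^2$. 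Both are reducible. For your argument to work you would need that every nonzero invariant subspace of $\mathrm{Sym}^k$ contains some $\ket{\phi}^{\otimes k}$. That is the actual content of the irreducibility theorem (usually obtained from highest-weight theory or Schur--Weyl duality), and you have not shown it. Citing irreducibility as a known fact would close the gap, but as written the step is unsupported.

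There are two easy repairs. First, the alternative route you dismissed works for every $d$, because linear independence of the $W_\sigma$ is never needed. Lemma~\ref{lem: haar_random} gives $\rho_k=\sum_\sigma c_\sigma W_\sigma$ for some (possibly non-unique) coefficients. Since $\sigma\mapsto W_\sigma$ is a representation of $S_k$, you have $\Pi_{\mathrm{sym}}W_\sigma=\Pi_{\mathrm{sym}}$ for each $\sigma$. Combined with your property (ii), $\Pi_{\mathrm{sym}}\rho_k=\rho_k$, this yields
\[
\rho_k=\Pi_{\mathrm{sym}}\rho_k=\left(\sum_{\sigma\in S_k}c_\sigma\right)\Pi_{\mathrm{sym}},
\]
and your Step 3 finishes the proof.

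Second, you can avoid Schur--Weyl duality by using your two observations in the right places. Invariance (i) plus transitivity of $U(d)$ on unit vectors give $\bra{\phi}^{\otimes k}\rho_k\ket{\phi}^{\otimes k}=c$ for all unit $\phi$. Hence $T:=\rho_k-c\,\Pi_{\mathrm{sym}}$ satisfies $\bra{\phi}^{\otimes k}T\ket{\phi}^{\otimes k}=0$ for all $\phi\in\mathcal H_d$. The map $(\chi,\phi)\mapsto\bra{\chi}^{\otimes k}T\ket{\phi}^{\otimes k}$ is a polynomial in $(\bar\chi,\phi)$ that vanishes on the diagonal $\chi=\phi$, so it vanishes identically. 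This is polarization: after substituting $z=x+iy$, a polynomial in $(\bar z,z)$ vanishing for all $z\in\mathbb C^d$ is the zero polynomial. Since the $\ket{\phi}^{\otimes k}$ span $\mathrm{Sym}^k$ and $T=\Pi_{\mathrm{sym}}T\Pi_{\mathrm{sym}}$, this gives $T=0$.
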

Similarly we have the definition of state $k$-design.
\begin{definition}[state $k$-design]
	Let $\nu$ be a probability distribution over a set of states $S\subseteq \mathbb{C}^d$. The distribution $\nu$ is said to be a state $k$-design (or also spherical $k$-designs) if and only if:
	\begin{equation}
		\mathop{\mathbb E}_{\psi\sim\nu}\left[\ketbra{\psi}{\psi}^{\otimes k}\right]=\mathop{\mathbb E}_{\psi\sim\mathrm{Haar}}\left[\ketbra{\psi}{\psi}^{\otimes k}\right].
	\end{equation}
\end{definition}

\section{Variance and Entanglement}
We present a more specific and complete version of Theorem 1 in the main text. First we give one Lemmas.
\begin{fact}
    \label{lem: 1-design}
    $\cE(\psi_0)$ is an 1-design ensemble.
\end{fact}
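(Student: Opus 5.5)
To show that $\cE(\psi_0)$ is a 1-design ensemble, I will compute the first moment $\bE_{\ket{\psi}\sim\cE(\psi_0)}\ketbra{\psi}{\psi}$ and show it equals $I/d$. By Lemma \ref{lem:haar_random} with $k=1$, this is exactly the Haar first moment, $\frac{1}{d}W_{\mathrm{id}}=I/d$.

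Write $U=\bigotimes_{i=1}^N U_i$, with the $U_i$ independent single-qubit Haar (or 1-design) unitaries. Expand $\ketbra{\psi_0}{\psi_0}$ in the Pauli basis:
\[
\ketbra{\psi_0}{\psi_0}=\frac1d\sum_{P\in\cP_N}\bra{\psi_0}P\ket{\psi_0}\,P ,
\]
where each $P=\bigotimes_i P_i$. Because the $U_i$ are independent and act on different tensor factors, the average factorizes:
\[
\bE_U\, U P U^\dagger=\bigotimes_i \bE_{U_i}\, U_iP_iU_i^\dagger .
\]
Apply Lemma \ref{lem: haar_random} with $k=1$ and $d=2$ to each factor. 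This gives $\bE\, U_iP_iU_i^\dagger=\frac{\Tr(P_i)}{2}I_2$, which is $I_2$ if $P_i=I_2$ and $0$ otherwise. The 1-design property of $U_i$ is all that is needed here.

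Consequently, every non-identity Pauli string averages to zero. Only the $P=I$ term survives, with coefficient $\bra{\psi_0}I\ket{\psi_0}/d=1/d$. Hence $\bE\ketbra{\psi}{\psi}=I/d$, which is the Haar first moment, and $\cE(\psi_0)$ is a state 1-design.

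There is no real obstacle. The only step needing care is justifying the factorization of the expectation over the product measure, and that follows from independence and linearity of the tensor product. As a consequence, $\bE\, s_E(\psi)=\Tr(E^\dagger E)/d=\|E\|_F^2$, as claimed in the main text.
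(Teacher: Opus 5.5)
Your proof is correct. The paper states this Fact without proof, treating it as the standard fact that a tensor product of independent single-qubit 1-designs is a 1-design. Your argument (factorize the twirl over qubits, use $\bE\,U_iP_iU_i^\dagger=\Tr(P_i)I_2/2$ so only the identity Pauli string survives) is the first-moment analogue of the per-qubit computation the paper carries out at second order in the proof of Lemma~\ref{lem: 2ndmoment}, and it gives the consequence $\bE\,s_E(\psi)=\|E\|_F^2$ that the paper uses.
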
 
\begin{lemma}\label{lem: 2ndmoment}
    The second moment under $\cE(\psi_0)$ of a operator $A\in\mathcal L(\mathcal H_{2^N})$ with decomposition $A=\sum_{\{i\}}a_{\{i\}}P_{\{i\}}$ is
    \begin{equation}
        \mathop{\bE}_{\kp\sim\cE(\psi_0)}\bra{\psi}A\kp^2=\left(\frac{\Tr(A)}{d}\right)^2+\sum_{\{i\}\ne\{0\}}a_{\{i\}}^2\Tr(\ketbra{\psi_0}{\psi_0}^{\otimes 2}\bigotimes_{r=1}^{N}\left(\frac{2(1-\delta_{i_r0})}{3}\mathbb F+\frac{4\delta_{i_r0}-1}{3}\mathbb I\right)),
    \end{equation}
     where $\{i\}$ represents an index string $(i_1i_2\dots i_N)$, and each $i_k=0,1,2,3$ represents $P_{i_k}=I,X,Y,Z$ respectively. 
\end{lemma}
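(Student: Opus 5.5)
The approach is to expand $A$ in the Pauli basis and reduce everything to single-qubit second-moment computations. This works because each $U_i$ acts on its own qubit and only needs to be a unitary 2-design.

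Write $\ket{\psi}=U\ket{\psi_0}$ with $U=\bigotimes_r U_r$. Then
\[
\bra{\psi}A\ket{\psi}^2=\Tr\!\left[(U^{\dagger}AU)^{\otimes 2}\,\ketbra{\psi_0}{\psi_0}^{\otimes 2}\right].
\]
Substituting $A=\sum_{\{i\}}a_{\{i\}}P_{\{i\}}$ gives a double sum over pairs of Pauli strings $\{i\},\{j\}$ of
\[
a_{\{i\}}a_{\{j\}}\,\Tr\!\left[\bigotimes_r \bE_{U_r}\!\left(U_r^\dagger P_{i_r}U_r\otimes U_r^\dagger P_{j_r}U_r\right)\ketbra{\psi_0}{\psi_0}^{\otimes 2}\right],
\]
where the expectation factorizes over qubits because the $U_r$ are independent. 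On a single qubit, the 2-design twirl $\bE_V (V^\dagger\otimes V^\dagger)(P\otimes Q)(V\otimes V)$ is evaluated with Lemma~\ref{lem: haar_random} for $k=2$, i.e.\ the Weingarten formula in terms of $\mathbb I$ and the swap $\mathbb F$. Solving the $2\times2$ linear system with $d=2$ gives
\[
\frac{\Tr(PQ)-\tfrac12\Tr P\,\Tr Q}{3}\,\mathbb F+\frac{\Tr P\,\Tr Q-\tfrac12\Tr(PQ)}{3}\,\mathbb I .
\]

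Next we evaluate this for the four cases of single-qubit Paulis. If $P=Q=I$, the twirl is $I\otimes I$, and the formula indeed gives $\tfrac{4}{3}\mathbb I-\tfrac{1}{3}\mathbb I=\mathbb I$. If $P=Q\neq I$, then $\Tr(PQ)=2$ and $\Tr P=0$, so the twirl is $\tfrac{2}{3}\mathbb F-\tfrac13\mathbb I$. If $P\neq Q$, then $\Tr(PQ)=0$; if moreover at least one of them is non-identity, $\Tr P\,\Tr Q=0$, and the twirl vanishes. Consequently every cross term with $\{i\}\neq\{j\}$ has a factor equal to zero on some qubit, and only the diagonal terms $\{i\}=\{j\}$ survive. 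In those terms the per-qubit factor is exactly
\[
\frac{2(1-\delta_{i_r0})}{3}\mathbb F+\frac{4\delta_{i_r0}-1}{3}\mathbb I .
\]

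Finally, the $\{0\}$ string contributes $a_{\{0\}}^2\Tr(\ketbra{\psi_0}{\psi_0}^{\otimes2})=a_{\{0\}}^2$, and $a_{\{0\}}=\Tr(A)/d$. This gives the first term and completes the identity. The coefficients $a_{\{i\}}$ are real when $A$ is Hermitian, which is the relevant case $A=E^\dagger E$. For non-Hermitian $A$ the same computation goes through with $a^2$ understood as the product of the two coefficients, and only the diagonal terms still survive.

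The one step needing care is the vanishing of the cross terms when $P\neq Q$ are both non-identity. This requires that the twirl of $P\otimes Q$ equals the 2-design average; a genuine 2-design guarantees it, since the twirl depends only on $\Tr(PQ)$ and $\Tr P\,\Tr Q$. The obstacle is therefore mostly bookkeeping: the swap convention $\Tr(\mathbb F (X\otimes Y))=\Tr(XY)$, and checking that the tensor product of per-qubit $\mathbb F$/$\mathbb I$ factors pairs correctly with $\ketbra{\psi_0}{\psi_0}^{\otimes2}$. This pairing is what turns the result into reduced-state purities later on.
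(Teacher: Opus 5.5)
Your proposal is correct and follows essentially the same route as the paper. Both arguments expand $A$ in the Pauli basis, factorize the local twirl over qubits, and evaluate the single-qubit $k=2$ Weingarten formula at $d=2$, which kills every pair $\{i\}\neq\{j\}$ and leaves the per-qubit factor $\frac{2(1-\delta_{i_r0})}{3}\mathbb F+\frac{4\delta_{i_r0}-1}{3}\mathbb I$, with $a_{\{0\}}=\Tr(A)/d$ giving the first term. Your observation that a per-qubit unitary 2-design suffices matches the paper's closing remark.
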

\begin{proof}
    The second moment of $A$ is 
    \begin{equation}
        \begin{aligned}\mathop{\bE}_{\kp\sim\cE(\psi_0)}\bra{\psi}A\kp^2&=\mathop{\bE}_{\kp\sim\cE(\psi_0)}\Tr(\ketbra{\psi}{\psi}^{\otimes 2}A^{\otimes 2})\\
        &=\mathop{\bE}_{U_1\sim\mathrm{Haar}}\cdots\mathop{\bE}_{U_N\sim\mathrm{Haar}}\Tr(\left(\bigotimes_{r=1}^NU_r\right)^{\otimes 2}\ketbra{\psi_0}{\psi_0}^{\otimes 2}\left(\bigotimes_{r=1}^NU_r^\dagger\right)^{\otimes 2}A^{\otimes 2})\\
        &=\Tr(\ketbra{\psi_0}{\psi_0}^{\otimes 2}\left(\mathop{\bE}_{U_1\sim\mathrm{Haar}}\cdots\mathop{\bE}_{U_N\sim\mathrm{Haar}}\left(\bigotimes_{r=1}^NU_r^\dagger\right)^{\otimes 2}A^{\otimes 2}\left(\bigotimes_{r=1}^NU_r\right)^{\otimes 2}\right))
        \end{aligned},
    \end{equation}
    so we only need to discuss the expectation term in the trace. From
    \begin{equation}
         A = \sum_{\{i\}}a_{\{i\}}P_{\{i\}}= \sum_{i_1,i_2,\dots,i_N}a_{i_1,i_2,\dots,i_N}P_{i_1}\otimes P_{i_2}\otimes \cdots P_{i_N} \label{equ: paulidecom}
    \end{equation}
    we have
    \begin{equation}
        A^{\otimes 2} = \sum_{\{i\}}\sum_{\{j\}}a_{\{i\}}a_{\{j\}}P_{i_1}\otimes P_{i_2} \otimes\cdots P_{i_N}\otimes P_{j_1}\otimes P_{j_2}\otimes\cdots P_{i_N}.
    \end{equation}
    The the expectation term becomes
    \begin{equation}
        \mathop{\bE}_{U_1\sim\mathrm{Haar}}\cdots\mathop{\bE}_{U_N\sim\mathrm{Haar}}\left(\bigotimes_{r=1}^NU_i^\dagger\right)^{\otimes 2}A^{\otimes 2}\left(\bigotimes_{r=1}^NU_i\right)^{\otimes 2}=\sum_{\{i\}}\sum_{\{j\}}a_{\{i\}}a_{\{j\}}\bigotimes_{r=1}^{N}\left(\mathop{\bE}_{U_r\sim \mathrm{Haar}}U_r^{\dagger\otimes 2}(P_{i_r}\otimes P_{j_r})U_r^{\otimes 2}\right).
    \end{equation}
    By Lemma \ref{lem: haar_random}, we have
    \begin{align}
        \mathop{\bE}_{U\sim\mathrm{Haar}}[(U^\dagger)^{\otimes 2}(P_i\otimes P_j)U^{\otimes 2}]=\frac{d\Tr(P_iP_j)-\Tr(P_i)\Tr(P_j)}{d(d^2-1)}\mathbb F + \frac{d\Tr(P_j)\Tr(P_i)-\Tr(P_iP_j)}{d(d^2-1)}\mathbb I\label{equ: 2design},
    \end{align}
    where $d=2$, $\mathbb F$ represents the two-qubit SWAP gate and $\mathbb I$ is the two-qubit identity. For two single-qubit Pauli operators $P_i,P_j$,
    \begin{equation}
        \Tr(P_i)=2\delta_{i0},\quad\Tr(P_j)=2\delta_{j0}, \quad\Tr(P_iP_j)=2\delta_{ij}.
    \end{equation}
    Then Eq.~\eqref{equ: 2design} becomes 
    \begin{align}
        \mathop{\bE}_{U\sim\mathrm{Haar}}[(U^\dagger)^{\otimes 2}(P_i\otimes P_j)U^{\otimes 2}]=\frac{2}{3}(\delta_{ij}-\delta_{i0}\delta_{j0})\mathbb F+\frac{1}{3}(4\delta_{i0}\delta_{j0}-\delta_{ij})\mathbb I.
    \end{align}
    Therefore, if $i\ne j$, then the second moment in Eq.~\eqref{equ: 2design} becomes $0$. According to this result, if we are interested in the second moment of $A$, the only terms we need to consider are $\sum_{\{i\}} a_{\{i\}}^2P_{\{i\}}^{\otimes 2}$. When $\{i\}=(00\dots0):=\{0\}$, we have $a_{00\dots0}=\Tr(A)/d$, then the moment becomes
    \begin{align}
        \mathop{\bE}_{U_1\sim\mathrm{Haar}}\cdots\mathop{\bE}_{U_N\sim\mathrm{Haar}}\left(\bigotimes_{r=1}^NU_r^\dagger\right)^{\otimes 2}A^{\otimes 2}\left(\bigotimes_{r=1}^NU_r\right)^{\otimes 2}=\left(\frac{\Tr(A)}{d}\right)^2I^{\otimes 2}_{2^N}+\sum_{\{i\}\ne \{0\}}a_{\{i\}}^2\bigotimes_{r=1}^{N}\left(\frac{2(1-\delta_{i_r0})}{3}\mathbb F+\frac{4\delta_{i_r0}-1}{3}\mathbb I\right).\label{equ: commoncase}
    \end{align}
    Therefore, the second moment of $A$ under $\cE(\psi_0)$ is
    \begin{align}
        \mathop{\bE}_{\kp\sim\cE(\psi_0)}\bra{\psi}A\kp^2&=\Tr(\ketbra{\psi_0}{\psi_0}\left(\mathop{\bE}_{U_1\sim\mathrm{Haar}}\cdots\mathop{\bE}_{U_N\sim\mathrm{Haar}}\left(\bigotimes_{r=1}^NU_r^\dagger\right)^{\otimes 2}A^{\otimes 2}\left(\bigotimes_{r=1}^NU_r\right)^{\otimes 2}\right))\\
        &=\left(\frac{\Tr(A)}{d}\right)^2+\sum_{\{i\}\ne\{0\}}a_{\{i\}}^2\Tr(\ketbra{\psi_0}{\psi_0}^{\otimes 2}\bigotimes_{r=1}^{N}\left(\frac{2(1-\delta_{i_r0})}{3}\mathbb F+\frac{4\delta_{i_r0}-1}{3}\mathbb I\right)).
    \end{align}
    \textbf{Remark}: Actually this Lemma only needs that each $U_i$ is selected from a unitary 2-design ensemble to meet Eq.~\eqref{equ: 2design}.
\end{proof}
Now we give the proof of Theorem 1 in the main text, restated as follows.
\begin{theorem}\label{theo:var-ent}
    For a given $n$-qubit state $\ket{\psi_0}$, we consider the main term of Trotter error $E=\sum E_j$ and the ensemble $\cE(\psi_0)=\{(\bigotimes_{r=1}^nU_i)\ket{\psi_0}\}$, where each $U_i$ is a random gate of one qubit chosen over a unitary 2-design, then the variance of $s_E(\psi)=\bra{\psi}E^\dagger E\ket{\psi}$ is bounded by the trace distance between the reduced density matrix of $\ket{\psi_0}$ on some subsystem indexed by $j,j'$ and the maximally mixed state $I/d_{jj'}$,
    \begin{align}
        \mathop{\mathrm{Var}}_{\ket{\psi}\sim\cE(\psi_0)}[s_E(\psi)] \leq \sum_{j\le j'}a_{jj'}\Tr\left|\rho_{jj'}-\frac{I}{d_{jj'}}\right|.
    \end{align}
    One can also use the entanglement entropy $S(\rho)=-\Tr[\rho\log(\rho)]$ to rewrite the bound as
    \begin{align}
        \mathop{\mathrm{Var}}_{\ket{\psi}\sim\cE(\psi_0)}[s_E(\psi)] \leq \sum_{j\le j'}a_{jj'}\sqrt{2\log(d_{jj'})-2S(\rho_{jj'})}.
    \end{align}
    which indicates that the upper bound of the variance of the squared Trotter error decreases with the increasing of entanglement of the initial state.
    The coefficient $a_{jj'}=2\Tr(\widetilde{E_{jj'}}\widetilde{E^\dagger E})/d$, where $\widetilde A:=\sum_{P\in\cP_N}\abs{\Tr(PA)}P/d$ and
    \begin{equation}
        E_{jj'}:=\frac{E_{j'}^\dagger E_j+E_j^\dagger E_{j'}-\Tr(E_{j'}^\dagger E_j+E_j^\dagger E_{j'})I/d}{1+\delta_{jj'}}.\label{equ: Ejj'}
    \end{equation}
    We define $\supp(jj')=\supp(E_{jj'})$, $\rho_{jj'}=\Tr_{[N]\setminus\supp(jj')}(\ketbra{\psi_0}{\psi_0})$ and $d_{jj'}=2^{\abs{\supp(jj')}}$.
\end{theorem}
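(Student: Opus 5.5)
We would start by writing $E^\dagger E=\sum_{j\le j'}(1+\delta_{jj'})\bigl(E_{jj'}\bigr)+\Tr(E^\dagger E)I/d$, which holds by the definition of $E_{jj'}$ in Eq.~\eqref{equ: Ejj'}. Each $E_{jj'}$ is traceless, Hermitian, and supported on $\supp(jj')$. Since $\cE(\psi_0)$ is a 1-design (Fact~\ref{lem: 1-design}), $\bE\, s_E(\psi)=\Tr(E^\dagger E)/d$. The variance is therefore $\bE\,\bra{\psi}A\kp^2$ with $A:=E^\dagger E-\Tr(E^\dagger E)I/d$, and $A$ is traceless. We then apply Lemma~\ref{lem: 2ndmoment} to $A$. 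The constant term vanishes, and we are left with
\begin{equation*}
\mathrm{Var}=\sum_{P\neq I}a_P^2\,\Tr\Bigl(\ketbra{\psi_0}{\psi_0}^{\otimes 2}\bigotimes_{r}\bigl(\tfrac{2(1-\delta_{i_r0})}{3}\mathbb F+\tfrac{4\delta_{i_r0}-1}{3}\mathbb I\bigr)\Bigr),
\end{equation*}
where $a_P=\Tr(PA)/d$.

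The next step is to rewrite each summand. On qubits with $i_r=0$ the factor is $\mathbb I$, so the trace reduces to the marginal on $\supp(P)$. For a single qubit, $\tfrac23\mathbb F-\tfrac13\mathbb I$ equals $\tfrac13\sum_{Q\in\{X,Y,Z\}}Q\otimes Q$, using $\mathbb F=\tfrac12\sum_Q Q\otimes Q$. Hence the summand for $P$ equals $3^{-|P|}\sum_{Q:\supp(Q)=\supp(P)}\langle Q\rangle_{\psi_0}^2$, i.e., the average of $\langle Q\rangle^2$ over the Paulis $Q$ with the same support as $P$. The key bound is $\langle Q\rangle^2\le|\langle Q\rangle|$. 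Moreover, for $Q$ supported in a region $R$, $|\langle Q\rangle|=|\Tr(Q(\rho_R-I/d_R))|\le\Tr|\rho_R-I/d_R|$, because $Q$ is traceless with unit operator norm. This gives the trace distance of the marginal from maximally mixed, and by monotonicity of the trace norm under partial trace any larger region containing $\supp(P)$ also works.

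To assign terms to pairs $(j,j')$, we expand $a_P^2=a_P\cdot\sum_{j\le j'}(1+\delta_{jj'})\Tr(PE_{jj'})/d$ and bound $|a_P|\cdot|\Tr(PE_{jj'})|$. Every $P$ with $\Tr(PE_{jj'})\neq0$ lies inside $\supp(jj')$, so it is controlled by $\Tr|\rho_{jj'}-I/d_{jj'}|$. Summing over $P$ gives a coefficient $\sum_P|\Tr(PA)||\Tr(PE_{jj'})|/d^2$. This is $\Tr(\widetilde{E_{jj'}}\widetilde{E^\dagger E})/d$, up to the factor $(1+\delta_{jj'})\le2$, which accounts for the $2$ in $a_{jj'}$; replacing $A$ by $E^\dagger E$ is harmless for $P\neq I$. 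Finally, Pinsker's inequality $\Tr|\rho-I/d|\le\sqrt{2D(\rho\|I/d)}=\sqrt{2(\log d-S(\rho))}$ yields the entropic form.

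The main obstacle is the bookkeeping in the last step. We need the Pauli-expansion coefficients to match the stated $\widetilde{\cdot}$ normalization, with its factors of $d$, and we must make sure the averaging over $Q$ with the same support does not spoil the bound. It does not, since each $\langle Q\rangle^2$ is bounded uniformly by the same trace distance.
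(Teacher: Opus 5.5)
Your argument is correct, but it controls each summand by a different route from the paper. Both proofs use the same second-moment formula and the same reorganisation of $E^\dagger E$ into traceless blocks $E_{jj'}$. They differ in how the summand $\Tr(\ketbra{\psi_0}{\psi_0}^{\otimes 2}\mathcal F_{\supp(P)}\otimes I)$ is bounded.

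\textbf{The paper's route.} It subtracts $(I/d_{jj'})^{\otimes 2}$ using $\Tr\mathcal T=0$ and applies H\"older with $\|\mathcal T\|=1$. It then bounds the double-copy distance by $\|\rho^{\otimes 2}-(I/d)^{\otimes 2}\|_1\le 2\|\rho-I/d\|_1$, and this step is the only source of the factor $2$ in $a_{jj'}$.

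\textbf{Your route.} You expand $\tfrac23\mathbb F-\tfrac13\mathbb I=\tfrac13(X\otimes X+Y\otimes Y+Z\otimes Z)$, which writes each summand as an average of squared Pauli expectations. That makes it manifestly nonnegative, so you may move absolute values onto the coefficients. You then bound it by a single-copy trace distance and never need the double-copy step.

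\textbf{One correction.} Your starting identity is wrong. The $1/(1+\delta_{jj'})$ in the definition of $E_{jj'}$ already compensates for the diagonal terms, so the correct identity is $E^\dagger E=\|E\|_F^2 I+\sum_{j\le j'}E_{jj'}$, with no factor $(1+\delta_{jj'})$. Your bookkeeping therefore yields the coefficient $\tfrac1d\Tr(\widetilde{E_{jj'}}\widetilde{E^\dagger E})=a_{jj'}/2$. The stated inequality still holds, with a factor of $2$ to spare, and the $2$ in $a_{jj'}$ is not explained by any $(1+\delta_{jj'})$.

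\textbf{What your route buys.} You have $\langle Q\rangle^2\le\min\{1,\|\rho_R-I/d_R\|_1^2\}$, so Pinsker gives a bound linear in $\log d_{jj'}-S(\rho_{jj'})$ rather than its square root. The paper's H\"older argument treats $\mathcal T$ only as a unit-norm operator and cannot recover this.
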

\begin{proof}
    From Lemma \ref{lem: 1-design}, $\cE(\psi_0)$ is 1-design, so the expectation is \cite{zhao_hamiltonian_2022}
    \begin{equation}
        \mathop{\bE}_{\ket{\psi}\sim\cE(\psi_0)} s_E(\psi)=\|E\|_F^2.
    \end{equation} 
    We can decompose $E^\dagger E$ as 
    \begin{equation}
        E^\dagger E=\norm{E}_F^2I_{2^N}+\sum_{\{s\}}b_{\{s\}}P_{\{s\}},\label{equ: paulidec}    
    \end{equation}
    then from Lemma \ref{lem: 2ndmoment}, the variance is
    \begin{align}
         \mathop{\mathrm{Var}}_{\ket{\psi}\sim\cE(\psi_0)}[s_E(\psi)] &= \mathop {\bE}_{\ket{\psi}\sim\cE(\psi_0)} \bra{\psi}E^\dagger E\ket{\psi}^2-\left(\mathop {\bE}_{\ket{\psi}\sim\cE(\psi_0)} \bra{\psi}E^\dagger E\ket{\psi}\right)^2\notag\\
         &=\left(\frac{\Tr(E^\dagger E)}{d}\right)^2+\sum_{\{s\}}b_{\{s\}}^2\Tr(\ketbra{\psi_0}{\psi_0}^{\otimes 2}\bigotimes_{r=1}^{N}\left(\frac{2(1-\delta_{s_r0})}{3}\mathbb F+\frac{4\delta_{s_r0}-1}{3}\mathbb I\right))-\|E\|_F^4\notag\\ &=\sum_{\{s\}}b_{\{s\}}^2\Tr(\ketbra{\psi_0}{\psi_0}^{\otimes 2}\bigotimes_{r=1}^{N}\left(\frac{2(1-\delta_{s_r0})}{3}\mathbb F+\frac{4\delta_{s_r0}-1}{3}\mathbb I\right))\notag\\
         &=\sum_{\{s\}}b_{\{s\}}^2\Tr(\ketbra{\psi_0}{\psi_0}^{\otimes 2}\mathcal F_{\supp(\{s\})}\otimes I_{[N]\setminus\supp(\{s\})}^{\otimes 2}),\label{equ: varresult}
    \end{align}
    where $\supp(\{s\}=(s_1s_2\dots s_N)):=\{k|s_k\ne0\}$ and $\mathcal F_{\supp(\{s\})}:=\bigotimes_{m\in\supp(\{s\})}\left(\frac{2\mathbb F_{m}-\mathbb I_{m}}{3}\right)$. Even though we have the varince formula,
    $b_{\{s\}}$ in Eq.~\eqref{equ: paulidec} does not explicitly depends on the local error terms $E_j$.

    In the following, we aim to bound the variance in terms of the local error terms $E_j$ in $E$. To this end, we make a new decomposition of $E^\dagger E$ compared to Eq.~\eqref{equ: paulidec} as
    \begin{equation}
        E^\dagger E=\|E\|_F^2I+\sum_{j\le j'}E_{jj'}\label{equ: reorg},
    \end{equation}
    where $E_{jj'}$ is traceless and Hermitian, defined as Eq.~\eqref{equ: Ejj'} from the error terms $E_j$. 
    Each $E_{jj'}$ consists of several Pauli strings,
    \begin{equation}
        E_{jj'}=\sum_{\{i\}_{jj'}}\alpha_{\{i\}_{jj'}}P_{\{i\}_{jj'}}=\sum_{\{i\}_{jj'}}\alpha_{\{i\}_{jj'}}\hat P_{\{i\}_{jj'}}\otimes I_{[N]\setminus\supp(\{i\}_{jj'})},\quad\alpha_{\{i\}_{jj'}} \in\mathbb R. \label{equ: paulistring}
    \end{equation}
    Here $\{i\}_{jj'}=(i_1i_2\dots i_N)_{jj'}$ represents index strings occurring in $E_{jj'}$, and
    \begin{equation}
        \hat P_{\{i\}_{jj'}}:=\bigotimes_{k\in\supp(\{i\}_{jj'})}P_{i_k}=\frac{\Tr_{[N]\setminus\supp(\{i\}_{jj'})}(P_{\{i\}_{jj'}})}{2^{N-\left|\supp(\{i\}_{jj'}\right|}}
    \end{equation}
    the nontrivial Pauli part.
    Apparently one has $\supp(\{i\}_{jj'})\subseteq \supp(E_{jj'})$. 

    We can relate $b_{\{s\}}$ in Eq.~\eqref{equ: paulidec} with $\alpha_{\{i\}_{jj'}}$.
    Indeed, each $b_{\{s\}}$ is the summation of several $\alpha_{\{i\}_{jj'}}$ from different $E_{jj'}$, i.e. $\{i\}_{jj'}=\{i\}_{ll'}$ for some $j,j',l,l'$. As such, we can write
    \begin{equation}
        b_{\{s\}}=\sum_{j\le j'}\sum_{\{i\}_{jj'}}\alpha_{\{i\}_{jj'}}\mathscr{I}_{\{i\}_{jj'}=\{s\}},\label{equ: bs}
    \end{equation}
    where $\mathscr{I}_p=1/0$ iff $p$ is true/false to count all possible $\alpha_{\{i\}_{jj'}}$ in the summation. Substituting one $b_{\{s\}}$ in Eq.~\eqref{equ: varresult} by Eq.~\eqref{equ: bs}, one obtains
    \begin{equation}
        \begin{aligned}
            \mathop{\mathrm{Var}}_{\psi\sim\cE(\psi_0)}[s_E(\psi_0)]&=\sum_{\{s\}}\sum_{j\le j'}\sum_{\{i\}_{jj'}}b_{\{s\}}\alpha_{\{i\}_{jj'}}\mathscr{I}_{\{i\}_{jj'}=\{s\}}\Tr(\ketbra{\psi_0}{\psi_0}^{\otimes 2}\mathcal F_{\supp(\{s\})}\otimes I_{[N]\setminus\supp(\{s\})}^{\otimes 2})\\
            &=\sum_{j\le j'}\sum_{\{i\}_{jj'}}b_{\{i\}_{jj'}}\alpha_{\{i\}_{jj'}}\Tr(\ketbra{\psi_0}{\psi_0}^{\otimes 2}\mathcal F_{\supp(\{i\}_{jj'})}\otimes I_{[N]\setminus\supp(\{i\}_{jj'})}^{\otimes 2})\\
            &=\sum_{j\le j'}\sum_{\{i\}_{jj'}}b_{\{i\}_{jj'}}\alpha_{\{i\}_{jj'}}\Tr(\rho_{jj'}^{\otimes 2}\mathcal F_{\supp(\{i\}_{jj'})}\otimes I_{\supp(jj')\setminus\supp(\{i\}_{jj'}) }^{\otimes 2})
        \end{aligned}\notag
    \end{equation}
    where in the last equality we trace out the qubit outside $\supp(jj')$, and again note that $\supp(\{i\}_{jj'})\subseteq\supp(jj')$.
    For simplicity, we denote $\mathcal T_{\{i\}_{jj'}}:=\mathcal F_{\supp(\{i\}_{jj'})}\otimes I_{\supp(jj')\setminus\supp(\{i\}_{jj'})}$, and have
    \begin{equation}
        \begin{aligned}
            \mathop{\mathrm{Var}}_{\psi\sim\cE(\psi_0)}[s_E(\psi_0)]&=\sum_{j\le j'}\sum_{\{i\}_{jj'}}b_{\{i\}_{jj'}}\alpha_{\{i\}_{jj'}}\Tr(\rho_{jj'}^{\otimes 2}\mathcal T_{\{i\}_{jj'}})\\
            &=\sum_{j\le j'}\sum_{\{i\}_{jj'}}b_{\{i\}_{jj'}}\alpha_{\{i\}_{jj'}}\left(\Tr[\left(\rho_{jj'}^{\otimes 2}-\frac{I_{jj'}^{\otimes 2}}{d_{jj'}}\right)\mathcal T_{\{i\}_{jj'}}]+\cancel{\frac{1}{d_{jj'}}\Tr(\mathcal T_{\{i\}_{jj'}})}\right)\\
            &\le\sum_{j\le j'}\sum_{\{i\}_{jj'}}\abs{b_{\{i\}_{jj'}}\alpha_{\{i\}_{jj'}}}\abs{\Tr[\left(\rho_{jj'}^{\otimes 2}-\frac{I_{jj'}^{\otimes 2}}{d_{jj'}}\right)\mathcal T_{\{i\}_{jj'}}]}
        \end{aligned}
    \end{equation}
    where the last equality is because $\Tr(\mathcal T_{\{i\}_{jj'}})\propto\Tr(2\mathbb F-\mathbb I)=0$ and $\supp{\{i\}_{jj'}}\neq \emptyset$. In addition, by Hölder's inequality,
    \begin{equation}
        \abs{\Tr[\left(\rho_{jj'}^{\otimes 2}-\frac{I_{jj'}^{\otimes 2}}{d_{jj'}}\right)\mathcal T_{\{i\}_{jj'}}]}\le\norm{\mathcal T_{\{i\}_{jj'}}}\Tr\abs{\rho_{jj'}^{\otimes 2}-\frac{I_{jj'}^{\otimes 2}}{d_{jj'}}},
    \end{equation}
    where $\norm{\cdot}$ is the spectral norm. For two qubits, $\left \|(2\mathbb F-\mathbb I)/3\right\|=\norm{\mathbb I}=1$. As such,
    \begin{equation}
        \norm{\mathcal T_{\{i\}_{jj'}}}=\prod_{k\in\supp(\{i\}_{jj'})}\norm{\frac{2\mathbb F_k-\mathbb I_k}{3}}\times\prod_{l\in\supp(jj')\setminus\supp(\{i\}_{jj'})}\norm{\mathbb I_l}=1.
    \end{equation}
    The trace distance on the double-copy can be bounded by the triangle inequality as
    \begin{align}
        \Tr\left|\rho^{\otimes 2}_{jj'}-\frac{I_{jj'}^{\otimes 2}}{d_{jj'}^2}\right|&=\left\|\rho^{\otimes 2}_{jj'}-\rho_{jj'}\otimes \frac{I_{jj'}}{d_{jj'}}+\rho_{jj'}\otimes \frac{I_{jj'}}{d_{jj'}}-\frac{I_{jj'}^{\otimes 2}}{d_{jj'}^2}\right\|_1\notag\\
         &\leq\|\rho_{jj'}\|_1\left\|\rho_{jj'}-\frac{I_{jj'}}{d_{jj'}}\right\|_1+\left\|\frac{I_{jj'}}{d_{jj'}}\right\|_1\left\|\rho_{jj'}-\frac{I_{jj'}}{d_{jj'}}\right\|_1\quad\text{(triangle inequality)}\notag\\
         &=2\Tr\left|\rho_{jj'}-\frac{I_{jj'}}{d_{jj'}}\right|.\notag
    \end{align}
    
    Consequently, we have
    \begin{align}\label{equ: trdis}
        \mathop{\mathrm{Var}}_{\kp\sim\cE(\psi_0)}{[s_E(\psi)]}\le\sum_{j<j'}\left(\sum_{\{i\}_{jj'}}2\abs{b_{\{i\}_{jj'}}\alpha_{\{i\}_{jj'}}}\right)\Tr\abs{\rho_{jj'}-\frac{I_{jj'}}{d_{jj'}}}=\sum_{j<j'}a_{jj'}\Tr\abs{\rho_{jj'}-\frac{I_{jj'}}{d_{jj'}}}.
    \end{align}
    Here we denote the coefficient $\sum_{\{i\}_{jj'}}2\abs{b_{\{i\}_{jj'}}\alpha_{\{i\}_{jj'}}}$ as $a_{jj'}$. In the following, we relate $a_{jj'}$ to the error terms as stated in Theorem.
    \begin{equation}
        \begin{aligned}
            a_{jj'}&=2\sum_{\{i\}_{jj'}}\sum_{\{s\}}\abs{b_{\{s\}}}\abs{a_{{\{i\}}_{jj'}}}\mathscr{I}_{\{i\}_{jj'}=\{s\}}\\
            &=\frac{2}{d}\sum_{\{i\}_{jj'}}\sum_{\{s\}}\abs{b_{\{s\}}}\abs{a_{\{i\}_{jj'}}}\Tr(P_{\{i\}_{jj'}}P_{\{s\}})\\
            &=\frac{2}{d}\Tr[\left(\sum_{\{i\}_{jj'}}\abs{a_{\{i\}_{jj'}}}P_{\{i\}_{jj'}}\right)\left(\sum_{\{s\}}\abs{b_{\{s\}}}P_{\{s\}}\right)]\\
            &=\frac{2}{d}\Tr(\widetilde{E_{jj'}}\widetilde{E^\dagger E})
        \end{aligned}
    \end{equation}
    where $\widetilde A:=\sum_{P\in\cP_N}\abs{\Tr(AP)}P/d, \forall A\in\mathcal L(\mathcal H_d)$. 
    
    Moreover, the trace distance of $\rho_{jj'}$ and $I/d_{jj'}$ can be further bounded by the relative entropy as
    \begin{align}
        \Tr\left|\rho_{jj'}-\frac{I_{jj'}}{d_{jj'}}\right|\leq\sqrt{2S(\rho_{jj'}\|\frac{I_{jj'}}{d_{jj'}})}=\sqrt{2\log{d_{jj'}}-2S(\rho_{jj'})}\label{equ: relative entropy}.
    \end{align}
    From Eq.~\eqref{equ: trdis} and Eq.~\eqref{equ: relative entropy}, the bound given by the entropy is
    \begin{equation}
        \mathop{\mathrm{Var}}_{\ket{\psi}\sim\cE(\psi_0)}[s_E(\psi)]\leq \sum_{j\le j'}a_{jj'}\sqrt{2\log{d_{jj'}}-2S(\rho_{jj'})}.
    \end{equation}
\end{proof}
\section{Kurtosis and Magic}
To characterize the statistical properties of states with identical magic, we apply random Clifford gates to the initial state. Given that the random Clifford group forms a unitary 3-design, the first three moments of $\bra{\psi}E^\dagger E\ket{\psi}$ can be computed directly using Haar-random results for any given error term $E$, independent of the initial state $\psi_0$.Therefore, the fourth moment is of particular interest. We compute it as presented in Lemma \ref{lem:4moment}.
\begin{lemma}\label{lem:4moment}
    The 4th moment of $s_E(\psi)$ under $\cM(\psi_0)$ is 
    \begin{equation}
        \mathop{\bE}_{\ket{\psi}\in\cM(\psi_0)} s_E(\psi)^4= \frac{24d(B-A)+6(1-M)[(d^2+3d)A-4B]}{4d(d-1)(d+1)(d+2)(d+4)},\label{equ:4moment}
    \end{equation}
    where
    \begin{equation}
        \begin{aligned}
            A = \Tr[\Pi_4^{sym}(E^{\dagger} E)^{\otimes 4} Q],\quad
            B = \Tr[\Pi_4^{sym}(E^{\dagger} E)^{\otimes 4}],\quad
            Q = \frac{1}{d^2}\sum_{P\in\mathcal P_N}P^{\otimes 4}.
        \end{aligned}\label{equ: ABQ}
    \end{equation}
    Here $\Pi_4^{sym}$ is the projector onto the totally symmetric space $\mathrm{Sym}_4(\mathbb C^d)$ and $M$ is the magic of the state $\ket{\psi_0}$.
\end{lemma}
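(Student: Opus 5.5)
We write $s_E(\psi)^4=\Tr[\ketbra{\psi}{\psi}^{\otimes 4}(E^\dagger E)^{\otimes 4}]$, so that
\[
\mathop{\bE}_{\ket{\psi}\sim\cM(\psi_0)} s_E(\psi)^4=\Tr\Big[\mathop{\bE}_{C}\big(C\ketbra{\psi_0}{\psi_0}C^\dagger\big)^{\otimes 4}\,(E^\dagger E)^{\otimes 4}\Big].
\]
The task is therefore to compute the fourth-moment twirl of $\ketbra{\psi_0}{\psi_0}^{\otimes4}$ over the Clifford group. The key input is the structure of the Clifford commutant at $t=4$ (Zhu et al., Gross--Nezami--Walter). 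The operator $Q=\frac1{d^2}\sum_P P^{\otimes4}$ is a projector commuting with $C^{\otimes4}$ and with every permutation $W_\sigma$. Restricted to the symmetric subspace, which is where $\ketbra{\psi_0}{\psi_0}^{\otimes4}$ lives, the Clifford twirl splits $\mathrm{Sym}_4$ into two irreducible pieces, $\Pi_+:=\Pi_4^{sym}Q$ and $\Pi_-:=\Pi_4^{sym}(I-Q)$. The twirl of a state supported in $\mathrm{Sym}_4$ is then
\[
\mathop{\bE}_C (C\psi_0 C^\dagger)^{\otimes4}=\frac{\Tr(\psi_0^{\otimes4}Q)}{\Tr\Pi_+}\Pi_+ +\frac{1-\Tr(\psi_0^{\otimes4}Q)}{\Tr\Pi_-}\Pi_- .
\]

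Next I will evaluate each ingredient. First, $\Tr(\psi_0^{\otimes4}Q)=\frac1{d^2}\sum_P\bra{\psi_0}P\ket{\psi_0}^4=\frac{1}{d}P_2=\frac{1-M}{d}$, using the definition $M=M_2^{\mathrm{lin}}=1-P_2$ with $P_2=\frac1d\sum_P|\bra{\psi}P\ket{\psi}|^4$. Second, the dimensions: $\Tr\Pi_4^{sym}=\binom{d+3}{4}=\frac{d(d+1)(d+2)(d+3)}{24}$, and the standard value for qubits is $\Tr(\Pi_4^{sym}Q)=\frac{(d+1)(d+2)}{6}$, which follows because $Q$ projects onto a stabilizer-code space of dimension $d^2$ and one counts its symmetric part. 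This gives $\Tr\Pi_-=\frac{(d-1)d(d+1)(d+2)(d+4)}{24}$ after factoring $d(d+3)-4=(d-1)(d+4)$. This factorization is exactly where the denominator $(d-1)(d+1)(d+2)(d+4)$ in the statement comes from.

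I will then pair against $(E^\dagger E)^{\otimes4}$, which gives $\Tr[\Pi_+(E^\dagger E)^{\otimes4}]=A$ and $\Tr[\Pi_-(E^\dagger E)^{\otimes 4}]=B-A$. Substituting yields
\[
\frac{6(1-M)}{d(d+1)(d+2)}A+\Big(1-\frac{1-M}{d}\Big)\frac{24(B-A)}{(d-1)d(d+1)(d+2)(d+4)}.
\]
It remains to bring everything over the common denominator $4d(d-1)(d+1)(d+2)(d+4)$ and regroup into a $(1-M)$-independent part $24d(B-A)$ (up to normalization constants) plus $(1-M)$ times a combination of $(d^2+3d)A$ and $-4B$. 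This is routine algebra, but I expect matching the exact constants in Eq.~\eqref{equ:4moment} to require care, including possible factors of $d$ from the chosen normalization of $Q$.

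The main obstacle is justifying the two-block decomposition, namely that the Clifford twirl restricted to $\mathrm{Sym}_4$ has exactly the two irreducible components $\Pi_\pm$, and computing $\Tr(\Pi_4^{sym}Q)$. For both I intend to cite Zhu--Kueng--Grassl--Gross and Bittel et al. rather than rederive them. If the citation route is considered insufficient, I would verify $\Tr(\Pi_4^{sym}Q)$ directly by expanding $\Pi_4^{sym}=\frac1{24}\sum_\sigma W_\sigma$ and using $\Tr(W_\sigma P^{\otimes4})$, which depends only on the cycle type of $\sigma$ and on whether $P=I$.
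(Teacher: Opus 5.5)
Your approach is the paper's: the two-block form of the Clifford fourth-moment twirl on $\mathrm{Sym}_4$, $\Tr(\psi_0^{\otimes4}Q)=(1-M)/d$, and pairing with $(E^\dagger E)^{\otimes 4}$ to obtain $A$ and $B-A$. Your values of $\Tr(\psi_0^{\otimes 4}Q)$ and $\Tr(\Pi_+)=(d+1)(d+2)/6$ are correct. However, the one number that fixes the constants is wrong: $\Tr(\Pi_-)=\binom{d+3}{4}-\frac{(d+1)(d+2)}{6}=\frac{(d+1)(d+2)[d(d+3)-4]}{24}=\frac{(d-1)(d+1)(d+2)(d+4)}{24}$, with no factor of $d$. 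Your factorization remark gives exactly this, but your displayed value carries a spurious $d$. This is not cosmetic. Test your substituted expression on $E^\dagger E=I$, where $s_E\equiv 1$, $A=\Tr(\Pi_+)$ and $B-A=\Tr(\Pi_-)$: it returns $\frac{1-M}{d}+\frac{d-1+M}{d^2}\neq 1$. So the ``routine algebra'' you defer cannot reach any correct formula. The discrepancy also has nothing to do with the normalization of $Q$, which as written is a genuine rank-$d^2$ projector.

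With the correct value, set $D:=d(d-1)(d+1)(d+2)(d+4)$. Then $\beta_+=\frac{6(1-M)}{d(d+1)(d+2)}$ and $\beta_-=\frac{24(d-1+M)}{D}$, so $\beta_+D=6(1-M)(d^2+3d)-24(1-M)$ and $\beta_-D=24d-24(1-M)$. Hence $\beta_+A+\beta_-(B-A)$ regroups exactly into $\frac{24d(B-A)+6(1-M)[(d^2+3d)A-4B]}{D}$. The denominator is $D$, not $4D$; this is what the paper's own derivation ends with. The same $E^\dagger E=I$ check, where $(d^2+3d)A=4B$ and $24d(B-A)=D$, gives $1$ with denominator $D$ but $1/4$ with $4D$. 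The factor $4$ in the displayed statement is therefore a typo, and you should not steer your algebra toward the common denominator $4d(d-1)(d+1)(d+2)(d+4)$. Your hedge ``up to normalization constants'' hides exactly the two discrepancies (your $\Tr(\Pi_-)$ and the statement's $4$) that must be resolved to finish the proof.
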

\begin{proof}
    The 4th-moment function of  states in $\cM(\psi_0)$ is \cite{zhu2016clifford}
\begin{equation}
    \mathop{\bE}_{\ket{\psi}\sim \cM(\psi_0)}  \ketbra{\psi}{\psi}^{\otimes 4}=\beta_+Q\Pi^{sym}_4+\beta_-(1-Q)\Pi^{sym}_4
\end{equation}
where $Q$ and $\Pi^{sym}_4$ are defined in Eq.~\eqref{equ: ABQ}, and
\begin{equation}
    \begin{aligned}
        \beta_+&=\frac{6}{(d+1)(d+2)}\Tr[Q\Pi_4^{sym}(\ketbra{\psi}{\psi}^{\otimes 4})]\\
        \beta_-&=\frac{24}{(d-1)(d+1)(d+2)(d+4)}\left(1-\Tr[Q\Pi_4^{sym}(\ketbra{\psi}{\psi}^{\otimes 4})]\right)
    \end{aligned}\label{equ: betas}.
\end{equation}
By definition, we have
\begin{equation}
    \Pi_4^{sym}\ketbra{\psi}{\psi}^{\otimes 4}=\ketbra{\psi}{\psi}^{\otimes 4},
\end{equation}
then the trace in Eq.~\ref{equ: betas} becomes
\begin{equation}
    \Tr[Q\Pi_4^{sym}(\ketbra{\psi}{\psi}^{\otimes 4})]=\Tr[Q(\ketbra{\psi}{\psi}^{\otimes 4})]=\frac{1}{d^2}\sum_{P\in\cP_N}|\bra{\psi}P\ket{\psi}|^4=\frac{1}{d}(1-M).
\end{equation}
Here $M=1-\frac{1}{d}\sum_{P\in\cP_N}|\bra{\psi}P\ket{\psi}|^4$ is the magic of state $\ket{\psi}$. Then we have
\begin{equation}
    \mathop{\mathbb E}_{\ket{\psi}\in\cM(\psi_0)}\ket{\psi}\bra{\psi}^{\otimes 4}=\frac{6(1-M)(d-1)(d+4)-24(d-1-M)d}{d(d-1)(d+1)(d+2)(d+4)}Q\Pi_4^{sym}+\frac{24(d-1-M)}{(d-1)(d+1)(d+2)(d+4)}\Pi_4^{sym}.
\end{equation}
Thus, the 4th-moment of $s_E$ obtains
    \begin{equation}
        \begin{aligned}
            \mathop{\bE}_{\ket{\psi}\in \cM(\psi_0)}s_E(\psi)^4&=\mathop{\bE}_{\ket{\psi}\in\cM(\psi_0)}\Tr\left[(E^\dagger E)^{\otimes 4}\ket{\psi}\bra{\psi}^{\otimes 4}\right]=\Tr\left[(E^\dagger E)^{\otimes 4}\mathop{\bE}_{\ket{\psi}\in\cM(\psi_0)}\ket{\psi}\bra{\psi}^{\otimes 4}\right]\\
            &=\frac{6(d-1)(d+4)(1-M)-24(d-1-M)}{d(d^2-1)(d+2)(d+4)}A+\frac{24(d-1-M)}{d(d^2-1)(d+2)(d+4)}B\\
            &=6\times\frac{(d^2+3d)(1-M)A-4dA+4dB-4B(1-M_{lin})}{d(d^2-1)(d+2)(d+4)}\\
            &=\frac{24d(B-A)+6(1-M_{lin})\left[(d^2+3d)A-4B\right]}{d(d^2-1)(d+2)(d+4)},
        \end{aligned}
    \end{equation}
    where $A,B$ is defined in Eq.~\eqref{equ: ABQ}.
\end{proof}
Through Lemma \ref{lem:4moment}, we find that the fourth moment of $s_E(\psi)$ exhibits a linear dependence on magic for a given set of states with identical magic $M$. In practice, the error term $E$ we typically observe takes a form similar to $\sum a_jQ_j$, where each $Q_j$ is a local Pauli operator and each $a_j$ is a constant coefficient. Now by analyzing the sign of the dominant term in the coefficients of Eq.\eqref{equ:4moment}, we establish a negative linear relationship between the fourth moment of $s_E(\psi)$ and $M$ when $n$ is large. See Lemma \ref{lem:coeff}. Before presenting Lemma \ref{lem:coeff}, we first introduce an auxiliary result, Fact \ref{fact: unchange}.

\begin{fact}\label{fact: unchange}
For any $O\in\mathcal D(\mathcal H_d)$, we have 
\begin{equation}
	\sum_{P\in \cP_n}\Tr(OPOP)=d\Tr(O)^2
\end{equation}
\end{fact}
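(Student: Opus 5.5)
The identity follows from the Pauli twirl: summing $POP$ over all Paulis kills every traceless Pauli component of $O$ and keeps only the identity component. The argument uses only linearity, so it holds for every $O\in\mathcal L(\mathcal H_d)$, not just density matrices.

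First, expand $O$ in the Pauli basis as $O=\sum_{R\in\cP_n}c_R R$. Since $\Tr(RR')=d\,\delta_{RR'}$, the identity coefficient is $c_I=\Tr(O)/d$. Then
\begin{equation}
\sum_{P\in\cP_n}POP=\sum_{R}c_R\sum_{P}PRP .
\end{equation}
For Pauli strings, $PRP=\chi(P,R)\,R$ with $\chi(P,R)=\pm1$. The sign is $+1$ exactly when $P$ and $R$ commute. This uses $P^2=I$, which holds because $\cP_n$ is taken modulo phases with Hermitian representatives.

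Second, we show that $\sum_{P}\chi(P,R)=d^2\,\delta_{R,I}$. If $R=I$, every term equals $1$, and $|\cP_n|=4^n=d^2$. If $R\neq I$, the sum factorizes over qubits as $\prod_k\sum_{P_k\in\{I,X,Y,Z\}}\chi(P_k,R_k)$. For any qubit $k$ with $R_k\neq I$, that factor is $1+1-1-1=0$, since $R_k$ commutes with $I$ and itself and anticommutes with the other two. Equivalently, $\chi(\cdot,R)$ is a nontrivial character of the abelian group $\mathbb Z_2^{2n}$ and so sums to zero. Hence
\begin{equation}
\sum_{P\in\cP_n}POP=d^2c_I\,I=d\Tr(O)\,I .
\end{equation}

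Finally, multiply on the left by $O$ and take the trace: $\sum_P\Tr(OPOP)=d\Tr(O)\Tr(O\,I)=d\Tr(O)^2$. There is no real obstacle here. The only points needing care are the normalization ($|\cP_n|=d^2$ and $c_I=\Tr(O)/d$) and fixing Hermitian phase representatives in $\cP_n$, so that $P^2=I$ and the conjugation $PRP$ has no stray phases.
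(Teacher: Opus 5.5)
Your proof is correct, but it reaches the identity by a different route from the paper's.

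The paper works on two copies. It writes $\Tr(OPOP)$ as the tensor-network contraction $\Tr[(O\otimes O)(P\otimes P)\,\mathbb F]$ and sums over $P$ using $\frac{1}{d}\sum_{P\in\cP_n}P^{\otimes 2}=\mathrm{SWAP}$. It builds this qubit by qubit from the quoted identity $\frac12(I\otimes I+X\otimes X+Y\otimes Y+Z\otimes Z)=\mathrm{SWAP}$. The two swaps then cancel, leaving $d\,\Tr(O\otimes O)=d\Tr(O)^2$.

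You instead stay on one copy. You Pauli-expand $O$, derive the twirl identity $\sum_{P}POP=d\Tr(O)\,I$ from the commutation signs and the vanishing of a nontrivial character sum, and then trace against $O$.

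Both key identities are rearrangements of the same fact: the Pauli strings form an orthogonal operator basis with $\Tr(RR')=d\,\delta_{RR'}$. So neither argument is deeper. The paper's is shorter once the swap identity is granted. Yours is self-contained, makes the phase convention $P^2=I$ explicit, and gives an operator-level statement from which $\sum_P\Tr(O'POP)=d\Tr(O)\Tr(O')$ follows at once for any $O'$.

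Your remark that positivity and unit trace are never used is worth keeping. The paper applies this fact with $O=E^\dagger E$, which is not a density matrix. So the hypothesis $O\in\mathcal D(\mathcal H_d)$ is stronger than needed, and both proofs cover that application.
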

\begin{proof}
We use the tensor network to show this result. We have
\begin{align}
    \sum_{P\in\cP_n}\Tr(OPOP)
    =\sum_{P\in\cP_n}\vcenter{\hbox{\Qcircuit @C=0.3cm @R=.3cm {
         &\qw & \qw & \qw & \qw &\qw  \\
         \qwx[-1] & \gate{O} & \gate{P} &\qw & \link{1}{-1} & \qw\qwx[-1] \\
         \qwx[1] & \gate{O} &\gate{P} &\qw & \link{-1}{-1}  & \qw\qwx[1]\\
         &\qw & \qw & \qw & \qw &\qw
     }}}
     ={\vcenter{\hbox{\Qcircuit@C=0.3cm @R=.3cm {
        &\qw & \qw & \qw & \qw &\qw \\
        \qwx[-1] & \gate{O} &\multigate{1}{\sum P^{\otimes 2}}& \qw&\link{1}{-1} &\qw\qwx[-1] \\
        \qwx[1] &\gate{O} &\ghost{\sum P^{\otimes 2}}& \qw&\link{-1}{-1} &\qw\qwx[1]\\
        & \qw &\qw &\qw &\qw &\qw
     }}}}.\label{equ: tensor}
\end{align}
In addition, we have $\frac{I+XX+YY+ZZ}{2}=\mathrm{SWAP}$ \cite{Zhou2023performanceanalysis}, then
\begin{equation}
    \frac{1}{d}\sum_{P\in\cP_n}P^{\otimes 2}=\bigotimes_{i=1}^N\left(\frac{1}{2}\sum_{P\in\cP_1}P^{\otimes 2}\right)=\mathrm{SWAP}^{\otimes N}
    =\vcenter{\hbox{\Qcircuit @C=.5cm @R= .5cm{
     & \qw &\link{1}{-1} & \qw\\
     & \qw &\link{-1}{-1} & \qw
    }}},
\end{equation}
then Eq.\eqref{equ: tensor} becomes
\begin{align}
    \sum_{P\in\cP_n}\Tr(OPOP)
    =d\times\vcenter{\hbox{\Qcircuit @C=0.3cm @R=.3cm {
         &\qw & \qw & \qw &\qw & \qw &\qw  \\
         \qwx[-1] & \gate{O} &\qw & \link{1}{-1} &\qw & \link{1}{-1} & \qw\qwx[-1] \\
         \qwx[1] & \gate{O} &\qw &\link{-1}{-1} &\qw & \link{-1}{-1}  & \qw\qwx[1]\\
         &\qw & \qw & \qw & \qw &\qw &\qw
     }}}
     =d\times\vcenter{\hbox{\Qcircuit @C=0.3cm @R=.3cm {
     &\qw &\qw \\
     \qwx[-1] &\gate{O} &\qw\qwx[-1]\\
     \qwx[1] &\gate{O} &\qw\qwx[1]\\
     &\qw &\qw
     }}}=d\Tr(O)^2.
\end{align}
\end{proof}
\begin{lemma} \label{lem:coeff}
    Here $E=e^{i\theta}\sum_{j=1}^{m} a_jQ_j,Q_j\in\mathcal P_N$, where $m=\mathrm{poly}(N)$ is the number of error terms, each coefficient $a_j$ is real and satisfies $|a_i|=\Theta(1)$. Then when $N$ is large
    \begin{equation}
        4B\leq(d^2+3d)A
    \end{equation}
    where A and B are defined in Lemma \ref{lem:4moment}. 
\end{lemma}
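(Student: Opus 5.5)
The plan is to set $O:=E^\dagger E$ and expand both $A$ and $B$ in permutation and cycle form. For $B$, writing $\Pi_4^{sym}=\frac{1}{24}\sum_{\sigma\in S_4}W_\sigma$ gives
\[
24B=\Tr(O)^4+6\Tr(O^2)\Tr(O)^2+3\Tr(O^2)^2+8\Tr(O^3)\Tr(O)+6\Tr(O^4).
\]
For $A$, $Q$ commutes with every $W_\sigma$, so $24d^2A=\sum_{P\in\cP_N}\sum_\sigma\Tr(W_\sigma (PO)^{\otimes4})$. Each summand is again a product of traces $\Tr((PO)^k)$ over the cycles of $\sigma$.

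The key structural step is to split off the identity component,
\[
O=cI+O',\qquad c=\|E\|_F^2=\sum_j a_j^2=\Theta(m),\qquad \Tr(O')=0,
\]
and expand $A$ and $B$ in powers of $O'$.
\begin{itemize}
\item The order-$0$ contributions are $B_0=c^4\Tr\Pi_4^{sym}=c^4\binom{d+3}{4}$ and $A_0=c^4\Tr(Q\Pi_4^{sym})=c^4(d+1)(d+2)/6$. The latter is the dimension of the $Q$-eigenspace inside $\mathrm{Sym}_4$, which is also what produces $\beta_+$ in Lemma~\ref{lem:4moment}. These satisfy $4B_0=(d^2+3d)A_0$ \emph{exactly}. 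So the inequality is decided entirely by the terms involving $O'$, and everything must be tracked at subleading order.
\item The linear terms in $O'$ vanish for $B$ because $\Tr O'=0$.
\item For $A$, the linear terms reduce via $\sum_P P O' P=d\Tr(O')I=0$ (the identity behind Fact~\ref{fact: unchange}), so they vanish as well.
\end{itemize}

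Next I would compute the quadratic terms explicitly. Write $O'=\sum_{s\neq0}b_sP_s$, so that $\Tr(O'^2)=d\sum_s b_s^2$.
\begin{itemize}
\item In $B$, the surviving pieces are $6c^2d^2\Tr(O'^2)+\ldots$ from the transpositions and the $\Tr(O)^2$ factors. Collecting them gives $B_2=\binom{4}{2}c^2\Tr(O'^2)\cdot\frac{(d+2)(d+3)}{24}\cdot\Theta(1)$, that is, order $c^2d^3\sum b_s^2$.
\item In $A$, I would evaluate each cycle type using $\Tr(PO)=d\,b_P$ and $\Tr(PO'PO')=d\sum_s\pm b_s^2$, where the sign records whether $P$ commutes with $P_s$. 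The sums over $P$ are then closed with Fact~\ref{fact: unchange} and the SWAP identity $\frac1d\sum_PP^{\otimes2}=\mathrm{SWAP}^{\otimes N}$.
\end{itemize}
The target is to show that $(d^2+3d)A_2-4B_2$ is positive, with leading behaviour of order $c^2d^3\sum_s b_s^2$. Equivalently, the Pauli-weighted sum in $A$ carries a strictly larger constant than the Haar-type sum in $B$. The mechanism is that terms like $\sum_P\Tr(PO)^2\Tr(P)^2$ and $\sum_P\Tr(POPO)^2$ concentrate the weight of $O'$ on the stabilizer-like subspace projected by $Q$.

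The remaining step is to bound the cubic and quartic terms. Using $\|O'\|\le(\sum_j|a_j|)^2=\mathrm{poly}(N)$ and $|\Tr(O'^k)|\le d\|O'\|^k$, these contribute at most $\mathrm{poly}(N)\,d^3$ times terms of degree $\ge3$ in $b$. I would compare them with the quadratic gap. Here the hypothesis $|a_j|=\Theta(1)$, $m=\mathrm{poly}(N)$ is used: $\sum_sb_s^2$ is at least inverse-polynomial whenever $O'\neq0$, while higher-order terms are suppressed by an extra factor $1/d$ relative to it. Where they are not suppressed, I would try to show they have a favourable sign, for instance $\sum_P\Tr(PO')^4\ge0$ entering $A$ positively. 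Then for $N$ large the inequality $4B\le(d^2+3d)A$ follows.

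I expect the main obstacle to be the exact bookkeeping of the quadratic-order coefficients in $A$ for the transposition and double-transposition cycle types. The commutation signs $\pm$ in $\Tr(PO'PO')$ must be summed over all $P$ without losing a factor of $d$. Because the zeroth order cancels exactly, any sloppiness there destroys the sign. My first attempt would be to reduce every such sum to a SWAP/partial-transpose contraction using $\sum_PP\otimes P=d\,\mathrm{SWAP}$ twice, and to check the resulting constant on the single-term case $E=Q_1$.
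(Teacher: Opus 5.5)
\textbf{There is a genuine gap: the quadratic gap your plan rests on is identically zero.}

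Your setup is sound. The permutation expansion of $24B$ is correct, $4B_0=(d^2+3d)A_0$ holds exactly, and the linear terms vanish. This exact cancellation is also why a leading-order dominance argument like the paper's cannot settle the sign without quantifying the slack in its inequalities. The paper bounds $4\Tr(O)^4$ by the $3\sum_P\Tr(OPOP)^2+\sum_P\Tr(OP)^4$ part of $(d^2+3d)A$ and then asserts that $\Tr(O)^4$ dominates $B$.

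The step you then rely on fails. $(d^2+3d)A_2-4B_2$ is not positive of order $c^2d^3\sum_s b_s^2$; it is identically zero, and so is the cubic term. The reason:
\begin{itemize}
\item $Q$ commutes with every $W_\sigma$ and the Clifford group is a $3$-design. So $\Tr_4(Q\Pi_4^{sym})$ is a Clifford-invariant operator supported on $\mathrm{Sym}_3$, hence equal to $\frac1d\Pi_3^{sym}$ (normalize with $\Tr(Q\Pi_4^{sym})=(d+1)(d+2)/6$).
\item Meanwhile $\Tr_4\Pi_4^{sym}=\frac{d+3}{4}\Pi_3^{sym}$.
\item Therefore $X:=[(d^2+3d)Q-4]\Pi_4^{sym}$ has vanishing one-copy partial traces. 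This is just the statement that the $M$-dependent part of the fourth-moment operator in Lemma~\ref{lem:4moment} is invisible to all third moments.
\end{itemize}
Expanding $O=cI+O'$ then gives the exact identity
\[
(d^2+3d)A-4B=\Tr\bigl[O'^{\otimes 4}X\bigr].
\]
Concretely, $A_2=c^2(d+2)\Tr(O'^2)/d$ and $B_2=c^2(d+2)(d+3)\Tr(O'^2)/4$, so $(d^2+3d)A_2=4B_2$ exactly. Your sanity check $E=Q_1$ cannot detect this, because there $O'=0$. The first nontrivial case, $E=Q_1+Q_2$ with $[Q_1,Q_2]=0$, gives $(d^2+3d)A-4B=\frac83 d^2(d+2)(d+4)$, and this comes entirely from the quartic term.

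\textbf{Consequence and repair.} Your perturbative ordering is reversed. The homogeneous quartic form in $O'=\sum_s b_sP_s$ is not a $1/d$-suppressed correction; it is the entire answer, and it is larger in $d$ than the scale $c^2d^3\sum_s b_s^2$ you were targeting. The terms you list as a fallback are the right ones. Write the difference as $(d^2+3d)A'-4B'$, with $A',B'$ the quantities $A,B$ evaluated at $O'$. Let $\chi_P(s)=\pm1$ record whether $P$ commutes with $P_s$. Then:
\begin{itemize}
\item $\sum_P\Tr(PO')^4=d^4\sum_s b_s^4$.
\item By orthogonality of these characters, $\sum_P\Tr(PO'PO')^2=d^2\sum_P\bigl(\sum_s\chi_P(s)b_s^2\bigr)^2=d^4\sum_s b_s^4$.
\item Together they contribute $\frac{(d+3)d^3}{6}\sum_s b_s^4$ to $(d^2+3d)A'$. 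These two sums are exactly the slack in the paper's Cauchy--Schwarz and $P\neq I$ inequalities.
\end{itemize}
Let $K\le m^2$ be the number of Pauli strings in $O'$. Everything else is at most $\mathrm{poly}(K)\,d^3\sum_s b_s^4$, using $|\Tr Y|\le d\|Y\|$, $\|O'\|\le\sum_s|b_s|$ and $(\sum_s b_s^2)^2\le K\sum_s b_s^4$. This covers $4B'=\frac12\Tr(O'^2)^2+\Tr(O'^4)$ and the remaining three sums in $A'$, which makes them negligible for $K=\mathrm{poly}(N)$.

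Because the form is homogeneous, you also do not need your claim that $\sum_s b_s^2$ is inverse-polynomial. That claim is unjustified anyway, since the $b_s$ are signed sums of $2a_ia_j$ and can nearly cancel. With these corrections your outline becomes a complete proof.
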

\begin{proof}
    Calculating B directly, we get
    \begin{equation}
        24B = 6\Tr[(E^\dagger E)^4]+8\Tr[(E^\dagger E)^3]\Tr(E^\dagger E)+3\Tr[(E^\dagger E)^2]^2+6\Tr[(E^\dagger E)^2]\Tr(E^\dagger E)^2+\Tr(E^\dagger E)^4\label{equ:B}
    \end{equation}
    and
    \begin{equation}
        \begin{aligned}
            24A = \frac{1}{d^2}&\sum_{P\in\mathcal P_N}(6\Tr[(E^\dagger EP)^4]+8\Tr[(E^\dagger EP)^3]\Tr(E^\dagger EP)+3\Tr[(E^\dagger EP)^2]^2\\
            &+6\Tr[(E^{\dagger} EP)^2]\Tr(E^\dagger EP)^2+\Tr(E^\dagger EP)^4)
        \end{aligned}.\label{equ:A}
    \end{equation}   
    \begin{enumerate}[(a)]
    \item Analyze the magnitude of different terms of $B$. Set $d=2^n$ is the dimension of the system.
        \begin{align}
            E^\dagger E&=\sum_{i=1}^{m} a_i^2 I+2\sum_{\substack{i\neq j\\Q_{ij}=Q_{ji}}} a_ia_jQ_iQ_j=\Theta(m)I+\sum\alpha_{ij}Q_{ij},N\rightarrow \infty\label{equ:EE}\\
            \Rightarrow &\Tr(E^\dagger E)=\Theta(m)d=\mathrm{poly}(N)2^{N},N\rightarrow\infty
        \end{align}
        Here $\alpha_{ij}:=2a_ia_j$ and $Q_{ij}:=Q_iQ_j$. Therefore, the coefficients in the second part are all constants, i.e. $|\alpha_{ij}|=\Theta(1)$ and there are $\mathcal O(m^2)$ terms in total. Then
        \begin{align}
            (E^\dagger E)^2&=\left((\sum a_i^2)^2+4\sum \alpha_{ij}^2\right)I+4(\sum a_i^2)^2\left(\sum\alpha_{ij}Q_{ij}\right)+8\sum_{\substack{ij\neq kn\\Q_{ijkn}=Q_{knij}}}\alpha_{ij}\alpha_{kn}Q_{ij}Q_{kn}\\
            &=\Theta(m^2)I+\sum\beta_jP_{\beta_j}+\sum\gamma_kP_{\gamma_k},n\rightarrow \infty
        \end{align}
        so
        \begin{align}
            \Tr[(E^\dagger E)^2]=\Theta(m^2)d=\mathrm{poly}(N) 2^N,N\rightarrow\infty
        \end{align}
        Here each $\beta_j=\Theta(N^2)$ and there are $\mathcal O(N^2)$ terms in total. Each $\gamma_k=\Theta(1)$ and there are $\mathcal O(N^4)$ terms in total. Similarly, one can find that
        \begin{align}
            \Tr[(E^\dagger E)^3]&=\text{poly}(N)2^N\\
            \Tr[(E^\dagger E)^4]&=\text{poly}(N)2^N
        \end{align}
       Therefore, we establish that the terms in Eq.\eqref{equ:B} exhibit asymptotic orders of $\text{poly}(N)d$, $\text{poly}(N)d^2$,$\Theta(N^4d^2)$, $\Theta(N^4d^3)$, $\Theta(n^4d^4)$ respectively. Consequently, the highest-order term $\Tr(E^\dagger E)^4$ dominates the asymptotic behavior.
\item Now we proof the highest-order term in $B$ is controlled by the third and fifth terms in $A$:$$4\Tr(E^\dagger E)^4\leq (1+\frac{3}{d})\left(\sum_{P\in\mathcal P_N}3\Tr[(E^\dagger EP)^2]^2+\Tr(E^\dagger EP)^4\right)$$
        First, we have
        \begin{equation}
            \sum_{P\in\mathcal P_N}\Tr(E^\dagger EP)^4=\Tr(E^\dagger E)^4+ \sum_{P\neq I}\Tr(E^\dagger EP)^4\geq\Tr(E^\dagger E)^4.\label{equ:1_part}
        \end{equation}
        From Fact \ref{fact: unchange} and Cauchy-Schwarz inequality, we have
        \begin{equation}
        \begin{aligned}\label{equ: 3_part}
            \sum_{P\in\mathcal P_N}\Tr(E^\dagger EPE^\dagger EP)^2&\ge\frac{1}{d^2}\left(\sum_{P\in\cP_n}\Tr(E^\dagger EPE^\dagger EP)\right)^2\\
            &=\frac{1}{d^2}\left(d\Tr(E^\dagger E)^2\right)^2=\Tr(E^\dagger E)^4.
        \end{aligned}
        \end{equation}
        Then from Eq.\eqref{equ:1_part} and Eq.\eqref{equ: 3_part}, we have
        \begin{equation}
            \begin{aligned}
                4\Tr(E^\dagger E)^4&=3\Tr(E^\dagger E)^4+\Tr(E^\dagger E)^4\\
                &\le 3\sum_{P\in\mathcal P_N}\Tr(E^\dagger EPE^\dagger EP)^2+\Tr(E^\dagger EP)^4\\
                &\le (1+\frac{3}{d})\left(3\sum_{P\in\mathcal P_N}\Tr(E^\dagger EPE^\dagger EP)^2+\Tr(E^\dagger EP)^4\right).
            \end{aligned}
        \end{equation}
\item Now we discuss the remain terms of $A$. Follow the same discussion in (a), we have
	\begin{equation}
		\left|\Tr[(E^\dagger EP)^k]\right|=\mathrm{poly}(N)d,
	\end{equation}
	where $k$ is a constant. Then the order of the summation of first terms of $A$ obtains
	\begin{equation}
		\left|\frac{1}{d^2}\sum_{P\in\cP_n}\Tr[(E^\dagger EP)^4]\right|\sim \frac{1}{d^2}\times d^2\times\mathrm{poly}(N)d=\mathrm{poly}(N)d.
	\end{equation}
	Both the second and fourth terms have a factor $\Tr(E^\dagger EP)$, which is zero unless $P\in\{Q_i\}_{i=1}^m$. Thus, there are actually only $m$ terms at most within each summation. Then we have
	\begin{equation}
		\begin{aligned}
			\left|\frac{1}{d^2}\sum_{P\in\cP_N}\Tr\left[(E^\dagger EP)^3\right]\Tr(E^\dagger EP)\right|&\sim \frac{1}{d^2}\times m\mathrm{poly}(N)d^2=\mathrm{poly}(N)\\
			\left|\frac{1}{d^2}\sum_{P\in\cP_n}\Tr\left[(E^\dagger EP)^2\right]\Tr(E^\dagger EP)^2\right|&\sim \frac{1}{d^2}\times m\mathrm{poly}(N)d^3=\mathrm{poly(N)}d.
		\end{aligned}
	\end{equation}
	Therefore, when $n$ is large enough, these three terms have negligible influence on the scale of $A$ compared to the third and fifth terms. 
    \end{enumerate}
	In conclusion, there always exists $N_0$ s.t.
        \begin{equation}
            4\Tr(E^\dagger E)^4\leq(d^2+3d)A\notag, N\geq N_0.
        \end{equation}
        As $\Tr(E^\dagger E)^4$ dominates the asymptotic behavior of $B$, we have
        \begin{equation}
            4B\leq (d^2+3d)A,
        \end{equation}
        when $N$ is large.

\end{proof}
As previously noted, the first three moments of $\bra{\psi}E^\dagger E\ket{\psi}$ are independent of $M$ and depend solely on the specific form of $E$. Therefore, the relationship between kurtosis and $M$ is reflected solely in the fourth moment, allowing us to derive Theorem \ref{theo:formerkurtosis}.
\begin{theorem}\label{theo:formerkurtosis}
    The set of all states with the same magic $M$ exhibits a decrease in the kurtosis of Trotter error $\bra{\psi}E^\dagger E\ket{\psi}$ as the magic increases. Specifically, the kurtosis satisfies:
    \begin{equation*}
        \mathop{\mathrm{Kur}}_{\ket{\psi}\sim\cM(\psi_0)}\left(\bra{\psi}E^\dagger E\ket{\psi}\right)=\alpha +\beta M\ge0
    \end{equation*}
    where $\alpha,\beta$ are some constants related to $E$, and $\beta<0$ when $n$ is large.
\end{theorem}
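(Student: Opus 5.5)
The plan is to write the kurtosis in terms of raw moments and use the Clifford 3-design property to isolate where $M$ enters. Set $\mu=\|E\|_F^2$ and $\sigma^2=\mathrm{Var}[s_E]$. Expanding the central fourth moment gives
\begin{equation*}
\bE[(s_E-\mu)^4]=\bE[s_E^4]-4\mu\bE[s_E^3]+6\mu^2\bE[s_E^2]-3\mu^4 .
\end{equation*}
Since $\mathrm{Cl}(2^N)$ is a unitary 3-design, $\bE[s_E^k]$ for $k\le 3$ equals its Haar value, computable from Lemma \ref{lem:haar_random}. These depend only on $E$, not on $\psi_0$. In particular $\sigma^2$ is a constant determined by $E$ alone; explicitly $\sigma^2=\frac{d\Tr[(E^\dagger E)^2]-\Tr(E^\dagger E)^2}{d^2(d+1)}$. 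It is strictly positive unless $E^\dagger E\propto I$, a case we exclude.

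Therefore
\begin{equation*}
\mathrm{Kur}=\frac{\bE[s_E^4]-4\mu\bE[s_E^3]+6\mu^2\bE[s_E^2]-3\mu^4}{\sigma^4},
\end{equation*}
and all of the $M$-dependence sits in $\bE[s_E^4]$.

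By Lemma \ref{lem:4moment}, $\bE[s_E^4]$ is affine in $M$. Collecting the terms proportional to $M$ there gives
\begin{equation*}
\beta=-\frac{6\left[(d^2+3d)A-4B\right]}{\sigma^4\, d(d^2-1)(d+2)(d+4)},
\end{equation*}
where the $M$-coefficient in the lemma's expression is read off after correcting its denominator normalization. The constant $\alpha$ then collects the $M$-independent part of the fourth moment together with the lower-moment corrections, all divided by $\sigma^4$. This establishes the linear form $\alpha+\beta M$ with $\alpha,\beta$ depending only on $E$.

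For the sign, the denominator $d(d^2-1)(d+2)(d+4)\sigma^4$ is positive. Hence $\beta<0$ is equivalent to $(d^2+3d)A-4B>0$, which is exactly what Lemma \ref{lem:coeff} supplies for large $N$ when $E$ has $\mathrm{poly}(N)$ Pauli terms with $\Theta(1)$ coefficients. Here lies the main obstacle: Lemma \ref{lem:coeff} gives only the non-strict inequality $4B\le (d^2+3d)A$. To obtain strict negativity, I would go back to step (b) of that lemma. The factor $(1+3/d)$ leaves a slack of order $\frac{3}{d}\sum_P\bigl(3\Tr[(E^\dagger EP)^2]^2+\Tr(E^\dagger EP)^4\bigr)\gtrsim \frac{12}{d}\Tr(E^\dagger E)^4=\Theta(m^4d^3)$. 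The subleading terms of $B$ are only $\mathcal O(\mathrm{poly}(N)d^3)$ with a smaller polynomial factor, or they can be absorbed by the extra $P\neq I$ contributions in Eq.~\eqref{equ:1_part}. So the dominant part of $(d^2+3d)A-4B$ is strictly positive, matching the theorem's phrase ``the dominant term of $\beta$ is negative.''

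Finally, the claim $\mathrm{Kur}\ge0$ is immediate, since the kurtosis is the ratio of a fourth central moment to $\sigma^4>0$.
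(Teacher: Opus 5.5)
Your argument follows the paper's proof step for step. The Clifford 3-design property makes $m_1,m_2,m_3$ (hence $\sigma^2$) independent of $\psi_0$, all $M$-dependence sits in $m_4$, Lemma~\ref{lem:4moment} makes $m_4$ affine in $M$, and the sign is delegated to Lemma~\ref{lem:coeff}. Your prefactor for $\beta$ is the one consistent with the last line of the paper's proof of Lemma~\ref{lem:4moment}; the paper's displayed $\beta$ differs by a positive constant, which does not affect the sign. The paper never addresses that Lemma~\ref{lem:coeff} gives only a non-strict inequality, so you are right to flag it. However, your patch fails as written.

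Write $E^\dagger E=cI+R$ with $c=\|E\|_F^2=\Theta(m)$ and $R=\sum_{Q\neq I}\alpha_Q Q$. In the normalization of part (b), the $(1+3/d)$ slack is exactly $\frac{3}{d}\sum_P\bigl(3\Tr[(E^\dagger EP)^2]^2+\Tr(E^\dagger EP)^4\bigr)=12d^3\bigl(c^4+\sum_{Q\neq I}\alpha_Q^4\bigr)$. The competing term is $4\cdot 6\Tr[(E^\dagger E)^2]\Tr(E^\dagger E)^2=24d^3c^2\bigl(c^2+\sum_{Q\neq I}\alpha_Q^2\bigr)\ge 24d^3c^4$. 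So the subleading part of $B$ is \emph{not} of smaller polynomial order: it is $\Theta(m^4d^3)$ with a larger constant.

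Worse, every $c$-dependent contribution to $(d^2+3d)A-4B$ cancels identically. Shifting $E^\dagger E$ by $cI$ shifts $s_E$ by a constant, so the fourth central moment is unchanged. Because $m_1,m_2,m_3$ do not depend on $M$, its $M$-coefficient $-6[(d^2+3d)A-4B]/(d(d^2-1)(d+2)(d+4))$ can be computed from $R$ alone. Hence the $(1+3/d)$ factor buys no margin at all.

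The genuine margin is your fallback, and it lives at order $d^4$. Since $\Tr(RP)=d\alpha_P$ and $\Tr(RPRP)=d\sum_{Q}\alpha_Q^2\epsilon_{PQ}$, with $\epsilon_{PQ}=\pm1$ the commutation sign, character orthogonality gives $\sum_P\Tr(RP)^4=\sum_P\Tr(RPRP)^2=d^4\sum_{Q\neq I}\alpha_Q^4$. Meanwhile $B_R=O(\mathrm{poly}(N)d^2)$, and everything else in $(d^2+3d)A_R$ is $O(\mathrm{poly}(N)d^3)$. Thus $(d^2+3d)A-4B=\frac{d^4}{6}\sum_{Q\neq I}\alpha_Q^4+O(\mathrm{poly}(N)d^3)$. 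This is strictly positive for large $N$ provided some $\alpha_Q$ is bounded away from zero, a quantitative form of the condition $R\neq 0$ you already need for $\sigma>0$.

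In the paper's unshifted bookkeeping, this margin is precisely the discarded $P\neq I$ sum in Eq.~\eqref{equ:1_part} together with the Cauchy--Schwarz slack in part (b). That, not the $(1+3/d)$ factor, should carry the strictness.
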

\begin{proof}
    In the following proof, we abbreviate $\cM(\psi_0)$ to $\cM$. As the Clifford group forms a unitary 3-design, we have
    \begin{equation}
        \mathop{\bE}_{\ket{\psi}\sim\cM}\ket{\psi}\bra{\psi}^{\otimes k}=\mathop{\bE}_{\ket{\psi}\sim\mathrm{Haar}}\ket{\psi}\bra{\psi}^{\otimes k},\quad\forall k\le3.
    \end{equation}
    Thus we have
    \begin{align}
        m_1&=\mathop{\bE}_{\ket{\psi}\sim\cM}\bra{\psi}E^\dagger E\ket{\psi}=\frac{\Tr(E^\dagger E)}{d}\\
        m_2&=\mathop{\bE}_{\ket{\psi}\sim\cM}\bra{\psi}E^\dagger E\ket{\psi}^2=\frac{\Tr(E^\dagger E)^2+\Tr[(E^\dagger E)^2]}{d(d+1)}\\
        m_3&=\mathop{\bE}_{\ket{\psi}\sim\cM}\bra{\psi}E^\dagger E\ket{\psi}^3=\frac{\Tr(E^\dagger E)^3+3\Tr(E^\dagger E)^2\Tr(E^\dagger E)+2\Tr[(E^\dagger E)^3]}{d(d+1)(d+2)}.
    \end{align}
    So we have the variance
    \begin{equation}
            \mathop{\mathrm{Var}}_{\ket{\psi}\sim\cM}\left(\bra{\psi}E^\dagger E\ket{\psi}\right)=m_2-m_1^2
    \end{equation}
    and the fourth central moment
    \begin{equation}
        \begin{aligned}
            &\mathop{\bE}_{\ket{\psi}\sim\cM}\left(\bra{\psi}E^\dagger E\ket{\psi}-\mathop{\bE}_{\ket{\psi}\sim\cM}\bra{\psi}E^\dagger E\ket{\psi}\right)^4\\
            =&\mathop{\bE}_{\ket{\psi}\sim\cM}\left(\bra{\psi}E^\dagger E\ket{\psi}^4-4\bra{\psi}E^\dagger E\ket{\psi}^3m_1+6\bra{\psi}E^\dagger E\ket{\psi}^2m_1^2-4\bra{\psi}E^\dagger E\ket{\psi}m_1^3+m_1^4\right)\\
            =&m_4-4m_3m_1+6m_2m_1^2-3m_1^4,
        \end{aligned}
    \end{equation}
    where $m_4=\mathop{\bE}_{\ket{\psi}\sim\cM}\bra{\psi}E^\dagger E\ket{\psi}^4$ is given by Lemma~\ref{lem:4moment}. Then the kurtosis is 
    \begin{equation}
        \begin{aligned}
            0\le\mathop{\mathrm{Kur}}_{\ket{\psi}\sim\cM}(\bra{\psi}E^\dagger E\ket{\psi})&=\frac{\mathop{\bE}_{\ket{\psi}\sim\cM}\left(\bra{\psi}E^\dagger E\ket{\psi}-\mathop{\bE}_{\ket{\psi}\sim\cM}|\bra{\psi}E^\dagger E\ket{\psi}|\right)^4}{\left[\mathop{\mathrm{Var}}_{\ket{\psi}\sim\cM}\left(\bra{\psi}E^\dagger E\ket{\psi}\right)\right]^2}\\
            &=\frac{m_4-4m_3m_1+6m_2m_1^2-3m_1^4}{(m_2-m_1^2)^2}\\
            &=\alpha+\frac{4B-(d^2+3d)A}{4d(d^2-1)(d+2)(d+4)(m_2-m_1^2)^2}M\\
            &=\alpha+\beta M.
        \end{aligned}
    \end{equation}
    From Lemma~\ref{lem:coeff}, we know that $\beta<0$ when $n$ is large.
\end{proof}
\section{Long-time simulation}
\subsection{Long-time magic--kurtosis relation}
Lemma~\ref{lem:4moment} shows that the fourth-moment, and hence the kurtosis quantity considered in the main text, depends linearly on the magic of the input state for an arbitrary error operator \(E\). Importantly, the proof of Lemma~\ref{lem:4moment} does not require \(E\) to be a leading-order Trotter error, a one-step error, or to satisfy any locality or perturbative assumption. Therefore, the same linear relation remains valid if we replace \(E\) by the accumulated long-time error operator
\begin{equation}
    E_{\mathrm{LT}} = \sU_p(\delta t)^r-\sU_0(\delta t)^r,
\end{equation}
where \(\sU_0(\delta t)^r\) is the exact time-evolution operator and \(\sU_p(\delta t)^r\) is the product-formula approximation with \(r\) Trotter steps. In particular, for the long-time error distribution, the kurtosis can still be written in the form
\begin{equation}
    \mathop{\mathrm{Kur}}_{\ket{\psi}\sim\cM(\psi_0)}\left(\bra{\psi}E_{\mathrm{LT}}^\dagger E_{\mathrm{LT}}\ket{\psi}\right)
    =
    \alpha_{\mathrm{LT}} + \beta_{\mathrm{LT}} M(\psi_0),
\end{equation}
where $M(\psi_0)$ denotes the magic measure used in the main text, and \(\alpha_{\mathrm{LT}}\) and \(\beta_{\mathrm{LT}}\) depend on the long-time error operator \(E_{\mathrm{LT}}\).

\begin{figure}[h]
    \centering
    \includegraphics[width=0.5\linewidth]{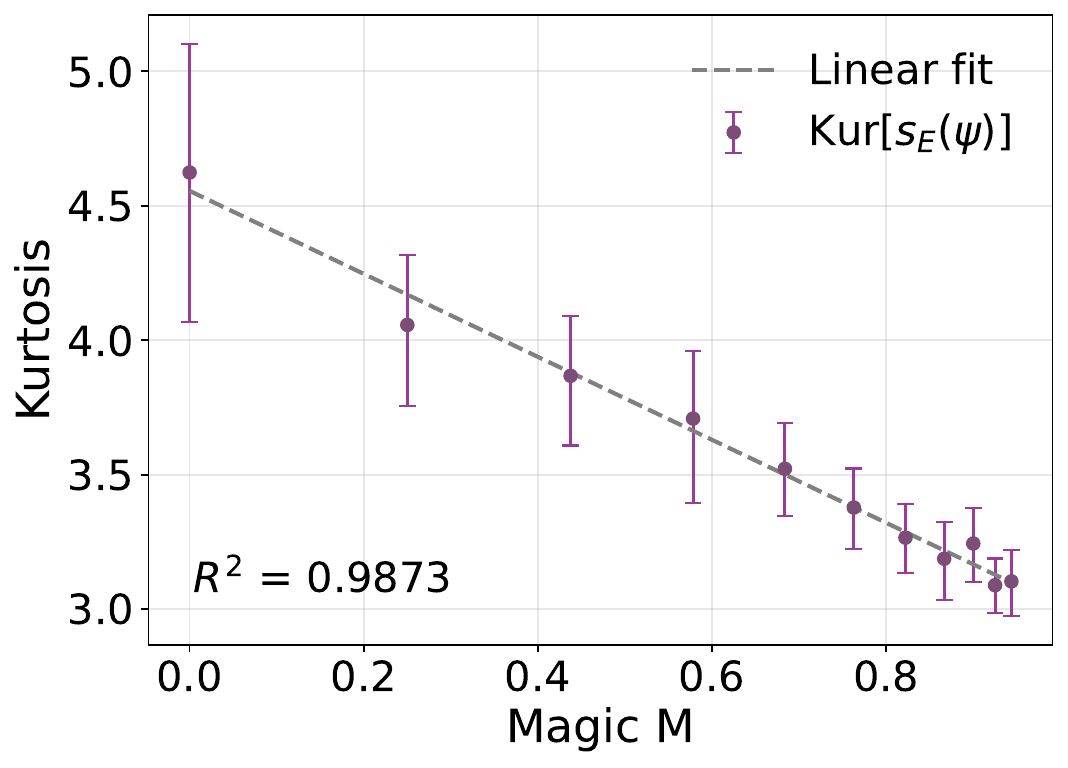}
    \caption{Long-time magic--kurtosis relation for the accumulated Trotter error. The simulation uses step size
    \(\delta t=0.1\) and \(r=100\) Trotter steps. Each point corresponds to
    the kurtosis of the long-time error distribution for states with a given
    magic value, and the dashed line denotes a linear fit. Consistent with the
    formal linear relation implied by Lemma~\ref{lem:4moment}, the data are well described by
    a linear function of magic. The fitted slope is negative for the system
    and parameters considered here, although the sign of this long-time
    coefficient is not fixed by our current analytical proof.}
    \label{fig:longtimekur}
\end{figure}

We emphasize, however, that Lemma~\ref{lem:4moment} by itself only establishes the linear dependence, not the sign of the coefficient \(\beta_{\mathrm{LT}}\). In Theorem~\ref{theo:formerkurtosis}, the negativity of the corresponding coefficient for the leading-order Trotter error was obtained using magnitude estimates specific to the perturbative, one-step error operator. These estimates no longer apply once \(E\) is replaced by the accumulated long-time error \(E_{\mathrm{LT}}\), whose operator content is generated by the full dynamics. Thus, our present analytical argument does not determine whether \(\beta_{\mathrm{LT}}\) is positive or negative in the long-time regime.

We therefore examine this question numerically. As shown in Fig.~\ref{fig:longtimekur}, for step size \(\delta t=0.1\) and \(r=100\) Trotter steps, the long-time magic--kurtosis relation remains approximately linear, and the fitted slope is negative (The accumulated error is generated using the same QIMF Hamiltonian as in the main text). This indicates that, for the physical systems and parameters studied here, higher magic continues to correlate with a smaller kurtosis of the long-time Trotter-error distribution. The observed negative slope should be viewed as numerical evidence rather than a rigorous consequence of our current proof.
\subsection{Long-time entanglement-variance relation}
For a long-time Trotter error, we can bound its error using the triangle inequality of norms.
\begin{fact}
    For a long-time simulation task, we cut $t$ into $r$ small pieces, then the error satisfies
    \begin{equation}
        \norm{(\sU_p(\delta t)^r-\sU_0(t))\ket{\psi_0}}
        \le\sum_{k=0}^{r-1}\norm{(\sU_p(\delta t)-\sU_0(\delta t))\sU_p(\delta t)^k\ket{\psi_0}}.
    \end{equation}
\end{fact}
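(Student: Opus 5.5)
The plan is a telescoping argument. Write $\sU_0(t)=\sU_0(\delta t)^r$ and abbreviate $\sU_p=\sU_p(\delta t)$, $\sU_0=\sU_0(\delta t)$. The standard identity is
\begin{equation*}
\sU_p^r-\sU_0^r=\sum_{k=0}^{r-1}\sU_0^{\,r-1-k}\left(\sU_p-\sU_0\right)\sU_p^{\,k}.
\end{equation*}
One checks it by expanding the sum. The terms $\sU_0^{r-1-k}\sU_p^{k+1}$ and $-\sU_0^{r-k}\sU_p^{k}$ cancel in consecutive pairs, leaving only $\sU_p^r-\sU_0^r$. Alternatively, one can prove it by induction on $r$, using $\sU_p^{r}-\sU_0^{r}=\sU_0(\sU_p^{r-1}-\sU_0^{r-1})+(\sU_p-\sU_0)\sU_p^{r-1}$. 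Note that this induction step places the new error term on the right as $(\sU_p-\sU_0)\sU_p^{r-1}$, consistent with the ordering in the statement.

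Next I apply both sides to $\ket{\psi_0}$ and take the vector norm. The triangle inequality bounds the norm of the sum by the sum of the norms of the terms $\sU_0^{r-1-k}(\sU_p-\sU_0)\sU_p^k\ket{\psi_0}$. Since $\sU_0^{r-1-k}=e^{-iH(r-1-k)\delta t}$ is unitary, it preserves the vector norm, so each term equals $\norm{(\sU_p-\sU_0)\sU_p^k\ket{\psi_0}}$. This is exactly the claimed bound.

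There is no real obstacle here. The only point needing care is choosing which factor sits on the left in the telescoping identity. The exact propagator must be on the left, where unitarity discards it. The approximate propagator $\sU_p^k$ must be on the right, acting on the state. This ordering produces the intermediate states $\sU_p^k\ket{\psi_0}$ that the subsequent variance analysis (via Theorem~\ref{theo:var-ent} applied to $\sU_p^{\dagger k}E^\dagger E\sU_p^k$) is built on.
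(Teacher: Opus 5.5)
Your proof is correct and follows essentially the paper's argument. The paper writes $\sU_p^{r}-\sU_0^{r}=(\sU_p-\sU_0)\sU_p^{r-1}+\sU_0(\sU_p^{r-1}-\sU_0^{r-1})$, applies the triangle inequality and the unitarity of $\sU_0$, and iterates, which is exactly your induction variant; your closed-form telescoping identity just collects the same steps into a single use of the triangle inequality.
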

\begin{proof}
    We derive this lemma by triangle inequality
    \begin{equation}
        \begin{aligned}
            \norm{(\sU_p(\delta t)^r-\sU_0(t))\ket{\psi_0}}
            &=\norm{\left(\sU_p(\delta t)^r-\sU_0(\delta t)\sU_p(\delta t)^{r-1}+\sU_0(\delta t)\sU_p(\delta t)^{r-1}-\sU_0(\delta t)^r\right)\ket{\psi_0}}\\
            &\le \norm{(\sU_p(\delta t)-\sU_0(\delta t))\sU_p(\delta t)^{r-1}\ket{\psi_0}}+\norm{(\sU_p(\delta t)^{r-1}-\sU_0(\delta t)^{r-1})\ket{\psi_0}}\\
            &\le....\\
            &\le\sum_{k=0}^{r-1}\norm{(\sU_p(\delta t)-\sU_0(\delta t))\sU_p(\delta t)^k\ket{\psi_0}}
        \end{aligned}
    \end{equation}
\end{proof}
For simplicity, we denote $\sum_{k=0}^{r-1}\norm{(\sU_p(\delta t)-\sU_0(\delta t))\sU_p(\delta t)^k\ket{\psi_0}}$ as $e_r$. When $\delta t$ is small enough, we can ignore the higher-order terms in $\sU_p(\delta t)-\sU_0(\delta t)$ and have
\begin{equation}
    e_r\approx\sum_{k=0}^{r-1}\norm{E\sU_p(\delta t)^k\ket{\psi}}\delta t^{p+1}=\delta t^{p+1}\sum_{k=0}^{r-1}\sqrt{s_E(\sU_p(\delta t)^k\ket{\psi})}.
\end{equation}
And we can prove that the variance of $e_r$ has the following inequality.
\begin{lemma}\label{aplem:sum_var}
The variance of $e_r$ satisfies
    \begin{equation}
        \mathop{\operatorname{Var}}_{\psi\sim\cE(\psi_0)}(e_r)\le r\delta t^{2p+2}\sum_{k=0}^{r-1}\mathop{\operatorname{Var}}_{\psi\sim\cE(\psi_0)}\left[\sqrt{s_E(\sU_p(\delta t)^k\ket{\psi})}\right]
    \end{equation}
\end{lemma}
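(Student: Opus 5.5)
We work throughout in the approximation stated just before the lemma, where $e_r=\delta t^{p+1}\sum_{k=0}^{r-1}X_k$ with $X_k:=\sqrt{s_E(\sU_p(\delta t)^k\ket{\psi})}$ a real random variable on the ensemble $\cE(\psi_0)$. The lemma is then a purely probabilistic statement: the variance of a sum of $r$ (generally correlated) random variables is at most $r$ times the sum of their variances. No structure of the ensemble or of $E$ is needed beyond $X_k$ being real, measurable, and square-integrable. Square-integrability is immediate, since $0\le X_k\le\norm{E}$.

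The first step pulls out the deterministic factor. Because $\delta t^{p+1}$ does not depend on $\ket{\psi}$, we have $\mathrm{Var}(e_r)=\delta t^{2p+2}\,\mathrm{Var}\bigl(\sum_k X_k\bigr)$.

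The second step bounds the variance of the sum. Write $Y_k:=X_k-\bE X_k$, so that
\begin{equation*}
\mathrm{Var}\Bigl(\sum_k X_k\Bigr)=\bE\Bigl(\sum_{k=0}^{r-1}Y_k\Bigr)^2 .
\end{equation*}
Applying Cauchy--Schwarz pointwise to the $r$ terms gives $\bigl(\sum_k Y_k\bigr)^2\le r\sum_k Y_k^2$. Taking the expectation yields
\begin{equation*}
\mathrm{Var}\Bigl(\sum_k X_k\Bigr)\le r\sum_{k=0}^{r-1}\bE Y_k^2=r\sum_{k=0}^{r-1}\mathrm{Var}(X_k).
\end{equation*}
One could equally expand the variance into covariances and use $|\mathrm{Cov}(X_k,X_l)|\le\sqrt{\mathrm{Var}X_k\,\mathrm{Var}X_l}\le\tfrac12(\mathrm{Var}X_k+\mathrm{Var}X_l)$, which gives the same bound.

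Combining the two steps gives the claimed inequality. The only point needing care is the bookkeeping: the variance is taken over $\ket{\psi}\sim\cE(\psi_0)$ with the evolution $\sU_p(\delta t)^k$ applied afterwards, and this is consistent because $X_k$ is simply a fixed function of $\ket{\psi}$. The "$\approx$" in the definition of $e_r$ is inherited from the statement, and we treat the leading-order expression as exact. We expect no real obstacle in this lemma; the substantive difficulty, bounding each $\mathrm{Var}[\sqrt{s_E}]$ in terms of the entanglement entropy of the evolved state, is deferred to the subsequent step via Theorem~\ref{theo:var-ent}.
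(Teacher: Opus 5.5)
Your proposal is correct and essentially matches the paper's proof. The paper expands $\mathrm{Var}(e_r)$ into variances plus covariances and bounds each covariance by $\sqrt{\mathrm{Var}X_k\,\mathrm{Var}X_l}\le\tfrac12(\mathrm{Var}X_k+\mathrm{Var}X_l)$, which is exactly the alternative you mention; your main pointwise Cauchy--Schwarz step $(\sum_k Y_k)^2\le r\sum_k Y_k^2$ is an equivalent and slightly more compact form of the same inequality.
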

\begin{proof}
    \begin{align}
    \mathop{\operatorname{Var}}_{\psi\sim\cE(\psi_0)}(e_r)=&\delta t^{2p+2}\sum_{k=0}^{r-1}\mathop{\operatorname{Var}}_{\psi\sim\cE(\psi_0)}\left[\sqrt{s_E(\sU_p(\delta t)^k\ket{\psi}}\right]+\delta t^{2p+2}\sum_{k\ne l}\mathop{\operatorname{Cov}}_{\psi\sim\cE(\psi_0)}\left[\sqrt{s_E(\sU_p(\delta t)^k\ket{\psi}},\sqrt{s_E(\sU_p(\delta t)^l\ket{\psi}}\right]\notag\\
    \le&\delta t^{2p+2}\sum_{k=0}^{r-1}\mathop{\operatorname{Var}}_{\psi\sim\cE(\psi_0)}\left[\sqrt{s_E(\sU_p(\delta t)^k\ket{\psi}}\right]\\
    &+\delta t^{2p+2}\sum_{k\ne l}\sqrt{\mathop{\operatorname{Var}}_{\psi\sim\cE(\psi_0)}\left[\sqrt{s_E(\sU_p(\delta t)^k\ket{\psi}}\right]\mathop{\operatorname{Var}}_{\psi\sim\cE(\psi_0)}\left[\sqrt{s_E(\sU_p(\delta t)^l\ket{\psi}}\right]}\notag\\
    \le&\frac{1}{2}\delta t^{2p+2}\sum_{k,l}\left(\mathop{\operatorname{Var}}_{\psi\sim\cE(\psi_0)}\left[\sqrt{s_E(\sU_p(\delta t)^k\ket{\psi}}\right]+\mathop{\operatorname{Var}}_{\psi\sim\cE(\psi_0)}\left[\sqrt{s_E(\sU_p(\delta t)^l\ket{\psi}}\right]\right)\\
    =&r\delta t^{2p+2}\sum_{k=0}^{r-1}\mathop{\operatorname{Var}}_{\psi\sim\cE(\psi_0)}\left[\sqrt{s_E(\sU_p(\delta t)^k\ket{\psi})}\right]
    \end{align}
\end{proof}
For each term in the summation, we have the following lemma.
\begin{lemma}\label{aplem:sqrt_X_var}
Given a bounded positive random variable $X$, the variance of $\sqrt{X}$ is bouned by
    \begin{equation}
        \operatorname{Var}(\sqrt{X})\le\frac{1}{\sqrt 2}\sqrt{\operatorname{Var}(X)}.
    \end{equation}
\end{lemma}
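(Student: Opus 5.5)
The plan is a symmetrization argument with an independent copy. Let $X'$ be an independent copy of $X$, with the same distribution. For any random variable $Y$ with finite second moment,
\begin{equation}
\operatorname{Var}(Y)=\tfrac{1}{2}\,\mathbb E\big[(Y-Y')^2\big].
\end{equation}
This follows by expanding the square and using independence: $\mathbb E[YY']=(\mathbb E Y)^2$. I will apply this identity twice, once to $Y=\sqrt{X}$ and once to $Y=X$. Boundedness and positivity of $X$ ensure that all these moments are finite and that $\sqrt{X}$ is well defined.

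\textbf{Pointwise inequality.} For $a,b\ge 0$,
\begin{equation}
(\sqrt a-\sqrt b)^2\le |a-b|.
\end{equation}
To see this, assume $a\ge b$. Then $(\sqrt a-\sqrt b)^2\le(\sqrt a-\sqrt b)(\sqrt a+\sqrt b)=a-b$, since $\sqrt a-\sqrt b\ge 0$ and $\sqrt a-\sqrt b\le\sqrt a+\sqrt b$. Applying this with $a=X$ and $b=X'$ and taking expectations gives
\begin{equation}
\operatorname{Var}(\sqrt X)=\tfrac12\,\mathbb E\big[(\sqrt X-\sqrt{X'})^2\big]\le \tfrac12\,\mathbb E|X-X'|.
\end{equation}

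\textbf{Closing step.} By Cauchy--Schwarz (or Jensen),
\begin{equation}
\mathbb E|X-X'|\le\sqrt{\mathbb E\big[(X-X')^2\big]}=\sqrt{2\operatorname{Var}(X)},
\end{equation}
where the equality uses the identity above with $Y=X$. Combining the two displays,
\begin{equation}
\operatorname{Var}(\sqrt X)\le\tfrac12\sqrt{2\operatorname{Var}(X)}=\tfrac{1}{\sqrt2}\sqrt{\operatorname{Var}(X)},
\end{equation}
which is the claim.

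\textbf{Main obstacle.} The only non-routine point is choosing the coupling and the elementary inequality $(\sqrt a-\sqrt b)^2\le|a-b|$. A direct approach, such as Taylor-expanding $\sqrt{\cdot}$ around the mean, would produce constants that depend on $\mathbb E X$, so it cannot give a distribution-free constant like $1/\sqrt2$. Once the independent copy is introduced, the rest is immediate.

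This lemma then feeds into Lemma~\ref{aplem:sum_var} together with Theorem~\ref{theo:var-ent}, applied to the state $\sU_p^k\ket{\psi}$, to give the long-time variance bound.
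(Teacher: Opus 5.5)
Your proof is correct and follows the paper's argument: an independent copy $X'$, the identity $\operatorname{Var}(Y)=\tfrac12\mathbb E[(Y-Y')^2]$, the pointwise bound $(\sqrt a-\sqrt b)^2\le|a-b|$, and Cauchy--Schwarz to reach $\tfrac12\sqrt{2\operatorname{Var}(X)}$. You also justify the pointwise inequality and the symmetrization identity, which the paper uses without comment.
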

\begin{proof}
Assume that there is an another random variable $Y$ i.d.d. with $X$, then
    \begin{align}
        \operatorname{Var}(\sqrt{X})=&\frac{1}{2}\bE\left((\sqrt{X}-\sqrt Y)^2\right)
        \le \frac{1}{2}\bE\left(\abs{X-Y}\right)
        \le \frac{1}{2}\sqrt{\bE\left((X-Y)^2\right)}=\frac{1}{\sqrt 2}\sqrt{\operatorname{Var}(X)}.
    \end{align}
\end{proof}
By combining Theorem \ref{theo:var-ent} with the two preceding lemma, we arrive at Theorem 3 in the main text. We define $S_k=\{s\subseteq[N]|s=\operatorname{supp}(\sU_p^{\dagger k}E_j^\dagger E_{j'}\sU_p^k),\forall j,j'\}$ as the set of supports for each term in the expansion of operator $\sU_p^{\dagger k}E^\dagger E\sU_p^k$. Subsequently, we establish the following theorem.
\begin{theorem}
    Given a state $\ket{\psi_0}$ with entanglement structure $S(\rho_A)$, neglecting the influence of higher-order terms in $\delta t$, the variance of the $p$-th order Trotter error under $\cE(\psi_0)$ satisfies
    \begin{equation}
        \mathop{\operatorname{Var}}_{\kp\sim\cE(\psi_0)}[e_r]\le r\delta t^{2p+2}\sum_{k=0}^{r-1}\sqrt{\sum_{s\in S_k}c_{s}\sqrt{\log d_{s}-S(\rho_{s})}},
    \end{equation}
    where $c_s$ denotes a constant that is independent of $\psi_0$.
\end{theorem}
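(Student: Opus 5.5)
The plan is to chain the three ingredients already in place: Lemma~\ref{aplem:sum_var}, Lemma~\ref{aplem:sqrt_X_var}, and Theorem~\ref{theo:var-ent}. We work in the regime where the higher-order terms in $\delta t$ are dropped, so that $e_r\approx \delta t^{p+1}\sum_{k=0}^{r-1}\sqrt{s_E(\sU_p^k\ket{\psi})}$ with $\ket{\psi}\sim\cE(\psi_0)$.

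\textbf{Step 1.} Apply Lemma~\ref{aplem:sum_var}, which produces the prefactor $r\delta t^{2p+2}$ and reduces the problem to bounding $\mathrm{Var}_{\psi\sim\cE(\psi_0)}\big[\sqrt{s_E(\sU_p^k\ket{\psi})}\big]$ separately for each $k$.

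\textbf{Step 2.} Since $s_E\ge 0$ is bounded (by $\|E\|^2$), Lemma~\ref{aplem:sqrt_X_var} gives
\[
\mathrm{Var}\big[\sqrt{s_E(\sU_p^k\ket{\psi})}\big]\le 2^{-1/2}\sqrt{\mathrm{Var}\big[s_E(\sU_p^k\ket{\psi})\big]}.
\]
The outer square root in the target bound comes from this step.

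\textbf{Step 3.} Rewrite the inner variance as a variance over the original ensemble using
\[
s_E(\sU_p^k\ket{\psi})=\bra{\psi}E_k^\dagger E_k\ket{\psi}, \qquad E_k:=E\,\sU_p^k=\sum_j E_j\sU_p^k .
\]
Then rerun the proof of Theorem~\ref{theo:var-ent} with $E^\dagger E$ replaced by
\[
\sU_p^{\dagger k}E^\dagger E\,\sU_p^k=\sum_{j,j'}\sU_p^{\dagger k}E_j^\dagger E_{j'}\sU_p^k .
\]
That proof uses only a few things: a decomposition of the observable into traceless Hermitian pieces plus the identity part $\|E\|_F^2 I$ (unitary conjugation preserves the trace), the Pauli expansion of each piece, Lemma~\ref{lem: 2ndmoment}, Hölder's inequality, the bound $\|\mathcal T\|=1$, the double-copy trace-distance bound, and Pinsker's inequality. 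Grouping the pieces according to their supports $s\in S_k$ yields
\[
\mathrm{Var}\le\sum_{s\in S_k}a_s\sqrt{2\log d_s-2S(\rho_s)},
\]
where $a_s=2\Tr(\widetilde{O_s}\,\widetilde{O})/d$ for $O=\sU_p^{\dagger k}E^\dagger E\,\sU_p^k$. The constants $\sqrt2$ and $2^{-1/2}$ are then absorbed into $c_s$. Because $a_s$ depends only on $E$, $\sU_p$ and $k$, each $c_s$ is independent of $\psi_0$.

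\textbf{Step 4.} Substitute the result of Step 3 into Step 2, and that into Step 1, to obtain the stated inequality.

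\textbf{Main obstacle: Step 3.} Theorem~\ref{theo:var-ent} is phrased for the specific local decomposition into $E_{jj'}$, whereas the conjugated operators $\sU_p^{\dagger k}E_j^\dagger E_{j'}\sU_p^k$ are no longer local and may share supports. Two things need checking:
\begin{itemize}
\item The re-indexing by $s\in S_k$ must be legitimate, which requires the symmetrised, traceless pieces to be Hermitian, exactly as in Eq.~\eqref{equ: Ejj'}.
\item The bookkeeping that expresses $b_{\{s\}}$ through the piece coefficients, and that sets $\supp(\{i\})\subseteq s$ when tracing out qubits outside $s$, must still hold.
\end{itemize}
Both appear to go through verbatim, because the argument never used locality of $E_j$, only the supports of the pieces. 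The remaining care is to note that the mean term cancels exactly as before, since $\cE(\psi_0)$ is a 1-design and the conjugated observable has the same normalised trace $\|E\|_F^2$.
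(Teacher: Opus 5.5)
Your proposal is correct and is essentially the paper's own argument. The paper likewise applies Theorem~\ref{theo:var-ent} with $E$ replaced by $E\,\sU_p(\delta t)^k$, so the observable becomes $\sU_p(\delta t)^{\dagger k}E^\dagger E\,\sU_p(\delta t)^k$ with supports in $S_k$; it then passes to $\sqrt{s_E}$ via Lemma~\ref{aplem:sqrt_X_var}, sums over $k$ with Lemma~\ref{aplem:sum_var}, and absorbs the numerical factors into $c_s$. Your observation that the proof of Theorem~\ref{theo:var-ent} uses only the supports of the symmetrised traceless pieces, and never the locality of the $E_j$, is exactly what justifies this substitution, which the paper makes without comment.
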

\begin{proof}
    From Theorem \ref{theo:var-ent},
    \begin{equation}
    \mathop{\operatorname{Var}}_{\psi\sim\cE(\psi_0)}\left[{s_E(\sU_p(\delta t)^k\ket{\psi})}\right]=\mathop{\operatorname{Var}}_{\psi\sim\cE(\psi_0)}\left[{s_{E\sU_p(\delta t)^k}(\ket{\psi})}\right]\le\sum_{j\le j'}a^{(k)}_{jj'}\sqrt{2\log{d^{(k)}_{jj'}}-2S(\rho^{(k)}_{jj'})}.
    \end{equation}
    It should be noted that the constants $a_{jj'}^{(k)}$ and the support of $\rho_{jj'}^{(k)}$ here are different from those in Theorem \ref{theo:var-ent}; they are obtained by replacing the error matrix $E$ with $E\sU_p(\delta t)^k$ respectively. The support of $\sU_p(\delta t)^{\dagger k}E^\dagger E\sU_p(\delta t)^k$ linearly grows with $k$ as $\sU_p(\delta t)$ has the brickwork architecture consisting of several local unitary. Therefore, the results still exhibits the property that the entanglement of $\psi_0$ reduces the variance. Combining Lemma \ref{aplem:sqrt_X_var},
    \begin{equation}
        \mathop{\operatorname{Var}}_{\psi\sim\cE(\psi_0)}\left[\sqrt{s_E(\sU_p(\delta t)^k\ket{\psi})}\right]\le\frac{1}{\sqrt 2}\sqrt{\sum_{j\le j'}a^{(k)}_{jj'}\sqrt{2\log{d^{(k)}_{jj'}}-2S(\rho^{(k)}_{jj'})}},
    \end{equation}
    and Lemma \ref{aplem:sum_var},
    \begin{align}
        \mathop{\operatorname{Var}}_{\kp\sim\cE(\psi_0)}[e_r]\le &r\delta t^{2p+2}\sum_{k=0}^{r-1}\frac{1}{\sqrt 2}\sqrt{\sum_{j\le j'}a^{(k)}_{jj'}\sqrt{2\log{d^{(k)}_{jj'}}-2S(\rho^{(k)}_{jj'})}}\\
        =&r\delta t^{2p+2}\sum_{k=0}^{r-1}\sqrt{\sum_{s\in S_k}c_s\sqrt{\log d_{s}-S(\rho_{s})}},
    \end{align}
    where the summation $\sum_{A_k}$ sums over the supports of $\sU_p(\delta t)^{\dagger k}E_j^\dagger E_{j'}\sU_p(\delta t)^k$s.
\end{proof}

\clearpage
\end{document}